\Crefname{corollary}{Corollary}{Corollaries}
\numberwithin{equation}{section}
\newtheorem{theorem}{Theorem}[section]
\newtheorem{lemma}[theorem]{Lemma}
\newtheorem{corollary}[theorem]{Corollary}
\newtheorem{question}[theorem]{Question}
\theoremstyle{definition}
\newtheorem{remark}[theorem]{Remark}
\newtheorem{proposition}[theorem]{Proposition}
\newtheorem{definition}[theorem]{Definition}
\renewcommand{\leq}{\leqslant}
\renewcommand{\geq}{\geqslant}
\renewcommand{\ge}{\geqslant}
\newcommand{\CSP}{\mathrm{CSP}}
\newcommand{\Pol}{\mathrm{Pol}}
\newcommand{\val}{\mathrm{val}}
\newcommand{\supp}{\mathrm{supp}}
\newcommand{\orig}{\mathrm{orig}}
\newcommand{\const}{\mathrm{const}}
\newcommand{\aux}{\mathrm{aux}}
\newcommand{\TLin}{\textsc{3Sum}}
\newcommand{\Const}{\textsc{Const}}
\newcommand{\End}{\textit{End}}
\newcommand{\Perm}{\textit{Perm}}
\newcommand{\BD}{\mathrm{BD}}
\newcommand{\yes}{\mathrm{yes}}
\newcommand{\no}{\mathrm{no}}
\newcommand{\Dno}{\mathcal{D}_{\mathrm{no}}}
\newcommand{\Dyes}{\mathcal{D}_{\mathrm{yes}}}
\newcommand{\Dpair}{\mathcal{D}_{\mathrm{pair}}}
\newcommand{\Mcsp}[3]{\mathsf{MaxCSP}(#1)[#2, #3]}
\newcommand{\Sub}{\mathrm{Sub}}
\newcommand{\Res}{\mathrm{Res}}
\newcommand{\clo}[1]{\langle #1 \rangle^{*}}
\newcommand{\cA}{\ensuremath{\mathcal{A}}}
\newcommand{\cC}{\ensuremath{\mathcal{C}}}
\newcommand{\cE}{\ensuremath{\mathcal{E}}}
\newcommand{\cF}{\ensuremath{\mathcal{F}}}
\newcommand{\cH}{\ensuremath{\mathcal{H}}}
\newcommand{\cI}{\ensuremath{\mathcal{I}}}
\newcommand{\cM}{\ensuremath{\mathcal{M}}}
\newcommand{\cQ}{\ensuremath{\mathcal{Q}}}
\newcommand{\cS}{\ensuremath{\mathcal{S}}}
\newcommand{\cT}{\ensuremath{\mathcal{T}}}
\newcommand{\cV}{\ensuremath{\mathcal{V}}}
\newcommand{\bA}{\ensuremath{\mathbb{A}}}
\newcommand{\bE}{\ensuremath{\mathbb{E}}}
\newcommand{\bP}{\ensuremath{\mathbb{P}}}
\newcommand{\bZ}{\ensuremath{\mathbb{Z}}}
\newcommand{\bfv}{\ensuremath{\mathbf{v}}}
\newcommand{\Ex}[1]{\bE \left[ #1 \right]}
\newcommand{\Exu}[2]{\underset{#1} \bE \left[ #2 \right] }
\renewcommand{\Pr}[1]{\bP \left[ #1 \right]} 
\newcommand{\Pru}[2]{\underset{ #1 }\bP \left[ #2 \right]}
\title{Unbounded-width CSPs are Untestable in a Sublinear Number of Queries}
\author{Yumou Fei\thanks{Department of EECS, Massachusetts Institute of Technology.}}
\date{\vspace{-5ex}}
\begin{document}

\maketitle 

\begin{abstract}
The bounded-degree query model, introduced by Goldreich and Ron (\textit{Algorithmica, 2002}), is a standard framework in graph property testing and sublinear-time algorithms. Many properties studied in this model, such as bipartiteness and 3-colorability of graphs, can be expressed as satisfiability of constraint satisfaction problems (CSPs). We prove that for the entire class of \emph{unbounded-width} CSPs, testing satisfiability requires $\Omega(n)$ queries in the bounded-degree model. This result unifies and generalizes several previous lower bounds. In particular, it applies to all CSPs that are known to be $\mathbf{NP}$-hard to solve, including $k$-colorability of $\ell$-uniform hypergraphs for any $k,\ell \ge 2$ with $(k,\ell) \neq (2,2)$. 

Our proof combines the techniques from Bogdanov, Obata, and Trevisan (\textit{FOCS, 2002}), who established the first $\Omega(n)$ query lower bound for CSP testing in the bounded-degree model, with known results from universal algebra.
\end{abstract}

\newpage 
\tableofcontents

\newpage 
\section{Introduction}

Property testing is a class of algorithmic problems in which the algorithm is allowed to inspect only a small portion of the input. In exchange for this limited access, the goal is relaxed: instead of deciding whether the input exactly satisfies a given property, the tester must merely distinguish inputs that satisfy the property from those that are \emph{far} from satisfying it, under an appropriate notion of distance. 

In this paper, we consider the problem of testing satisfiability of constraint satisfaction problems (CSPs). We formally define CSPs as follows. First of all, every CSP considered in this paper is specified by a finite \emph{template}.

\begin{definition}\label{def:relational_structure}
A \emph{relational structure} is a pair $(D,\Gamma)$ where $D$ is a finite set, and $\Gamma$ is a finite collection of relations on $D$. Each relation $R \in \Gamma$ is a function $R : D^k \to \{0,1\}$ for some positive integer $k$ (called the \emph{arity} of $R$). A fixed relational structure $(D,\Gamma)$ is also referred to as a \emph{CSP template}.
\end{definition}

As a running example, if we let $D=\{0,1\}$ and take $\Gamma=\{\mathds{1}_{\{10,01\}}\}$, where $\mathds{1}_{\{10,01\}}(x,y)=1$ if and only if $x\neq y$, then the template $(D,\Gamma)$ could express the bipartiteness property of graphs, as we will soon explain. 

Given a CSP template $(D,\Gamma)$, we can collect all \emph{instances} of the CSP into an infinite set $\CSP(\Gamma)$.

\begin{definition}\label{def:CSP_instance}
Given a CSP template $(D,\Gamma)$, we define $\CSP(\Gamma)$ as the (infinite) collection of all \emph{instances} $I=(V,\cC)$, where $V$ is a finite set of \emph{variables} and $\cC=\{C_{1},\dots,C_{m}\}$ is a multi-set of \emph{constraints} satisfying the following. Each constraint $C_i$ is a pair $(\bfv, R)$, where $\bfv = (v_1, \dots, v_k)$ is a tuple of distinct\footnote{Some authors allow repetition of variables in a constraint. In our setting, one can also ``artificially'' allow repetitions by adding more relations to the relational structure. See \Cref{subsec:relational_structures} for details.} variables from $V$, and $R : D^k \to \{0,1\}$ is a relation from $\Gamma$. The variable set $\{v_{1},\dots,v_{k}\}\subseteq V$ is called the \emph{scope} of $C_{i}$. 
\end{definition}

Now an instance in $\CSP(\{\mathds{1}_{\{10,01\}}\})$ can be viewed as a graph, with variables as vertices and constraints as edges. The graph is bipartite if and only if the CSP instance is ``satisfiable,'' as defined in the following definition.

\begin{definition}\label{def:CSP_value}
An \emph{assignment} for an instance $I=(V,\cC)\in\CSP(\Gamma)$ is a map $\tau:V\rightarrow D$. We say $\tau$ \emph{satisfies} a constraint $((v_{1},\dots,v_{k}),R)$ if $R\big(\tau(v_{1}),\dots,\tau(v_{k})\big)=1$. The \emph{value} of an assignment $\tau$, denoted by $\val_{I}(\tau)$, is the fraction of constraints in $\cC$ satisfied by $\tau$. Note that $\val_{I}(\tau)\in [0,1]$. If $\val_{I}(\tau)=1$, we say that $\tau$ \emph{satisfies} the instance $I$, or $\tau$ is a \emph{satisfying assignment} of $I$. The \emph{value} of the instance $I$, denoted by $\val_{I}$, is the maximum possible value any assignment may achieve. 
\end{definition}

The CSP defined by the bipartiteness template $\big(\{0,1\},\{\mathds{1}_{\{10,01\}}\}\big)$ will be denoted by $\mathsf{2COL}$. Similarly, the property of vertex 3-colorability in graphs can be expressed as a constraint satisfaction problem denoted by $\mathsf{3COL}$.

\subsection{Testing CSP Satisfiability}

We study the property testing of CSP satisfiability in the \emph{bounded-degree query model}, defined in \Cref{def:BD_model} below. In this model, the input CSP instance has a fixed variable set, and the degree of each variable (i.e. the number of constraints involving the variable) in the instance is assumed to be uniformly bounded by a constant.

\begin{definition}
Let $(D,\Gamma)$ be a CSP template and let $V$ be a set of variables. We write $\CSP(\Gamma,V)\subseteq \CSP(\Gamma)$ for the collection of all $\CSP(\Gamma)$-instances whose variable set is $V$. 
\end{definition}
\begin{definition}\label{def:BD_model}
Fix a CSP template $(D,\Gamma)$. In the $\BD(d,n)$ model for testing satisfiability, the tester is given oracle access to a CSP instance $I=([n],\cC)\in \CSP(\Gamma,[n])$ with maximum degree at most $d$. The goal of an $\varepsilon$-tester, where $\varepsilon>0$ is a fixed constant, is to achieve the following:
\begin{enumerate}[label=(\arabic*)]
\item If $I$ is satisfiable, the tester must accept $I$ with probability at least $2/3$.
\item If at least $\varepsilon dn$ constraints must be removed from $\cC$ to make $I$ satisfiable, the tester must reject $I$ with probability at least $2/3$. 
\end{enumerate}
The tester accesses the instance $I$ through \emph{oracle queries}: when the tester queries a variable $v\in [n]$, the oracle reveals an arbitrary one of the previously unseen constraints involving $v$, unless all constraints involving $v$ have been revealed (in which case the oracle returns $\perp$).
\end{definition}

The bounded-degree query model was first introduced by \cite{goldreich2002property}, in the context of graph property testing. One of the earliest results in this model is that testing bipartiteness (i.e. testing satisfiability of $\mathsf{2COL}$ instances) requires at least $\Omega(\sqrt{n})$ queries~\cite{goldreich2002property} (for some degree bound $d$ and error $\varepsilon>0$) and can be done with at most $\widetilde{O}(\sqrt{n})$ queries~\cite{goldreich1999sublinear} (for any degree bound $d$ and any error $\varepsilon>0$). In contrast, Bogdanov, Obata, and Trevisan~\cite{bogdanov2002lower} showed that testing vertex 3-colorability requires $\Omega(n)$ queries.\footnote{These seminal works, along with a large number of subsequent studies (such as~\cite{goldreich2011testing,benjaminia2010every,yoshida2010query,yoshida2012property,czumaj2019planar}, to name a few), also investigated a wide range of other graph properties within the bounded-degree model, many of which are not constraint satisfaction problems.} 


The study of testing CSP satisfiability in the bounded-degree model was already formally initiated in the early work of Bogdanov, Obata, and Trevisan~\cite{bogdanov2002lower}, who showed that, beyond the specific case of $\mathsf{3COL}$, several other CSPs (such as $\mathsf{3SAT}$ and $\mathsf{3LIN}$) require $\Omega(n)$ queries to test in this model. A more systematic investigation was later carried out by Yoshida~\cite{yoshida2011optimal}, who proved that any CSP \emph{robustly solved by BasicLP} (the basic linear program) has a satisfiability tester that makes a constant number of queries. Typical examples of such CSPs include $\mathsf{1SAT}$ and $\mathsf{3HornSAT}$. For the satisfiability testing of any CSP that is not \emph{solved by BasicLP},~\cite{yoshida2011optimal} further showed that at least $\Omega(\sqrt{n})$ queries are necessary. Examples of such CSPs include $\mathsf{2SAT}$ and $\mathsf{2COL}$.\footnote{Robust solvability by BasicLP is a stronger condition than solvability by BasicLP, so \cite{yoshida2011optimal} left open the possibility of CSPs with superconstant but $o(\sqrt{n})$ query complexity. However, a recent unpublished note by Brady~\cite{brady2022notes} proves that all CSPs solvable by BasicLP are also robustly solvable by it, thereby turning the result of \cite{yoshida2011optimal} into a full dichotomy theorem.}

The main result of this paper strengthens this picture by proving that, among these BasicLP-unsolvable CSPs, the subclass of \emph{unbounded-width} CSPs requires $\Omega(n)$ queries to test.

\begin{theorem}\label{thm:main}
For any CSP template $(D,\Gamma)$ of unbounded width, there exist constants $\varepsilon\in (0,1)$ and $d\in \mathbb{N}$ such that any $\varepsilon$-tester for satisfiability of $\CSP(\Gamma)$ instances in the $\BD(d,n)$ model must make $\Omega_{d,\varepsilon}(n)$ queries.
\end{theorem}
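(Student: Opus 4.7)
The plan is to combine the BOT lower bound technique for linear CSPs with a universal-algebraic reduction from an arbitrary unbounded-width template to a linear one. The key universal algebra input is the Larose--Zádori / Barto--Kozik characterization of bounded width: a template $(D,\Gamma)$ has bounded width if and only if its polymorphism clone $\Pol(\Gamma)$ satisfies certain Taylor-type identities, equivalently, if and only if $(D,\Gamma)$ does not \emph{pp-construct} any non-trivial affine structure. Contrapositively, if $(D,\Gamma)$ has unbounded width, there exists a prime $p$ such that $(D,\Gamma)$ pp-constructs the affine template $\cA_p := (\Zp,\{A_c\}_{c \in \Zp})$, where $A_c = \{(x,y,z) \in \Zp^{3} : x+y+z \equiv c \pmod p\}$.

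Given this input, I would first establish a direct BOT-style $\Omega(n)$ lower bound for testing satisfiability of $\CSP(\cA_p)$ in the bounded-degree model. The argument compares two distributions on $n$-variable instances with $m=dn/3$ constraints: the \emph{planted} distribution $\Dyes$, obtained by sampling a uniform $\tau:[n]\to\Zp$ and then $m$ independent uniform constraints satisfied by $\tau$; and $\Dno$, which samples $m$ uniformly random constraints. A union bound over the $p^{n}$ possible assignments shows that a $\Dno$-instance is $\varepsilon$-far from satisfiable with high probability (for $d$ a sufficiently large constant), while $\Dyes$-instances are satisfiable by construction. Adapting the BOT analysis of $\mathsf{3LIN}$ over $\bF_{2}$, I would show that every constant-radius local view around a uniformly chosen variable has statistically close distributions under the two ensembles, so no $o(n)$-query tester can distinguish them.

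The third step is to pull this lower bound back through the pp-construction. Unrolling the pp-definitions produces, for each variable of an $\cA_{p}$-instance $\cI$, a constant-size block of $\Gamma$-variables together with a fixed gadget enforcing the pp-definable equivalence, and for each $\cA_p$-constraint of $\cI$, a constant-size gadget of $\Gamma$-constraints. The resulting instance $\cI' \in \CSP(\Gamma,[O(n)])$ has bounded degree, is satisfiable iff $\cI$ is, and (provided the gadgets are chosen with care) preserves the far-from-satisfiable gap up to a constant factor. Each oracle query to $\cI'$ in $\BD(d',O(n))$ can be simulated by $O(1)$ queries to $\cI$ in $\BD(d,n)$, so the $\Omega(n)$ lower bound propagates.

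I expect the main obstacle to be Step~3: turning an abstract pp-construction into a reduction that simultaneously preserves bounded degree, local view distributions, and distance to satisfiability. PP-constructions typically involve quotienting by a pp-definable equivalence on a power $D^{k}$, and enforcing this equivalence across a block may naively blow up the degree or leak global information visible to the tester. A careful implementation needs a constant-sized block gadget that enforces the equivalence locally, together with a constant-degree rule for attaching constraint-gadgets to blocks, so that pushing either $\Dyes$ or $\Dno$ through the reduction produces bounded-degree $\CSP(\Gamma)$-instances whose local views remain indistinguishable. A secondary subtlety is that pp-constructions only preserve satisfiability a priori, not Hamming distance to satisfiability; one must verify explicitly that any near-satisfying $\Gamma$-assignment of $\cI'$ descends, via the gadgets, to a near-satisfying $\Zp$-assignment of $\cI$, so that the $\varepsilon$-far property transfers.
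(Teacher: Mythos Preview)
Your high-level architecture matches the paper's: invoke the Barto--Kozik characterization to extract an affine structure, establish a BOT-style $\Omega(n)$ lower bound for $\TLin$ over the relevant group, and pull it back through a gadget reduction. Steps~1 and~2 are essentially what the paper does (it works over a general finite Abelian group $\mathsf{G}$ rather than $\Zp$, but this is immaterial). The genuine gap is in Step~3, and your own diagnosis points in the wrong direction. You anticipate a ``constant-sized block gadget that enforces the equivalence locally''; the paper's solution is structurally different and is the main technical content. First, the paper sidesteps the pp-power entirely: rather than implement a general pp-construction on $D^{k}$, it augments $\Gamma$ with all unary constants $\Const_{D}$, which forces $\Pol(\Gamma\cup\Const_{D})$ to be idempotent; Valeriote's theorem then yields a subalgebra $D'\subseteq D$ and a surjection $\varphi:D'\to\mathsf{G}$ with no power needed. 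The pulled-back ternary relations $R_{b}$ are then honestly pp-definable from $\Gamma\cup\Const_{D}$, so each $\TLin$-constraint becomes one constant-size $\CSP(\Gamma\cup\Const_{D})$ gadget.

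The price of this move is that the gadgets now use constant relations, which $\Gamma$ alone need not generate; this is where your local-gadget intuition fails. The paper's fix is a \emph{global} device: pass to a core (so every relational endomorphism is a bijection), observe that the $|D|$-ary relation $\End_{\Gamma}\in\clo{\Gamma}$ is ``sub-unique'' (its satisfying tuples are permutations of $D$), and lay copies of $\End_{\Gamma}$ along a random $\ell$-regular $|D|$-partite hypergraph $H$ over the entire pool of constant variables. The expansion of $H$ forces any near-satisfying assignment to agree on almost all constant variables with a single automorphism $\psi$; composing with $\psi^{-1}$ (available because we are in a core) reduces soundness to the $\TLin$ instance. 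This global rigidity is precisely what transfers $\varepsilon$-farness --- a per-block local gadget would not give it. Your indistinguishability claim does survive, but only because the expander $H$ and all gadget wiring are fixed deterministically independent of the $\TLin$ instance, so each query to $\mathfrak{T}[I]$ translates to at most one query to $I$; you should make this explicit rather than appealing to generic pp-construction locality.
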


The notion of bounded width (formally defined in \Cref{sec:bounded-width}) originates from~\cite{feder1998computational} and has been extensively studied in the universal-algebraic approach to CSPs. Bounded-width CSPs can be solved in polynomial time (see \Cref{prop:bounded-width_in_P}), so the class of unbounded-width CSPs includes all CSPs whose satisfiability is $\textbf{NP}$-hard to decide (unless $\textbf{P}=\textbf{NP}$, in which case all CSPs are trivially $\textbf{NP}$-hard). Consequently, \Cref{thm:main} implies that all CSPs known to be $\textbf{NP}$-hard are also unconditionally (and maximally) hard to test in the bounded-degree model. This encompasses, for example, the property of $k$-colorability in $\ell$-uniform hypergraphs for all $k,\ell \ge 2$ with $(k,\ell) \neq (2,2)$ (see \Cref{rem:colorability_unbounded_width}). Our result thus unifies and extends several prior lower bounds from works such as~\cite{bogdanov2002lower,yoshida2010query,aaronson2025property}.\footnote{The recent work of \cite{aaronson2025property} proves that testing $k$-colorability in $k$-uniform hypergraphs requires $\Omega(n)$ queries, for any $k\geq 3$.}

\subsection{Discussion: Testing vs. Approximation}

A property tester can be viewed as an algorithm that approximates the distance of an input to satisfying a property --- albeit in a very coarse sense, as it only distinguishes zero distance from sufficiently large distance. Nevertheless, many property testers (for example, the constant-query tester of~\cite{yoshida2011optimal} for certain CSPs) operate by explicitly estimating this distance and checking whether it is close to zero. On the negative side, a hardness result for property testing (such as our \Cref{thm:main}) can often be interpreted as a hardness of approximation result. This connection to approximation algorithms was, in fact, one of the original motivations in~\cite{goldreich1998property}, where graph property testing was first introduced. 

In this subsection, we examine and discuss our main result \Cref{thm:main} from the perspective of sublinear-time approximation algorithms.

\subsubsection{MaxCSP Problems}

A natural approximation problem associated with CSPs is that of estimating the value of a CSP instance (as defined in \Cref{def:CSP_value}).\footnote{Ideally, an approximation algorithm would also produce an assignment achieving this value, i.e. one that satisfies as many constraints as possible. However, since we focus on sublinear-time algorithms in this paper, we cannot expect them to explicitly output a complete assignment.} This estimation task can be conveniently formulated in the following ``gap version.''

\begin{definition}
For a fixed CSP template $(D,\Gamma)$, a completeness parameter $c\in (0,1]$ and a soundness parameter $s\in [0,c)$, the problem $\Mcsp{\Gamma}{c}{s}$ is the promise problem where given an instance $I\in\CSP(\Gamma)$, 
the (randomized) algorithm should distinguish between the following two cases:
\begin{enumerate}[label = (\arabic*)]
    \item Yes case: if $\val_{I}\geq c$, then 
    the algorithm should accept with probability at least $2/3$.
    \item No case: if $\val_{I}\leq s$, then the algorithm should reject with probability at least $2/3$.   
\end{enumerate}
\end{definition}

The complexity of MaxCSP problems has been extensively studied in the context of polynomial-time algorithms, and a central question is to determine for which tuples $(\Gamma, c, s)$ the problem $\Mcsp{\Gamma}{c}{s}$ can be solved in polynomial time. Notably, Raghavendra \cite{raghavendra2008optimal} almost\footnote{Except for the perfect-completeness case (i.e. the case $c=1$), which remains a mystery.} characterizes all such tuples, conditioned on the Unique Games Conjecture of \cite{khot2002power}.

\subsubsection{MaxCSP vs. Satisfiability Testing}\label{subsubsec:MaxCSP_vs_testing}

Whether property testing of CSP satisfiability has implications for MaxCSP problems depends on the notion of “distance to satisfiability” used by the property tester. In the $d$-bounded-degree model, having distance $\varepsilon$ to satisfiability means that at least $\varepsilon dn$ constraints must be removed to make the instance satisfiable. In contrast, an instance with value $1-\varepsilon$ in the MaxCSP sense requires removing at least an $\varepsilon$ \emph{fraction} of its constraints. Thus, testing CSP satisfiability in the bounded-degree model is analogous to solving $\Mcsp{\Gamma}{1}{1-\varepsilon}$, provided the CSP instance has $\Theta(n)$ constraints.

However, a CSP instance in the $d$-bounded-degree model $\BD(d,n)$ may have far fewer than $\Theta(n)$ constraints. In fact, if an instance contains only $O(1)$ constraints, it would be unreasonable to expect any sublinear-time algorithm to approximate its value, since with high probability the queried variables would not appear in any constraint. To address this, we introduce a modified model $\BD^*(d,\alpha,n)$, where $\alpha > 0$ is a fixed constant ensuring that all CSP instances considered contain at least $\alpha n$ constraints.

Under this refinement, \Cref{thm:main} can be equivalently stated as follows:

\begin{theorem}\label{thm:main_rephrase}
For any CSP template $(D,\Gamma)$ of unbounded width, there exist constants $\varepsilon,\alpha\in (0,1)$ and $d\in \mathbb{N}$ such that any algorithm for $\Mcsp{\Gamma}{1}{1-\varepsilon}$ in the $\BD^*(d,\alpha,n)$ model must make $\Omega(n)$ queries.
\end{theorem}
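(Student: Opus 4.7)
The plan is to derive Theorem~\ref{thm:main_rephrase} as a direct translation of Theorem~\ref{thm:main}: the two statements differ only in how ``distance to satisfiability'' is measured. The key arithmetic identity is that, for any instance $I$ with $m$ constraints, the minimum number of constraints whose removal makes $I$ satisfiable is exactly $m(1 - \val_I)$. Hence ``at least $\varepsilon d n$ constraints must be removed'' (the NO condition in the $\BD(d,n)$ model) is equivalent to $\val_I \le 1 - \varepsilon d n / m$. Under a density promise $\alpha n \le m \le d n$, this value-gap lies in the interval $[\varepsilon, \varepsilon d / \alpha]$, so it implies the MaxCSP soundness condition $\val_I \le 1 - \varepsilon$.

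Concretely, I would reuse the hard YES/NO distributions $\Dyes, \Dno$ constructed in the proof of Theorem~\ref{thm:main}, verifying along the way that every instance in their supports has $m = \Theta(n)$ constraints. For $\Dno$ this is automatic, since requiring $\ge \varepsilon d n$ removals forces $m \ge \varepsilon d n$. For $\Dyes$, the natural random constructions underlying the proof (in the style of~\cite{bogdanov2002lower}) yield density-$\Theta(n)$ instances with high probability, so conditioning on this event preserves the statistical distance between the two distributions up to vanishing total-variation terms. Choosing $\alpha > 0$ so that every instance in the resulting pair of distributions satisfies $m \ge \alpha n$, the translation above shows that $\Dyes$ and $\Dno$ witness the $\Omega(n)$ query lower bound for $\Mcsp{\Gamma}{1}{1-\varepsilon}$ in the $\BD^*(d,\alpha, n)$ model via Yao's minimax principle.

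The main subtlety, and the step where I expect most care is needed, is verifying the $\Theta(n)$-density of the hard YES distribution: an adversarial hard distribution could in principle place mass on extremely sparse satisfiable instances. In practice this is not an issue for the explicit constructions used to prove Theorem~\ref{thm:main}, where both sides of the distinguishing distribution are random bounded-degree structures of comparable density. In the worst case, sparse satisfiable instances may also be padded with trivially-satisfied copies of existing constraints on otherwise-unused variables, increasing the density to $\alpha n$ without affecting satisfiability or, up to negligible quantities, the tester's view.
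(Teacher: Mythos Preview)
Your proposal reverses the logical direction taken by the paper. In the paper, Theorem~\ref{thm:main_rephrase} is the \emph{primary} technical result, proved directly via the explicit reduction construction in Section~5 (building the map $\mathfrak{T}$ from $\CSP(\TLin_{\mathsf{G}})$ instances to $\CSP(\Gamma)$ instances, and then verifying completeness, soundness, and indistinguishability in \Cref{lem:completeness,lem:soundness,lem:indistinguishability}). Theorem~\ref{thm:main} is stated as an immediate consequence of Theorem~\ref{thm:main_rephrase}, not the other way around; the paper says so explicitly just after stating Theorem~\ref{thm:main_rephrase}. Hence there is no independent proof of Theorem~\ref{thm:main} whose hard distributions you could ``reuse'' --- those distributions are precisely the ones constructed in order to prove Theorem~\ref{thm:main_rephrase}. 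As written, your argument is circular within the paper's structure.

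That said, your underlying arithmetic is correct and matches the paper's reasoning for why the two formulations are interchangeable: the constructed instances $\mathfrak{T}[I]$ have $\Theta(n)$ constraints by design (the paper computes explicit bounds, at least $(d+\ell r_{1})n$ and at most $(d+\ell r_{2})r_{1}n$ constraints, with variable degree at most $dr_{1}+\ell r_{2}$), which is exactly what makes the ``$\varepsilon d n$ removals'' condition and the ``$\val_{I}\le 1-\varepsilon$'' condition coincide up to a constant in $\varepsilon$. What your proposal is missing is not a technical lemma but the recognition that the substance of the proof --- the gadget simulation via \Cref{lem:main_algebra}, the expander gadget of \Cref{lem:expander_gadget} for handling constant relations, and the reduction to the core case --- lives in the proof of Theorem~\ref{thm:main_rephrase} itself, not in some prior proof of Theorem~\ref{thm:main} that you can lean on.
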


It is easy to see that \Cref{thm:main_rephrase} implies \Cref{thm:main}. Due to the convenience provided by this rephrased version, in the rest of the paper we will work with \Cref{thm:main_rephrase} instead of \Cref{thm:main}.

\subsubsection{Sublinear Time vs. Polynomial Time}\label{subsubsec:polytime}

As discussed in \Cref{subsubsec:MaxCSP_vs_testing}, because of the specific notion of “distance to satisfiability” used in the bounded-degree model, satisfiability testing in this setting most naturally corresponds to MaxCSP problems on instances with $\Theta(n)$ constraints. In contrast, the \emph{dense graph model} (the more extensively studied model in graph property testing) uses another notion of distance, making it better aligned with MaxCSP problems on ``dense'' instances. 

We argue that exactly due to this difference in instance regimes, the bounded-degree model provides a stronger connection to the study of polynomial-time algorithms for MaxCSPs, at least with respect to worst-case approximation ratios. First, MaxCSP problems on general instances can often be reduced to bounded-degree instances (see, e.g.,~\cite{trevisan2001non}), so bounded-degree instances already capture much of the hardness in the problems. Second, MaxCSP problems tend to be easier on dense instances (see, e.g.,~\cite{arora1995polynomial}). Indeed,~\cite{alon2003testing} shows that every CSP admits a constant-query satisfiability tester in the dense graph model. In contrast, in the bounded-degree model, satisfiability testing already exhibits a wide spectrum of complexities across CSP templates, with query complexity ranging from constant, to $\widetilde{\Theta}(\sqrt{n})$, up to $\Theta(n)$. 

We next discuss some results about MaxCSP in the polynomial-time setting that are most relevant to this work. 

In the context of polynomial-time algorithms, for certain CSP templates \((D,\Gamma)\), the \emph{perfect-completeness} problem \(\Mcsp{\Gamma}{1}{1-\varepsilon}\) can exhibit a markedly different level of complexity from the (potentially harder) \emph{imperfect-completeness} problem \(\Mcsp{\Gamma}{1-\varepsilon'}{1-\varepsilon}\), even when \(\varepsilon' \to 0\) and \(\varepsilon\) is fixed. 
For example, in the case of \(\mathsf{3LIN}\) (an unbounded-width CSP), the former problem admits a polynomial-time algorithm via Gaussian elimination, whereas the latter is \(\mathbf{NP}\)-hard (due to \cite{haastad2001some}). 
More generally, Dalmau and Krokhin~\cite{dalmau2013robust} proved that the imperfect-completeness problem is \(\mathbf{NP}\)-hard for all unbounded-width CSP templates.

\begin{theorem}[\cite{dalmau2013robust}]\label{thm:DK13}
For any CSP template \((D,\Gamma)\) of unbounded width, there exists a constant \(\varepsilon \in (0,1)\) such that for every constant \(\varepsilon' \in (0,\varepsilon)\), the problem \(\Mcsp{\Gamma}{1-\varepsilon'}{1-\varepsilon}\) is \(\mathbf{NP}\)-hard.
\end{theorem}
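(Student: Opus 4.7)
The plan is to use the universal-algebraic approach to CSPs and reduce from a known $\mathbf{NP}$-hard gap problem in a gap-preserving way. First I would invoke the algebraic characterization of bounded width due to Larose--Z\'adori (and refined by Barto--Kozik): the template $(D,\Gamma)$ has unbounded width if and only if $\Pol(\Gamma)$ fails to contain weak near-unanimity operations of all arities $n\ge 3$, equivalently, the idempotent reduct of $(D,\Pol(\Gamma))$ generates a variety that contains a nontrivial ``type $\mathbf{1}$ or type $\mathbf{2}$'' algebra in the sense of tame congruence theory. The consequence I would exploit is that $(D,\Gamma)$ primitively positively constructs a template whose polymorphism clone is that of a module over a finite ring, so that linear equations over some finite abelian group $G$ can be encoded by $\Gamma$-constraints via finite pp-definable gadgets.

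Second, I would combine this pp-construction with H\aa stad's $(1-\delta,\,1/|G|+\delta)$-$\mathbf{NP}$-hardness of $\mathsf{MaxE3LIN}(G)$. The reduction replaces each linear equation by a constant-size gadget of $\Gamma$-constraints obtained from the pp-construction. Two properties would need to be verified: (i) every assignment satisfying the equation extends to a satisfying assignment of the gadget; (ii) every assignment violating the equation falsifies at least an $\eta$ fraction of the gadget's constraints, for some constant $\eta>0$ that depends only on the gadget. Averaging over equations then shows that a value-$(1-\delta)$ source instance maps to a $\Gamma$-instance of value at least $1-\varepsilon'$ with $\varepsilon'\to 0$ as $\delta\to 0$, whereas a value-$(1/|G|+\delta)$ source instance maps to a $\Gamma$-instance of value at most $1-\varepsilon$ for a fixed constant $\varepsilon>0$. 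Choosing $\delta$ small enough thus yields $\mathbf{NP}$-hardness of $\Mcsp{\Gamma}{1-\varepsilon'}{1-\varepsilon}$ for every $\varepsilon'\in(0,\varepsilon)$.

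The main obstacle is the construction and analysis of the gadget. It must simultaneously be pp-definable from $\Gamma$ (hence of bounded size), faithful under perfect satisfaction, and \emph{robust} in the sense of condition (ii) above. Ordinary CSP-dichotomy gadget constructions deliver only (i) and the weak version of (ii) needed to preserve exact satisfiability; upgrading them so that a single violated linear equation necessarily forces a constant fraction of local constraint violations is precisely where the deeper algebraic tools --- absorbing subuniverses and the structure theory of idempotent Taylor algebras --- must be invoked. I would also need to separately treat the case in which the natural pp-construction target is not an abelian module (type $\mathbf{2}$) but a two-element algebra with only essentially unary polymorphisms (type $\mathbf{1}$), reducing instead from H\aa stad's hardness of $\mathsf{MaxE3SAT}$ or $\mathsf{MaxNAE3SAT}$ by an analogous gadget construction.
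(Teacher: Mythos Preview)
Your high-level plan matches the approach described in the paper: reduce from H\aa stad's $(1-\delta,\,1/|G|+\delta)$ hardness for $\mathsf{3LIN}$ over a finite abelian group, using the tame-congruence-theoretic characterization of unbounded width to find the group. Where the proposal goes astray is in identifying the main obstacle.

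Condition~(ii) as you state it is essentially free. A pp-definition of the relation $R_b$ yields a gadget with a bounded number $r$ of constraints; if an assignment to the gadget satisfies all $r$ constraints, then by definition of simulation its restriction to the primary variables satisfies $R_b$. Contrapositively, if $R_b$ is violated then at least one of the $r$ constraints is violated, giving $\eta\ge 1/r$. No absorbing subuniverses or Taylor-algebra structure theory is needed for this step, and the Dalmau--Krokhin proof does not invoke them.

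The genuine obstacle, which your proposal does not address, is that the algebra only guarantees $R_b\in\clo{\Gamma\cup\Const_D}$, not $R_b\in\clo{\Gamma}$: the pp-definitions need constant relations, which may not be pp-definable from $\Gamma$ alone. The fix (due to Dalmau--Krokhin and mirrored in this paper) is to first pass to the core of $(D,\Gamma)$, observe that the Endomorphism relation $\End_\Gamma$ is pp-definable from $\Gamma$ and, on a core, is satisfied only by permutations of $D$, and then lay down many copies of $\End_\Gamma$ on a shared block of ``constant'' variables so that any near-satisfying assignment must agree with a fixed automorphism on almost all of them. Only then can the $R_b$-gadgets be glued to these constant variables and the soundness analysis go through. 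This global amplification step, not per-gadget robustness, is where the work lies.

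Two smaller points: the separate treatment of type~$\mathbf{1}$ via $\mathsf{MaxE3SAT}$ is unnecessary, since strictly simple idempotent algebras of unary type also have all operations in $\Pol(\TLin_{\mathsf{G}})$ for some abelian $\mathsf{G}$ (they are projection-only), so the $\mathsf{3LIN}$ reduction covers both types uniformly. And ``pp-construction'' in the Barto--Opr\v{s}al--Pinsker sense does not by itself give approximation-preserving reductions; you would still need the explicit gadget-level argument above.
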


In terms of conclusions, our result is not directly comparable with \Cref{thm:DK13}, for multiple reasons. First, \Cref{thm:DK13} is a hardness result for the imperfect completeness problem, while our \Cref{thm:main_rephrase} shows hardness for the (potentially easier) perfect completeness problem. Second, \Cref{thm:DK13} is proved against polynomial-time algorithms, while \Cref{thm:main_rephrase} is against sublinear-query algorithms. Although $\mathbf{NP}$-hardness is likely much stronger than a linear query lower bound, the latter does have the benefit of being unconditional. It is also not clear a priori whether some useful algorithms could have sublinear query complexity but super-polynomial time complexity, so $\mathbf{NP}$-hardness usually does not immediately imply query lower bounds in sublinear-time models, even when assuming $\mathbf{P}\neq\mathbf{NP}$.

In terms of techniques, however, the proof of \Cref{thm:main_rephrase} runs in direct parallel to that of \Cref{thm:DK13} in \cite{dalmau2013robust}. While \cite{dalmau2013robust} constructs a reduction from the $\mathbf{NP}$-hardness of approximating $\mathsf{3LIN}$ instances (established by \cite{haastad2001some}), our proof of \Cref{thm:main_rephrase} builds an analogous reduction from the linear-query lower bound for $\mathsf{3LIN}$ instances (proved in \cite{bogdanov2002lower}). Unlike in the $\mathbf{NP}$ setting, where a reduction need only run in polynomial time, our bounded-degree query setting requires addressing several additional subtleties. In particular, certain parts of the proof (for example, \Cref{subsec:expander_gadget}) rely on a careful combination of techniques from both \cite{bogdanov2002lower} and \cite{dalmau2013robust}.

\begin{remark}\label{rmk:tightness_of_DK13}
The result of \cite{dalmau2013robust} (\Cref{thm:DK13}) was later shown to be tight by Barto and Kozik~\cite{barto2016robustly}. Specifically, they proved that for every CSP template $(D,\Gamma)$ of bounded width and every constant $\varepsilon \in (0,1)$, there exists a constant $\varepsilon' \in (0,\varepsilon)$ such that the problem $\Mcsp{\Gamma}{1-\varepsilon'}{1-\varepsilon}$ belongs to $\mathbf{P}$. Together with the result of \cite{dalmau2013robust}, this establishes that bounded width exactly characterizes the class of \emph{robustly solvable} CSPs, as conjectured earlier by Guruswami and Zhou \cite{guruswami2012tight}.
\end{remark}

\subsubsection{Sublinear Time vs. Sublinear Space}\label{subsubsec:streaming}

The past decades have seen a growing body of work on MaxCSP problems in \emph{sublinear-space} algorithmic models (such as~\cite{kapralov2014streaming,kapralov2019optimal,chou2024sketching,chou2022linear,saxena2025streaming,fei2025dichotomy}, to name a few), particularly in the contexts of sketching and streaming algorithms. Recent studies~\cite{fei2025multi,fei2025dichotomy} reveal a phenomenological connection between the \emph{multi-pass streaming} model and the bounded-degree query model: the class of MaxCSP problems that are easy in the bounded-degree model (admitting algorithms with constantly many queries) almost\footnote{Except for the perfect-completeness case, which remains a mystery for both models.} coincides with those that are easy in the multi-pass streaming model (admitting algorithms with constantly many passes and polylogarithmic space). We view this connection as further motivation for studying MaxCSP and other approximation problems in the bounded-degree query model. 

In light of this connection, a natural question arises: does the hardness result of \Cref{thm:main_rephrase} also extend to the multi-pass streaming model?

\begin{question}\label{ques:streaming}
Is it true that for every CSP template $(D,\Gamma)$ of unbounded width, there exists a constant $\varepsilon > 0$ such that $\Mcsp{\Gamma}{1}{1-\varepsilon}$ requires $\Omega(n)$ space in the multi-pass streaming model?
\end{question}

We remark that, as discussed in \cite{fei2025dichotomy}, query lower bounds in sublinear time models are usually easier to establish than space lower bounds, and thus \Cref{thm:main_rephrase} can be seen as a preliminary step towards answering \Cref{ques:streaming}.

\subsection{Open Problems}

Perhaps the most obvious open problem left by this work is whether the result of \Cref{thm:main_rephrase} is tight, analogous to the case of \Cref{thm:DK13} (see \Cref{rmk:tightness_of_DK13}).

\begin{question}
Do sublinear-query algorithms exist for satisfiability testing of all bounded-width CSPs?
\end{question}

In their original work, Bogdanov, Obata and Trevisan~\cite{bogdanov2002lower} asked whether, given a CSP template $(D,\Gamma)$, one can determine the optimal hardness gap~$\varepsilon$ for which $\Mcsp{\Gamma}{1}{1-\varepsilon}$ requires $\Omega(n)$ queries. This question remains wide open (even for ``simple'' templates such as $\mathsf{3COL}$). Here, we pose the following more ambitious version.

\begin{question}\label{ques:ambitious}
Determine the query complexity of $\Mcsp{\Gamma}{c}{s}$ in the bounded-degree model, for all predicate families $\Gamma$ and parameters $c$ and~$s$.
\end{question}

As discussed in \Cref{subsubsec:streaming}, this question may also have strong relevance to sublinear-space computational models. Moreover, resolving \Cref{ques:ambitious} unconditionally (without assuming $\mathbf{P} \neq \mathbf{NP}$ or the Unique Games Conjecture) could possibly shed new light on the corresponding question for polynomial-time algorithms.

\section{Preliminaries}

As mentioned in \Cref{subsubsec:polytime}, the proof of \Cref{thm:main_rephrase} is by constructing a reduction from ``linear equation'' instances. As is the case in \cite{bogdanov2002lower}, the reduction is centered around the fact that the linear equations can be ``simulated'' by some constant-size gadgets built from constraints in the target template $(D,\Gamma)$. The following general definition captures such gadget constructions.

\begin{definition}\label{def:relational_clone}
Given a relational structure $(D,\Gamma)$ and a relation $R:D^{k}\rightarrow\{0,1\}$, we say that $R$ is \emph{generated} by $\Gamma$ if there exists an instance $I=(V,\cC)\in \CSP(\Gamma)$ such that the following holds:
\begin{enumerate}[label=(\arabic*)]
\item The variable set $V$ is $\{v_{1},\dots,v_{k},u_{1},\dots,u_{\ell}\}$ for some integer $\ell\geq 0$. 
\item For any tuple $(x_{1},\dots,x_{k})\in D^{k}$, we have $R(x_{1},\dots,x_{k})=1$ if and only if there exists a satisfying assignment $\tau:V\rightarrow D$ such that $\tau(v_{i})=x_{i}$ for each $i\in [k]$.
\end{enumerate}
The (possibly infinite) collection of all relations generated by $\Gamma$ is denoted by $\clo{\Gamma}$.
\end{definition}

\begin{remark}
In the setting of \Cref{def:relational_clone}, we will say that the instance $I$ ``simulates'' the relation $R$ on the variables $v_{1},\dots,v_{k}$. These variables will sometimes be referred to as ``primary variables'' later in this paper, while the other variables $u_{1},\dots, u_{\ell}$ in $I$ will be called ``secondary variables.''
\end{remark}

Over the past decades, the development of the powerful framework of universal algebra has enabled a systematic study of the power and limitations of gadget constructions for CSPs. In contrast to the case-by-case constructions used in earlier works such as~\cite{bogdanov2002lower,yoshida2010query,aaronson2025property}, we draw on established results in universal algebra to fully exploit the strength of gadget constructions. The universal-algebraic result (\Cref{thm:main_algebra}) used in the proof of \Cref{thm:main_rephrase} is highly nontrivial (though by now perhaps standard) and serves as the main workhorse that allows us to unify the existing $\Omega(n)$ query lower bounds. The remainder of this section is devoted to introducing the necessary background in universal algebra before formally stating the lemma in the next section.

\subsection{Relational Structures}\label{subsec:relational_structures}

We begin by introducing some basic terminology for studying relational structures.

Recall from \Cref{def:CSP_instance} that we do not allow repetition of variables in individual constraints of CSP instances. For example, if $\textsc{E3Sat}$ is the collection of relations $R:\{0,1\}^{3}\rightarrow\{0,1\}$ such that $|R^{-1}(1)|=7$, then $\CSP(\textsc{E3Sat})$ is the collection of all CNF formulas where each clause contains exactly 3 literals. In order to take into account CNF formulas with some clauses containing less than 3 literals (as is standard in the $\mathsf{3SAT}$ problem), we enlarge the relation set $\textsc{E3Sat}$ into $\textsc{3Sat}$ --- the collection of relations $R:\{0,1\}^{k}\rightarrow\{0,1\}$ such that $|R^{-1}(1)|=2^{k}-1$ and $k\in \{1,2,3\}$. Now $\CSP(\textsc{3Sat})$ expresses the standard version of the $\mathsf{3SAT}$ problem. 

Given a general relation set $\Gamma$, in order to allow variable repetition in $\CSP(\Gamma)$ instances, we can perform the following operation on $\Gamma$.

\begin{definition}
For any relation $R:D^{k}\rightarrow \{0,1\}$ on a finite domain $D$, we let $\overline{\{R\}}$ be the collection of relations $R':D^{\ell}\rightarrow \{0,1\}$ such that there exists a surjective map $\pi:[k]\rightarrow [\ell]$ and $R'(x_{1},\dots,x_{\ell})=R(x_{\pi(1)},\dots,x_{\pi(k)})$ for all $x_{1},\dots,x_{\ell}\in D$. For any relational structure $(D,\Gamma)$, we let $\overline{\Gamma}$ be the union $\bigcup_{R\in \Gamma}\overline{\{R\}}$. A relational structure $(D,\Gamma)$ is said to be \emph{repetition-closed}\footnote{This is not standard terminology.} if $\Gamma=\overline{\Gamma}$.
\end{definition}

By a standard reduction, in our main theorem (\Cref{thm:main_rephrase}) it suffices to consider relational structures that are repetition-closed. Formally, it will be proved in \Cref{sec:allow_repeition} that the following statement implies \Cref{thm:main_rephrase}.

\begin{theorem}\label{thm:main_repetition}
For any repetition-closed CSP template $(D,\Gamma)$ of unbounded width, there exist constants $\varepsilon,\alpha\in (0,1)$ and $d\in \mathbb{N}$ such that any algorithm for $\Mcsp{\Gamma}{1}{1-\varepsilon}$ in the $\BD^*(d,\alpha,n)$ model must make $\Omega(n)$ queries.
\end{theorem}

We next turn to study maps between relational structures.

\begin{definition}\label{def:relational_homomorphism}
Let $(D_{1},\Gamma_{1})$ and $(D_{2},\Gamma_{2})$ be relational structures.  
Suppose there exists a bijection between $\Gamma_{1}$ and $\Gamma_{2}$ such that paired relations have the same arity.  
A map $\varphi : D_{1} \to D_{2}$ is called a \emph{relational homomorphism} if for every $R_{1} \in \Gamma_{1}$ and $R_{2} \in \Gamma_{2}$ paired under this bijection with arity $k$, and for all $x_{1},\dots,x_{k}\in D_{1}$,
\[
R_{1}(x_{1},\dots,x_{k}) = 1 
\quad\implies\quad 
R_{2}\big(\varphi(x_{1}),\dots,\varphi(x_{k})\big) = 1.
\]
\end{definition}

Throughout the rest of the paper, we consider maps between relational structures only when the relations of the source and target structures are bijectively matched, and corresponding relations have the same arity. Such pairs of structures are said to share the same \emph{signature}.

\begin{definition}\label{def:core}
Let $(D,\Gamma)$ be a relational structure. A map $\varphi:D\rightarrow D$ is called a \emph{relational endomorphism} if it is a relational homomorphism from $(D,\Gamma)$ to itself (under the obvious identification of relations). A relational endomorphism is called a \emph{relational automorphism} if it is bijective and its inverse is also a relational endomorphism.\footnote{The second condition is actually unnecessary because we only deal with finite structures. In the finite setting, the inverse of a bijective relational endomorphism is automatically an endomorphism.} A relational structure is called a \emph{core} if all of its relational endomorphisms are relational automorphisms. 
\end{definition}

The next task is to show that every relational structure can be ``reduced'' to a core. 

\begin{definition}
Two relational structures $(D_{1},\Gamma_{1})$ and $(D_{2},\Gamma_{2})$ are said to be \emph{homomorphically equivalent} if there exist two relational homomorphisms $\varphi_{1}:D_{1}\rightarrow D_{2}$ and $\varphi_{2}:D_{2}\rightarrow D_{1}$ (under the same identification between $\Gamma_{1}$ and $\Gamma_{2}$). 
\end{definition}

\begin{proposition}\label{prop:core_existence}
Every relational structure $(D,\Gamma)$ is homomorphically equivalent to a core. If $(D,\Gamma)$ is repetition-closed, then $(D,\Gamma)$ is homomorphically equivalent to a repetition-closed core.
\end{proposition}

\begin{proof}
Let $D_{1}\subseteq D$ be a smallest subset for which there exists a relational homomorphism $\varphi$ from $(D,\Gamma)$ to $(D_{1},\Gamma_{1})$, where $\Gamma_{1}$ is obtained by restricting the relations in $\Gamma$ to $D_{1}$. Since the natural embedding $\iota:D_{1}\hookrightarrow D$ is itself a relational homomorphism, we conclude that $(D,\Gamma)$ and $(D_{1},\Gamma_{1})$ are homomorphically equivalent. 

We claim that $(D_{1},\Gamma_{1})$ is a core. Otherwise, there exists a relational endomorphism $\varphi_{1}$ on $(D_{1},\Gamma_{1})$ that is not bijective. In particular, its image is a proper subset $D_{2}\subsetneq D_{1}$. But then the composition $\varphi_{1}\circ\varphi$ defines a homomorphism from $D$ to $D_{2}$, contradicting the minimality of $D_{1}$. Hence $(D_{1},\Gamma_{1})$ must be a core.

Now assume $(D,\Gamma)$ is repetition closed. Any relation in $\Gamma_{1}$ of arity $k$ is the restriction of some relation $R:D^{k}\rightarrow\{0,1\}$ to $D_{1}$. Therefore, for any surjection $\pi:[k]\rightarrow[\ell]$, there exists $R'\in \Gamma$ such that $R'(x_{1},\dots,x_{\ell})=R(x_{\pi(1)},\dots,x_{\pi(k)})$ for all $x_{1},\dots,x_{\ell}\in D$, and in particular $R'_{|D_{1}}(x_{1},\dots,x_{\ell})=R_{|D_{1}}(x_{\pi(1)},\dots,x_{\pi(k)})$ for all $x_{1},\dots,x_{\ell}\in D_{1}$. As $R'_{|D_{1}}\in \Gamma_{1}$, we conclude that $(D_{1},\Gamma_{1})$ is also repetition closed.
\end{proof}

\subsection{Algebraic Structures}

Universal algebra enters the picture through the notion of \emph{algebraic structures}.

\begin{definition}
An \emph{algebraic structure} is a pair $(D,\cF)$ where $D$ is a finite set and $\cF$ is a (possibly infinite) collection of operations on $D$. Each operation $f \in \cF$ is a function $f : D^k \to D$ for some positive integer $k$ (the \emph{arity} of $f$).
\end{definition}

Although at first glance algebraic structures may seem to run parallel to relational structures, the two are in fact deeply interconnected through the following key definitions.

\begin{definition}\label{def:preserve_relations}
Let $f : D^k \to D$ be an operation, and let $R : D^{\ell} \to \{0,1\}$ be a relation on $D$. We say that $f$ \emph{preserves} $R$ if for every matrix $(x_{i,j}) \in D^{\ell \times k}$,
\[
R(x_{1,j}, \dots, x_{\ell,j}) = 1 \text{ for all } j \in [k]
\]
implies
\[
R\big(f(x_{1,1}, \dots, x_{1,k}), \dots, f(x_{\ell,1}, \dots, x_{\ell,k})\big) = 1.
\]
\end{definition}

\begin{definition}\label{def:polymorphisms}
An operation $f$ is a \emph{polymorphism} of a relational structure $(D,\Gamma)$ if it preserves every relation $R \in \Gamma$. The (infinite) set of all polymorphisms of $(D,\Gamma)$ is denoted by $\Pol(\Gamma)$.
\end{definition}

Note that a unary polymorphism is exactly a relational endomorphism, as defined in \Cref{def:relational_homomorphism}.

The following definition gives algebraic structures their own homomorphism maps. It is important not to confuse algebraic homomorphisms with relational homomorphisms.

\begin{definition}\label{def:algebraic_homomorphism}
Let $(D_{1},\cF_{1})$ and $(D_{2},\cF_{2})$ be algebraic structures.  
Suppose there exists a bijection between $\cF_{1}$ and $\cF_{2}$ such that paired functions have the same arity.  
A map $\varphi : D_{1} \to D_{2}$ is called an \emph{algebraic homomorphism} if for every $f_{1}\in \cF_{1}$ and $f_{2}\in \cF_{2}$ paired under this bijection with arity $k$, and for all $x_{1},\dots,x_{k}\in D_{1}$,
\[
\varphi\big(f_{1}(x_{1},\dots,x_{k})\big)= f_{2}\big(\varphi(x_{1}),\dots,\varphi(x_{k})\big).
\]
\end{definition}

\begin{definition}\label{def:subalgebra}
Let $(D_{1},\cF_{1})$ and $(D_{2},\cF_{2})$ be algebraic structures.  
\begin{enumerate}[label=(\arabic*)]
\item We say that $(D_{1},\cF_{1})$ is a subalgebra of $(D_{2},\cF_{2})$ if there exists an injective algebraic homomorphism $\varphi:D_{1}\rightarrow D_{2}$. 
\item We say that $(D_{2},\cF_{2})$ is a homomorphic image of $(D_{1},\cF_{1})$ if there exists a surjective algebraic homomorphism $\varphi:D_{1}\rightarrow D_{2}$.
\end{enumerate}
\end{definition}

It turns out that a special class of algebraic structures, namely the \emph{idempotent} ones, provides significant convenience in various contexts. Since idempotent structures play an important role in formulating the main universal-algebraic lemma (\Cref{lem:main_algebra}), we give the definition below.

\begin{definition}
An operation $f:D^{k}\rightarrow D$ is \emph{idempotent} if $f(x,x,\dots,x)=x$ holds for all $x\in D$. We call algebraic structure $(D,\cF)$ idempotent if every operation $f\in \cF$ is idempotent.
\end{definition}

\subsection{Irredundant Relations and Galois Duality}

Given a relational structure $(D, \Gamma)$, let $\langle \Gamma \rangle$ denote the (possibly infinite) set of relations that are preserved by $\Pol(\Gamma)$. It is not hard to see that every relation generated by $\Gamma$ in the sense of \Cref{def:relational_clone} is preserved by $\Pol(\Gamma)$.\footnote{This observation is not used elsewhere in this paper.} Using the notation of \Cref{def:relational_clone}, this implies $\clo{\Gamma} \subseteq \langle \Gamma \rangle$.

A fundamental result of Geiger~\cite{geiger1968closed} shows that if the notion of “relations generated by~$\Gamma$” in \Cref{def:relational_clone} is suitably relaxed, then the resulting set $\clo{\Gamma}$ actually coincides with $\langle \Gamma \rangle$. This result captures a deep “duality” between algebraic and relational structures (an instance of the more general concept of a \emph{Galois connection}), which we do not elaborate on here. Instead, following Dalmau and Krokhin~\cite{dalmau2013robust}, we state a variant of Geiger’s theorem asserting that any \emph{irredundant} relation in $\langle \Gamma \rangle$ is contained in $\clo{\Gamma}$ (without relaxing \Cref{def:relational_clone}).

\begin{definition}\label{def:irredundance}
Let $R:D^{k}\rightarrow\{0,1\}$ be a relation on a finite set $D$. We say that $R$ is \emph{irredundant} if for any distinct indices $i,j\in [k]$, there exists a tuple $(x_{1},\dots,x_{k})\in D^{k}$ such that $R(x_{1},\dots,x_{k})=1$ and $x_{i}\neq x_{j}$.
\end{definition}

\begin{theorem}[\cite{geiger1968closed}]\label{thm:Galois}
Let $(D,\Gamma)$ be a repetition-closed relational structure, and let $R:D^{k}\rightarrow\{0,1\}$ be an irredundant relation on $D$. If every polymorphism of $(D,\Gamma)$ preserves $R$, then $R\in \clo{\Gamma}$.
\end{theorem}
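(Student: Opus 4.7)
The plan is to deduce \Cref{thm:Galois} from the classical Inv-Pol Galois theorem of Geiger (sometimes stated in the Bodnarchuk-Kaluzhnin-Kotov-Romov form), which says that every relation preserved by $\Pol(\Gamma)$ admits a primitive positive (pp) definition over $\Gamma$ together with the equality predicate $=_D$. The extra work is to show that, when $R$ is irredundant, such a pp-definition can be rewritten to use no equality atoms and no repeated variables inside a single atom's scope, which is exactly what \Cref{def:relational_clone} requires.

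First, I would invoke the classical theorem to obtain a pp-definition
\[
R(x_1,\dots,x_k) \iff \exists u_1,\dots,u_\ell\colon \bigwedge_{a} R_{0,a}(\bar z_a) \wedge \bigwedge_b (z_b = z_b'),
\]
with each $R_{0,a}\in\Gamma$. I would then process the equality atoms one by one: if at least one side of an equality is an auxiliary variable, substitute one of the two equated variables for the other and drop the atom. The only remaining case would be an equality $v_i = v_j$ between two distinct primary variables, but this would force $x_i = x_j$ on every tuple of $R$, contradicting the irredundance of $R$. After this pass, the formula contains only atoms over $\Gamma$, but some of those atoms may now have repeated variables in their scopes.

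To fix a repeat inside an atom $R_{0,a}(\bar z_a)$, I would replace $R_{0,a}$ by the \emph{diagonal relation} $R_{0,a}'$ obtained by identifying the coinciding coordinates into one. A short check (any diagonal-restriction-then-projection of a $\Pol(\Gamma)$-invariant relation is again $\Pol(\Gamma)$-invariant) shows that $R_{0,a}'$ is preserved by $\Pol(\Gamma)$ and has strictly smaller arity than $R_{0,a}$. Applying \Cref{thm:Galois} recursively to an irredundant quotient of $R_{0,a}'$, I obtain a gadget in $\clo{\Gamma}$ realising $R_{0,a}'$, and splice that gadget into the formula in place of the offending atom. The recursion terminates because arity strictly decreases at every call, and the base case (atoms of arity one) is immediate since arity-one relations are automatically irredundant. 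The main obstacle is the bookkeeping of this recursion---checking that each step preserves $\Pol(\Gamma)$-invariance and that the substitutions and splicings do not reintroduce problems already eliminated---though this is routine.
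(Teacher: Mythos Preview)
Your approach is genuinely different from the paper's: you treat the classical Inv--Pol Galois theorem (with equality) as a black box and then post-process the resulting pp-formula to strip out equality atoms and repeated variables, whereas the paper gives a direct, self-contained construction of the witnessing $\CSP(\Gamma)$ instance (variable pool $D^{m}$ with $m=|R^{-1}(1)|$, constraints encoding the polymorphism condition, primary variables chosen as the ``row vectors'' of the matrix of satisfying tuples, with irredundance used exactly once to guarantee these are distinct).

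The main gap in your argument is the termination claim ``arity strictly decreases at every call''. When you recurse on a diagonal $R_{0,a}'$, its arity is indeed strictly less than the arity of $R_{0,a}\in\Gamma$, but that only bounds it by $(\max\text{ arity in }\Gamma)-1$. In the recursive call you re-invoke classical Geiger's for $R_{0,a}'$, obtaining a \emph{fresh} pp-formula whose atoms again come from $\Gamma$ and hence may have full arity; after eliminating equalities, these atoms may again have repeated variables, and their diagonals need not have arity smaller than that of $R_{0,a}'$. So there is no well-founded measure on which your recursion descends, and nothing rules out a cycle such as $\mathrm{Prove}(S_1)\to\mathrm{Prove}(S_2)\to\mathrm{Prove}(S_1)$. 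The stated base case does not help either: even when $R_{0,a}'$ is unary, its pp-formula over $\Gamma\cup\{=\}$ uses $\Gamma$-atoms of arbitrary arity, which may again acquire repeated variables after substitution and trigger further recursion. A secondary looseness is the ``irredundant quotient'' step: if the diagonal $R_{0,a}'$ is itself redundant, passing to its quotient forces you to identify two variables in the ambient formula, which can create new repeated variables in \emph{other} atoms and feed back into the same unterminated loop.

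The paper's direct construction avoids all of this: no recursion, no post-processing, and irredundance is invoked in a single, concrete place.
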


For the sake of completeness, a proof of \Cref{thm:Galois} is given in \Cref{sec:Galois_proof}. 

As an application of \Cref{thm:Galois}, we next define an ``Endomorphism'' relation for any given relational structure $(D,\Gamma)$ and prove that it belongs to $\clo{\Gamma}$.

\begin{definition}\label{def:End_relation}
For any relational structure $(D,\Gamma)$, we define a relation $End_{\Gamma}:D^{D}\rightarrow \{0,1\}$ as follows. For any tuple $y=(y_{x})_{x\in D}\in D^{D}$, we let $End_{\Gamma}(y)=1$ if and only if the map $x\mapsto y_{x}$ is a relational endomorphism on $(D,\Gamma)$.
\end{definition}

\begin{proposition}\label{prop:End_in_clo}
For any repetition-closed relational structure $(D,\Gamma)$, we have $End_{\Gamma}\in \clo{\Gamma}$. 
\end{proposition}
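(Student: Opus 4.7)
The plan is to deduce \Cref{prop:End_in_clo} from the Galois-type theorem \Cref{thm:Galois}. Since $End_{\Gamma}$ is a relation on $D$ indexed by $D$ itself, it suffices to verify the two hypotheses of \Cref{thm:Galois}: that $End_{\Gamma}$ is irredundant, and that every polymorphism of $(D,\Gamma)$ preserves $End_{\Gamma}$.

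For irredundance, I would simply exhibit, for any pair of distinct indices $x_{1},x_{2}\in D$, a tuple $y=(y_{x})_{x\in D}\in End_{\Gamma}^{-1}(1)$ whose $x_{1}$- and $x_{2}$-coordinates differ. The identity map $x\mapsto x$ is trivially a relational endomorphism on $(D,\Gamma)$, so the tuple $y_{x}=x$ satisfies $End_{\Gamma}(y)=1$ while $y_{x_{1}}=x_{1}\neq x_{2}=y_{x_{2}}$. This step is essentially free.

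The substantive step is showing preservation. Let $f\in\Pol(\Gamma)$ have arity $k$, and let $y^{(1)},\dots,y^{(k)}\in D^{D}$ be such that each map $\varphi_{j}:x\mapsto y^{(j)}_{x}$ is a relational endomorphism of $(D,\Gamma)$. I need to show that the coordinate-wise application $z_{x}=f\bigl(\varphi_{1}(x),\dots,\varphi_{k}(x)\bigr)$ defines an endomorphism as well. For any $R\in\Gamma$ of arity $\ell$ and any tuple $(x_{1},\dots,x_{\ell})$ with $R(x_{1},\dots,x_{\ell})=1$, the endomorphism property of each $\varphi_{j}$ gives $R\bigl(\varphi_{j}(x_{1}),\dots,\varphi_{j}(x_{\ell})\bigr)=1$ for all $j\in[k]$. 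Arranging these $k$ tuples as the columns of an $\ell\times k$ matrix and invoking \Cref{def:preserve_relations} on $f$ (which preserves $R$ because $f\in\Pol(\Gamma)$) yields $R(z_{x_{1}},\dots,z_{x_{\ell}})=1$, as required.

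Combining the two steps with \Cref{thm:Galois} immediately gives $End_{\Gamma}\in\clo{\Gamma}$. I do not anticipate any real obstacle; the only mild care needed is to keep straight the two layers of indexing (the argument indices of $R$ versus the index set $D$ of the tuple $y$), but this is purely notational and is handled by the matrix viewpoint used above.
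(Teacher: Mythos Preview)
Your proposal is correct and follows essentially the same approach as the paper: verify irredundance via the identity tuple, show that every polymorphism preserves $End_{\Gamma}$ by checking that $x\mapsto f(\varphi_{1}(x),\dots,\varphi_{k}(x))$ is again an endomorphism, and then invoke \Cref{thm:Galois}. The only cosmetic difference is that the paper phrases the preservation step as an instance of the general fact that polymorphisms are closed under composition, whereas you spell out that verification directly via the $\ell\times k$ matrix; the content is the same.
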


\begin{proof}
It is easy to see that $End_{\Gamma}$ is irredundant, since the tuple $y\in D^{D}$ defined by $y_{x}=x$ for all $x\in D$ satisfies the relation $End_{\Gamma}$. It suffices to show that $End_{\Gamma}$ is preserved by all polymorphisms of $\Gamma$. Once we have that, \Cref{thm:Galois} implies that $End_{\Gamma}\in\clo{\Gamma}$.

Suppose $f\in \Pol(\Gamma)$ is a $k$-ary polymorphism. For any $k$ tuples $y^{(1)},\dots,y^{(k)}\in D^{D}$ satisfying $End_{\Gamma}$, there are endomorphisms $\varphi_{1},\dots,\varphi_{k}$ of $(D,\Gamma)$ such that $\varphi_{i}(x)=y^{(i)}_{x}$ for all $i\in [k]$ and $x\in D$. It now suffices to verify that the tuple
\[
\Big(f\big(\varphi_{1}(x),\dots,\varphi_{k}(x)\big)\Big)_{x\in D}\in D^{D}
\]
satisfies $End_{\Gamma}$, or in other words, the map $x\mapsto f\big(\varphi_{1}(x),\dots,\varphi_{k}(x)\big)$ is a relational endomorphism of $(D,\Gamma)$. But this is a special case of the general observation that for any relational structure, the collection of its polymorphisms is closed under composition.\footnote{In general, a $k$-ary polymorphism $f$ composed with an $m$-ary polymorphism $g$ yields a $km$-ary polymorphism $f\circ g$ defined by $f\circ g(x_{1,1},\dots,x_{1,m},\dots,x_{k,1},\dots,x_{k,m})=f(g(x_{1,1},\dots,x_{1,m}),\dots,g(x_{k,1},\dots,x_{k,m}))$} 
\end{proof}

\section{Main Lemma from Universal Algebra}\label{sec:universal_algebra_lemma}

In this section, we state the main universal-algebraic result used in the proof of \Cref{thm:main_repetition}. Informally, the result asserts that any CSP template of unbounded width can simulate relations defined by linear equations over abelian groups, in the sense of \Cref{def:relational_clone}. The formal statement, however, is much more nuanced.

We begin by defining $\TLin_{\mathsf{G}}$, the set of relations that specify the sum of exactly three variables. Throughout the rest of the paper, Abelian groups denoted by $\mathsf{G}$ are assumed to be finite and have at least two elements.

\begin{definition}
Let $\mathsf{G}$ be an Abelian group. For any element $b\in \mathsf{G}$, we let $S^{\mathsf{G}}_{b}:\mathsf{G}^{3}\rightarrow\{0,1\}$ be the relation defined by
\[S^{\mathsf{G}}_{b}(x_{1},x_{2},x_{3})=1\quad\text{if and only if}\quad
x_{1}+x_{2}+x_{3}=b.
\]
We use \(\TLin_{
\mathsf{G}}\) to denote the relation set $\left\{S_{b}^{\mathsf{G}}\,\middle|\,b\in \mathsf{G}\right\}$.
\end{definition}

We are now ready to state the following main result from universal algebra.

\begin{theorem}[{\cite{barto2014constraint}}]\label{thm:main_algebra}
Let \((D,\Gamma)\) be a relational structure of unbounded width.  
Assume further that \((D,\Gamma)\) is a core and that all operations in \(\Pol(\Gamma)\) are idempotent.  
Then there exists an Abelian group $\mathsf{G}$ and a family of operations \(\mathcal{F}\subseteq \Pol(\TLin_{\mathsf{G}})\) such that $(\mathsf{G},\cF)$ is a homomorphic image of a subalgebra of the polymorphism algebra \((D,\Pol(\Gamma))\) .
\end{theorem}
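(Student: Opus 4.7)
The plan is to treat \Cref{thm:main_algebra} as essentially a repackaging of the main theorem of Barto and Kozik~\cite{barto2014constraint}, which algebraically characterizes bounded width, followed by a short calculation to land in $\Pol(\TLin_{\mathsf{G}})$. The argument proceeds in three conceptual steps, the first being by far the deepest.

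First, I would invoke the algebraic dichotomy for bounded width: for a core idempotent structure $(D,\Gamma)$, bounded width is equivalent to the variety generated by $(D,\Pol(\Gamma))$ omitting tame-congruence types $\mathbf{1}$ and $\mathbf{2}$, equivalently, to the non-existence anywhere in the variety of a nontrivial \emph{abelian} (i.e.\ affine-over-a-module) subquotient. Since $(D,\Gamma)$ has unbounded width by hypothesis, such a subquotient must exist. The standing assumptions that $(D,\Gamma)$ is a core and that $\Pol(\Gamma)$ is idempotent supply a Taylor polymorphism, which rules out type $\mathbf{1}$, so the witness must be of type $\mathbf{2}$. A further standard argument (using idempotency to pull the witness down from $HSP$ into $HS$) then produces a subalgebra $\mathbb{A}' \le (D,\Pol(\Gamma))$ together with a congruence $\theta$ on $\mathbb{A}'$ such that $\mathbb{A}'/\theta$ is a nontrivial affine algebra.

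Second, I would invoke the classical structure theorem for affine algebras to identify $\mathbb{A}'/\theta$ with an idempotent reduct of a module: its universe carries a finite abelian group structure, which we name $\mathsf{G}$, and for each $f \in \Pol(\Gamma)$ the induced quotient operation $\bar f : \mathsf{G}^k \to \mathsf{G}$ is of the form $\bar f(x_1,\dots,x_k) = \sum_{i=1}^{k} a_i x_i$ with $\sum_i a_i = 1$ in the coefficient ring of the module. Taking $\mathcal{F}$ to be the family $\{\bar f : f \in \Pol(\Gamma)\}$, in arity-preserving bijection with $\Pol(\Gamma)$, the algebra $(\mathsf{G},\mathcal{F})$ is a homomorphic image of the subalgebra $\mathbb{A}'$. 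The containment $\mathcal{F} \subseteq \Pol(\TLin_{\mathsf{G}})$ then reduces to a one-line check: for any three tuples $\vec x^{(1)}, \vec x^{(2)}, \vec x^{(3)} \in \mathsf{G}^k$ satisfying $x_i^{(1)} + x_i^{(2)} + x_i^{(3)} = b$ for every coordinate $i$,
\[
\sum_{j=1}^{3} \bar f\bigl(\vec x^{(j)}\bigr) \;=\; \sum_{i=1}^{k} a_i \Bigl(\sum_{j=1}^{3} x_i^{(j)}\Bigr) \;=\; b \sum_{i=1}^{k} a_i \;=\; b,
\]
so $\bar f$ preserves $S_{b}^{\mathsf{G}}$ for every $b \in \mathsf{G}$.

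The main obstacle is the first step, which compresses the proof of the Feder--Vardi bounded-width conjecture by Barto and Kozik; I would not attempt to reprove it but would cite it as a black box. A secondary technical point is ensuring that the affine subquotient can be located inside $HS(D,\Pol(\Gamma))$ rather than merely in $HSP(D,\Pol(\Gamma))$, so that the conclusion indeed speaks of a subalgebra of the polymorphism algebra itself; this is where the idempotency and coreness hypotheses carry their weight.
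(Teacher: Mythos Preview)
Your three-step outline matches the paper's derivation almost exactly: the paper combines Barto--Kozik's characterization of bounded width (\Cref{thm:BK14}), Valeriote's theorem that in the idempotent case a type-$\mathbf{1}$/$\mathbf{2}$ witness in $HSP$ can be realized as a strictly simple algebra in $HS$ (\Cref{thm:Valeriote}), and Szendrei's description of such strictly simple idempotent algebras (\Cref{cor:Szendrei}) to conclude. Your explicit verification that affine operations preserve every $S_b^{\mathsf{G}}$ is exactly the computation hidden behind the paper's \Cref{cor:Szendrei}.

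There is, however, one genuine error in your Step~1. You assert that being a core with idempotent $\Pol(\Gamma)$ ``supply a Taylor polymorphism, which rules out type~$\mathbf{1}$.'' This is false: idempotency of $\Pol(\Gamma)$ is automatic once constants are named, and says nothing about the existence of a Taylor term. Any $\mathbf{NP}$-hard core template (e.g.\ $\mathsf{3COL}$ with constants) is a core with idempotent polymorphism clone, has unbounded width, and admits \emph{no} Taylor polymorphism---so its variety admits type~$\mathbf{1}$, not type~$\mathbf{2}$. Your argument therefore does not cover exactly the templates that motivate the whole paper. The fix is simple and is what the paper does: do not try to exclude type~$\mathbf{1}$, and instead invoke Szendrei's theorem, which shows that a strictly simple \emph{idempotent} algebra of type~$\mathbf{1}$ \emph{or}~$\mathbf{2}$ already has the affine form $f(x)-f(y)+f(z)=f(x-y+z)$ over some abelian group~$\mathsf{G}$; your $\Pol(\TLin_{\mathsf{G}})$ computation then goes through unchanged. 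Likewise, your ``standard argument'' for descending from $HSP$ to $HS$ is precisely Valeriote's \cite{valeriote2009subalgebra}, which you should cite rather than leave implicit.
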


The statement of \Cref{thm:main_algebra} is primarily derived from~\cite[Conjecture~4.3]{barto2014constraint}, with additional ingredients drawn from other works, such as~\cite{valeriote2009subalgebra}. Further details concerning \Cref{thm:main_algebra} will be provided in \Cref{sec:varieties}.

The remaining goal of this section is to translate \Cref{thm:main_algebra} into the more explicit and directly applicable statement of \Cref{lem:main_algebra}. One aspect of \Cref{thm:main_algebra} that requires ``translation'' is the idempotency condition on $\Pol(\Gamma)$. Given a relational structure, its polymorphism algebra can be made idempotent by augmenting the relation set with all constant relations, defined as follows.

\begin{definition}
Let \(D\) be a finite set.  
We denote by \(\Const_{D}\) the set of all unary constant relations on \(D\); that is, relations \(R\) of the form 
\[
R(x)=1\quad\text{if and only if}\quad x=b,
\]
for some \(b\in D\).
\end{definition}

We are now ready to state and prove the following translated version of \Cref{thm:main_algebra}.

\begin{lemma}\label{lem:main_algebra}
Let \((D,\Gamma)\) be a repetition-closed relational structure of unbounded width.  
Then there exist a subset \(D'\subseteq D\), an Abelian group $\mathsf{G}$, and a surjective map \(\varphi\colon D'\to\mathsf{G}\) such that any ternary relation \(R\colon D^{3}\to\{0,1\}\) of the form
\begin{equation}\label{eq:ternary_phi_linear}
R(x_{1},x_{2},x_{3})=1\quad\text{if and only if}\quad
x_{1},x_{2},x_{3}\in D'\text{ and }
R'(\varphi(x_{1}),\varphi(x_{2}),\varphi(x_{3}))=1,
\end{equation}
for some $R'\in \TLin_{\mathsf{G}}$,
belongs to the collection \(\clo{\Gamma\cup\Const_{D}}\).
\end{lemma}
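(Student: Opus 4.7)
The plan is to apply Theorem~\ref{thm:main_algebra} directly to the augmented relational structure $(D, \Gamma \cup \Const_D)$. First I would verify that this augmented structure satisfies all three hypotheses of Theorem~\ref{thm:main_algebra}. Any relational endomorphism $\psi$ of $(D, \Gamma \cup \Const_D)$ must preserve each singleton constant relation, forcing $\psi(c) = c$ for every $c \in D$; hence the only endomorphism is the identity, so $(D, \Gamma \cup \Const_D)$ is automatically a core. Similarly, every $f \in \Pol(\Gamma \cup \Const_D)$ must satisfy $f(c,\dots,c) = c$ for all $c \in D$, so all such polymorphisms are idempotent. Unbounded width is also preserved: if $(D, \Gamma \cup \Const_D)$ had bounded width, the same consistency algorithm would correctly decide every $\Gamma$-instance (since $\CSP(\Gamma) \subseteq \CSP(\Gamma \cup \Const_D)$), contradicting the unbounded-width assumption on $(D,\Gamma)$.

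Applying Theorem~\ref{thm:main_algebra} then yields an Abelian group $\mathsf{G}$, a family $\cF \subseteq \Pol(\TLin_{\mathsf{G}})$, a subset $D' \subseteq D$ closed under every polymorphism in $\Pol(\Gamma \cup \Const_D)$ (obtained by identifying the abstract subalgebra carrier with its image under the injective embedding into $D$), and a surjective map $\varphi \colon D' \to \mathsf{G}$ that realizes $(\mathsf{G}, \cF)$ as an algebraic-homomorphic image of $(D', \Pol(\Gamma \cup \Const_D)|_{D'})$. In particular, for each $k$-ary $f \in \Pol(\Gamma \cup \Const_D)$ there is a corresponding $f' \in \cF$ satisfying $\varphi(f(x_1,\dots,x_k)) = f'(\varphi(x_1),\dots,\varphi(x_k))$ for all $x_1,\dots,x_k \in D'$. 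These $D'$, $\mathsf{G}$, and $\varphi$ are exactly what the lemma claims.

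To show that every $R$ of the form \eqref{eq:ternary_phi_linear} lies in $\clo{\Gamma \cup \Const_D}$, I would invoke Theorem~\ref{thm:Galois}, which requires verifying that $R$ is irredundant and preserved by every polymorphism of $\Gamma \cup \Const_D$. Irredundance is immediate: since $|\mathsf{G}| \geq 2$ and $\varphi$ is surjective, for any $b \in \mathsf{G}$ and any distinct $i, j \in \{1,2,3\}$ one can pick $g_1, g_2, g_3 \in \mathsf{G}$ with $g_1 + g_2 + g_3 = b$ and $g_i \neq g_j$, then lift to preimages $x_\ell \in \varphi^{-1}(g_\ell)$. For preservation, let $f \in \Pol(\Gamma \cup \Const_D)$ have arity $k$ and let $(x_{i,j})$ be a $3 \times k$ matrix whose columns satisfy $R$. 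All entries $x_{i,j}$ then lie in $D'$, so the outputs $y_i = f(x_{i,1},\dots,x_{i,k})$ lie in $D'$ by closure; moreover, $\varphi(y_i) = f'(\varphi(x_{i,1}),\dots,\varphi(x_{i,k}))$, and since $f' \in \cF \subseteq \Pol(\TLin_{\mathsf{G}})$ preserves the corresponding $R' \in \TLin_{\mathsf{G}}$, we get $\varphi(y_1) + \varphi(y_2) + \varphi(y_3) = b$, i.e.\ $R(y_1,y_2,y_3) = 1$.

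The heavy lifting is entirely done by Theorem~\ref{thm:main_algebra}; the remaining work is primarily bookkeeping. The only mildly subtle step is the identification of the abstract subalgebra from Theorem~\ref{thm:main_algebra} with a concrete subset $D' \subseteq D$ that is closed under all polymorphisms, which is necessary for writing down a relation $R \colon D^3 \to \{0,1\}$ and carrying out the Galois-duality argument on $D$ rather than on some abstract carrier set. I do not expect any genuine obstacle beyond carefully tracking the bijection between $\Pol(\Gamma \cup \Const_D)$ and $\cF$ inherent in the notion of algebraic homomorphism from Definition~\ref{def:algebraic_homomorphism}.
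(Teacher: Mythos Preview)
Your proposal is correct and follows essentially the same route as the paper: verify that $(D,\Gamma\cup\Const_D)$ is a core with idempotent polymorphisms and unbounded width, apply \Cref{thm:main_algebra}, identify the abstract subalgebra with a polymorphism-closed subset $D'\subseteq D$ (the paper does this via the injective map $\varphi_1$ and its inverse), check preservation of $R$ by chasing the algebraic homomorphisms, and finish with irredundance plus \Cref{thm:Galois}. The only difference is notational---the paper keeps the intermediate carrier $D_1$ and the maps $\varphi_1,\varphi_2$ explicit, whereas you absorb them into $D'$ and $\varphi$ directly.
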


\begin{proof}

We divide the proof into the following steps.

\paragraph{Step 1: construction of $\varphi$.} Because of the presence of constant relations, every relational endomorphism of the structure \((D,\Gamma\cup\Const_{D})\) is the identity map.  
By \Cref{def:core}, this implies that \((D,\Gamma\cup\Const_{D})\) is a core.  
Furthermore, by \Cref{prop:Const_preserves_unbounded_width}, \((D,\Gamma\cup\Const_{D})\) also has unbounded width.  
Since every polymorphism of \((D,\Gamma\cup\Const_{D})\) preserves the constant relations and is therefore idempotent, we can apply \Cref{thm:main_algebra} to obtain a subalgebra \((D_{1},\mathcal{F}_{1})\) of \(\bigl(D,\Pol(\Gamma\cup\Const_{D})\bigr)\) and a homomorphic image \((\mathsf{G},\mathcal{F}_{2})\) of \((D_{1},\mathcal{F}_{1})\) such that \(\mathcal{F}_{2}\subseteq\Pol(\TLin_{\mathsf{G}})\), as illustrated in the following diagram.
\[
\begin{tikzcd}[row sep=0em, column sep=6em]
  & \bigl(D,\Pol(\Gamma\cup\Const_D)\bigr) \\
  (D_{1},\cF_{1}) \arrow[ur, hook, "\varphi_{1}"] 
                \arrow[dr, two heads, "\varphi_{2}"] & \\
  & (\mathsf{G},\cF_2)
\end{tikzcd}
\]

Let $\varphi_{1}:D_{1}\rightarrow D$ be the injective algebraic homomorphism from $(D_{1},\cF_{1})$ to $\big(D,\Pol(\Gamma\cup\Const_{D})\big)$, and let $\varphi_{2}:D_{1}\rightarrow \mathsf{G}$ be the surjective algebraic homomorphism from $(D_{1},\cF_{1})$ to $(\mathsf{G},\cF_2)$. Let $D'=\varphi_{1}(D_{1})\subseteq D$ be the image of $\varphi_{1}$. Then we may define a surjective map $\varphi:D'\rightarrow \mathsf{G}$ by letting $\varphi(x)=\varphi_{2}(\varphi_{1}^{-1}(x))$ for any $x\in D'$. 

\paragraph{Step 2: properties of $\varphi$.} We claim that any polymorphism $f:D^{k}\rightarrow D$ of the relational structure $(D,\Gamma\cup\Const_{D})$ also preserves any relation $R$ of the form \eqref{eq:ternary_phi_linear}. Since $\varphi_{1},\varphi_{2}$ are algebraic homomorphisms, by \Cref{def:algebraic_homomorphism} we know that some $f_{1}\in \cF_{1}$ is implicitly paired with $f$ and some $f_{2}\in \cF_2$ is in turn paired with $f_{1}$. 

Let $(x_{i,j})\in D^{3\times k}$ be a matrix of elements such that $R(x_{1,j},x_{2,j},x_{3,j})=1$ for all $j\in [k]$. Therefore, by \eqref{eq:ternary_phi_linear} it must be the case that $x_{\ell,j}\in D'$ for all $\ell\in [3]$ and $j\in [k]$. We let $y_{\ell,j}=\varphi_{1}^{-1}(x_{\ell,j})\in D_{1}$. Now we have
\begin{align}
&\quad R\bigl(f(x_{1,1},\dots,x_{1,k}),f(x_{2,1},\dots,x_{2,k}), f(x_{3,1},\dots,x_{3,k})\bigr)\nonumber\\
&= R\Bigl(f\big(\varphi_{1}(y_{1,1}),\dots,\varphi_{1}(y_{1,k})\big),f\big(\varphi_{1}(y_{2,1}),\dots,\varphi_{1}(y_{2,k})\big), f\big(\varphi_{1}(y_{3,1}),\dots,\varphi_{1}(y_{3,k})\big)\Bigr)\nonumber\\
&= R'\Big(\varphi_{2}\circ\varphi_{1}^{-1}\Big(f\big(\varphi_{1}(y_{1,1}),\dots,\varphi_{1}(y_{1,k})\big)\Big),\varphi_{2}\circ\varphi_{1}^{-1}\Big(f\big(\varphi_{1}(y_{2,1}),\dots,\varphi_{1}(y_{2,k})\big)\Big),\nonumber\\
&\quad\quad\quad\varphi_{2}\circ\varphi_{1}^{-1}\Big(f\big(\varphi_{1}(y_{3,1}),\dots,\varphi_{1}(y_{3,k})\big)\Big)\Big)\nonumber\\
&=R'\Big(\varphi_{2}\big(f_{1}(y_{1,1},\dots,y_{1,k})\big),\varphi_{2}\big(f_{1}(y_{2,1},\dots,y_{2,k})\big),\varphi_{2}\big(f_{1}(y_{3,1},\dots,y_{3,k})\big)\Big)\nonumber\\
&=R'\Big(f_{2}\big(\varphi_{2}(y_{1,1}),\dots,\varphi_{2}(y_{1,k})\big),f_{2}\big(\varphi_{2}(y_{2,1}),\dots,\varphi_{2}(y_{2,k})\big),f_{2}\big(\varphi_{2}(y_{3,1}),\dots,\varphi_{2}(y_{3,k})\big)\Big)\nonumber\\
&=R'\Big(f_{2}\big(\varphi(x_{1,1}),\dots,\varphi(x_{1,k})\big),f_{2}\big(\varphi(x_{2,1}),\dots,\varphi(x_{2,k})\big),f_{2}\big(\varphi(x_{3,1}),\dots,\varphi(x_{3,k})\big)\Big)\label{eq:evaluates_to_1}
\end{align}
Since $\cF_{2}\subseteq \Pol(\TLin_{\mathsf{G}})$ and $f_{2}\in \cF_{2}$, it follows from the assumption $R'\in \TLin_{\mathsf{G}}$ that $f_{2}$ preserves $R'$. Combining this with the observation that \[R'\big(\varphi(x_{1,j}),\varphi(x_{2,j}),\varphi(x_{3,j})\big)=R(x_{1,j},x_{2,j},x_{3,j})=1\quad\text{for all }j\in [k],\]
we deduce that \eqref{eq:evaluates_to_1} evaluates to 1. Therefore, we conclude that $f$ preserves $R$.

\paragraph{Step 3: applying Galois duality.} Since every relation $R'\in \TLin_{\mathsf{G}}$ is irredundant in the sense of \Cref{def:irredundance}, it is easy to see that any relation $R:D^{3}\rightarrow\{0,1\}$ defined according to \eqref{eq:ternary_phi_linear} is also irredundant. Due to \Cref{thm:Galois} and the claim proved in Step 2 of the proof, it follows that $R\in \clo{\Gamma\cup\Const_{D}}$.
\end{proof}

\section{Hardness of Linear Equations}\label{sec:BOT}

As mentioned earlier, we prove \Cref{thm:main_repetition} by constructing a reduction from the following hardness result for the ``linear equation'' template $(\mathsf{G},\TLin_{\mathsf{G}})$.

\begin{theorem}[\cite{bogdanov2002lower}]\label{thm:BOT}
For any Abelian group $\mathsf{G}$ and any $\varepsilon>0$, there exists a constant $d>0$ such that $\Mcsp{\TLin_{\mathsf{G}}}{1}{\frac{1}{|\mathsf{G}|}+\varepsilon}$ requires $\Omega(n)$ queries in the $\BD^*(d,\frac{d}{3},3n)$ model.\footnote{The paper \cite{bogdanov2002lower} only proved \Cref{thm:BOT} in the case $\mathsf{G}=\mathbb{Z}/2\bZ$, but the proof can easily be adapted to work for general Abelian groups.}
\end{theorem}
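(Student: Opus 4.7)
The plan is to apply Yao's minimax principle: an $\Omega(n)$ query lower bound against all randomized $\varepsilon$-testers reduces to exhibiting two distributions $\Dyes$ and $\Dno$ on $\CSP(\TLin_{\mathsf{G}},[3n])$-instances with $dn$ constraints each, such that $\Dyes$ is supported on satisfiable instances, a typical sample from $\Dno$ has value at most $1/|\mathsf{G}|+\varepsilon$, and no deterministic $o(n)$-query algorithm can tell them apart. Both distributions share the same underlying random $d$-regular $3$-uniform hypergraph $H$ on $[3n]$ (conditioned on simplicity, an event of constant probability), producing average degree exactly $d$; under $\Dno$ each hyperedge $\{i,j,k\}$ of $H$ is assigned the constraint $S^{\mathsf{G}}_{b}(x_{i},x_{j},x_{k})$ with $b$ drawn independently and uniformly from $\mathsf{G}$, while under $\Dyes$ one first draws a uniform assignment $\tau\colon[3n]\to\mathsf{G}$ and then sets $b=\tau(i)+\tau(j)+\tau(k)$, making $\tau$ itself a satisfying assignment. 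The query oracle is fixed so that its behavior depends only on $H$ and the algorithm's query history (not on the $b$'s), which lets one couple the two distributions on a common $H$ and have each variable query reveal the same hyperedge scope in both cases.

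Soundness of $\Dno$ is a standard concentration argument. For a fixed $\sigma\colon[3n]\to\mathsf{G}$, the number of constraints of an instance from $\Dno$ satisfied by $\sigma$ is a sum of $dn$ independent Bernoulli$(1/|\mathsf{G}|)$ random variables, so by Chernoff its density lies within $\varepsilon$ of $1/|\mathsf{G}|$ except with probability $\exp(-\Omega(\varepsilon^{2}dn))$. Taking $d$ large compared to $|\mathsf{G}|$ and $1/\varepsilon$ and union-bounding over all $|\mathsf{G}|^{3n}$ assignments $\sigma$ gives $\val_{I}\le 1/|\mathsf{G}|+\varepsilon$ with high probability.

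The main step is indistinguishability. Under $\Dno$ the right-hand sides revealed to the algorithm are independently uniform in $\mathsf{G}$. Under $\Dyes$ they are the values $\tau(i)+\tau(j)+\tau(k)$ for a uniform $\tau$; their joint distribution is also uniform over $\mathsf{G}^{|E'|}$ whenever the set $E'$ of revealed hyperedges makes the linear map $\tau\mapsto (\tau(i)+\tau(j)+\tau(k))_{\{i,j,k\}\in E'}$ from $\mathsf{G}^{V(E')}$ onto $\mathsf{G}^{E'}$ surjective. This surjectivity is guaranteed by the combinatorial condition that every non-empty $E''\subseteq E'$ satisfies $|V(E'')|>2|E''|$: such an $E''$ has average vertex-degree strictly less than $3/2$ and so must contain a hyperedge with a degree-one vertex, and one can then peel hyperedges off iteratively, using each freshly exposed degree-one vertex to realize any target right-hand side (this works for any finite abelian group $\mathsf{G}$). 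Consequently, on the expansion event $\mathsf{Exp}_{\delta}=\{\text{every }E'\subseteq H\text{ with }|E'|\le\delta n\text{ satisfies }|V(E')|>2|E'|\}$, the transcripts of any deterministic $q$-query algorithm with $q\le\delta n$ are identically distributed under $\Dyes$ and $\Dno$.

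The heart of the argument, and the main obstacle, is proving $\Pr[\mathsf{Exp}_{\delta}]=1-o(1)$ for a suitable constant $\delta>0$ depending only on $d$. This is a standard but delicate first-moment computation for random regular $3$-uniform hypergraphs: for each $s\le\delta n$ one bounds the expected number of sub-hypergraphs on $s$ edges that fail the vertex bound, and verifies that the resulting sum over $s$ decays geometrically once $\delta$ is taken small enough. For $\mathsf{G}=\bZ/2\bZ$ this is exactly the calculation performed by Bogdanov, Obata, and Trevisan; the peeling argument in the previous paragraph works verbatim over any finite abelian group, so no substantive modification is required to pass from $\bZ/2\bZ$ to arbitrary $\mathsf{G}$. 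Combining expansion, soundness, and Yao's principle yields the $\Omega(n)$ lower bound with the stated parameter dependencies.
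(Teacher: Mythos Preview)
Your approach is essentially the paper's: a Yes/No pair of distributions sharing a random $d$-regular $3$-uniform hypergraph, with planted versus independent right-hand sides; soundness by concentration plus union bound over assignments; and indistinguishability via a local-sparsity (expansion) event that makes the revealed right-hand sides jointly uniform under both distributions. The paper uses a $3$-partite matching model rather than conditioning a configuration model on simplicity, and it uses Azuma on a filtration that reveals both the matching and the right-hand sides, but your Chernoff argument (conditioning on $H$ and using independence of the $b$'s) is a legitimate simplification.

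There is, however, a genuine numerical slip in your expansion condition. You ask that every nonempty $E''\subseteq E'$ satisfy $|V(E'')|>2|E''|$, i.e.\ average degree strictly below $3/2$, and claim $\Pr[\mathsf{Exp}_{\delta}]=1-o(1)$ for this event. This is false: a loose $3$-cycle (three hyperedges on six vertices, each pair sharing one vertex) has $|V|=6=2|E|$, and such configurations appear in the random $d$-regular hypergraph with probability bounded away from zero (their expected count is $\Theta(d^{3})$), so no choice of $\delta$ rescues the first-moment bound. What the peeling argument actually needs is only $|V(E'')|>\tfrac{3}{2}|E''|$, equivalently average degree strictly below $2$, which already forces a degree-one vertex; and \emph{that} condition does hold with probability $1-o(1)$ for $|E'|\le\delta n$ (this is exactly the paper's local-sparsity lemma, stated there as $\sum_i|M^{(i)}[U]|<\tfrac{2}{3}|U|$). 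Replacing the threshold $2$ by $3/2$ throughout repairs the argument with no further changes.
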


However, unlike \cite{dalmau2013robust}, which applies the result of \cite{haastad2001some} as a black box to prove \Cref{thm:DK13}, we cannot directly use \Cref{thm:BOT} due to the subtleties of our bounded-degree query model. In this section, we unpack the proof of \Cref{thm:BOT} and extract the components that can be explicitly leveraged in proving \Cref{thm:main_repetition}. For the sake of convenience, several aspects of the proof in \cite{bogdanov2002lower} are slightly modified in the presentation below.

The proof of \Cref{thm:BOT} proceeds by defining two probability distributions over $\CSP(\TLin_{\mathsf{G}})$ instances, referred to as the YES and NO distributions. Instances drawn from the YES distribution typically have value 1, while those from the NO distribution typically have value at most $\frac{1}{|\mathsf{G}|}+\varepsilon$. The goal is to show that any sublinear-query algorithm cannot reliably distinguish between instances sampled from these two distributions.

A combinatorial structure used in both the YES and NO distributions is hypergraph perfect matching, formally defined as follows.

\begin{definition}\label{def:perfect_matching}
Let \(n\) be a positive integer, and let $D$ be a finite set of cardinality $k\geq 2$. Consider the complete \(k\)-partite \(k\)-uniform hypergraph with vertex set \([n]\times D\), partitioned according to the second coordinate.  
We write \(\mathcal{M}_{D,n}\) for the collection of all perfect matchings of this hypergraph (a \emph{perfect matching} is a set of \(n\) hyperedges, no two of which share a common vertex). In other words, a matching $M\in \cM_{D,n}$ can be specified by $k$ permutations $\pi_{x}:[n]\rightarrow [n]$, for $x\in D$, where $M$ is the edge set
\[
M=\left\{\big((\pi_{x}(i),x)\big)_{x\in D}\in \prod_{x\in D}\big([n]\times \{x\}\big)\,\middle|\,i\in [n]\right\}
\]
\end{definition}

A perfect matching $M \in \cM_{D,n}$ serves as a guide for placing constraints on variables. Viewing the vertex set $[n] \times D$ as a collection of variables, each hyperedge of $M$ corresponds to a tuple of variables on which a constraint can be imposed. In the setting of \Cref{thm:BOT}, we take $D = [3]$, and on each hyperedge $(v_{1}, v_{2}, v_{3})$ we place a constraint from $\TLin_{\mathsf{G}}$.

The following definition specifies the YES and NO distributions used to prove \Cref{thm:BOT}.

\begin{definition}\label{def:Dyes}
For any positive integers $n,d$ and any Abelian group $\mathsf{G}$, we define $\Dpair^{\mathsf{G},n,d}$ to be the distribution over pairs of $\CSP\big(\TLin_{\mathsf{G}},[n]\times [3]\big)$ instances obtained by sampling from the following procedure: 
\begin{enumerate}
\item Sample an assignment $\tau:[n]\times [3]\rightarrow\mathsf{G}$ uniformly at random.
\item Initialize empty instances $I_{\yes},I_{\no}$. Repeat $d$ times:
\begin{itemize}
\item Sample uniformly at random a $3$-partite perfect matching $M\in \cM_{[3],n}$.
\item For each hyperedge $(v_{1},v_{2},v_{3})\in M$:
\begin{itemize}
\item Add to $I_{\yes}$ the constraint $\big((v_{1},v_{2},v_{3}),S^{\mathsf{G}}_{b}\big)$, where \(b=\tau(v_{1})+\tau(v_{2})+\tau(v_{3})\).
\item Sample $b'$ uniformly from $\mathsf{G}$, and add to $I_{\no}$ the constraint $\big((v_{1},v_{2},v_{3}),S^{\mathsf{G}}_{b'}\big)$.
\end{itemize}
\end{itemize}
\item Output the pair $(I_{\yes},I_{\no})$.
\end{enumerate}
The marginal distributions of $\Dpair^{\mathsf{G},n,d}$ on the first and second components are denoted by $\Dyes^{\mathsf{G},n,d}$ and $\Dno^{\mathsf{G},n,d}$, respectively.
\end{definition}

By construction, the distributions $\Dyes^{\mathsf{G},n,d}$ and $\Dno^{\mathsf{G},n,d}$ satisfy the properties stated in the following proposition.

\begin{proposition}\label{prop:Dpair}
In every instance \(I\) in the support of \(\mathcal{D}_{\mathrm{yes}}^{\mathsf{G},n,d}\) or \(\mathcal{D}_{\mathrm{no}}^{\mathsf{G},n,d}\), each variable \(v\in[n]\times [3]\) is involved in exactly $d$ constraints. Furthermore, every instance \(I\) in the support of \(\mathcal{D}_{\mathrm{yes}}^{\mathsf{G},n,d}\) is satisfiable.
\end{proposition}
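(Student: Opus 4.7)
The plan is to verify both claims directly from the construction in \Cref{def:Dyes}, with each following from a distinct structural observation about the sampling procedure.

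For the degree claim, I would note that at each of the $d$ iterations in step 2 of \Cref{def:Dyes}, a perfect matching $M\in \cM_{[3],n}$ is sampled, and by \Cref{def:perfect_matching} every vertex $v\in [n]\times [3]$ lies in exactly one hyperedge of $M$. Each such hyperedge contributes exactly one constraint to $I_{\yes}$ and one constraint to $I_{\no}$, so $v$ appears in exactly one constraint from that iteration in each of the two instances. Summing over the $d$ independent iterations gives a total degree of exactly $d$ for $v$ in both $I_{\yes}$ and $I_{\no}$, regardless of the random outcomes. Since this holds for every outcome in the support, not just in expectation, the first claim follows.

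For the satisfiability claim, I would simply exhibit the assignment $\tau:[n]\times [3]\to \mathsf{G}$ sampled in step 1 as a satisfying assignment of $I_{\yes}$. By construction, every constraint placed into $I_{\yes}$ has the form $\bigl((v_1,v_2,v_3),S^{\mathsf{G}}_{b}\bigr)$ with $b=\tau(v_1)+\tau(v_2)+\tau(v_3)$, which is exactly the condition $S^{\mathsf{G}}_{b}(\tau(v_1),\tau(v_2),\tau(v_3))=1$. Hence $\tau$ satisfies every constraint of $I_{\yes}$, giving $\val_{I_{\yes}}(\tau)=1$.

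There is no real obstacle here, as both statements are tautological consequences of the sampling procedure; the only point worth emphasizing in the write-up is that the degree bound is exact and deterministic (holding pointwise over the support), rather than merely approximate or probabilistic, which is what allows these distributions to fit cleanly into the $\BD^*(d,\tfrac{d}{3},3n)$ model in \Cref{thm:BOT}.
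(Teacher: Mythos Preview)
Your proposal is correct and matches the paper's approach: the paper simply states that the proposition holds ``by construction'' without giving any further details, and your argument is exactly the direct verification from \Cref{def:Dyes} that the paper leaves implicit.
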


The next lemma shows that instances sampled from the NO distribution typically have value at most $\frac{1}{|\mathsf{G}|}+\varepsilon$, as expected.

\begin{lemma}[\cite{bogdanov2002lower}]\label{lem:3LIN_exists_d}
For any Abelian group $\mathsf{G}$ and any \(\varepsilon>0\), there exist a constant integer \(d\geq 1\) such that 
\[
\Pru{I\sim \mathcal{D}_{\mathrm{no}}^{\mathsf{G},n,d}}{\val_{I}\geq \frac{1}{|\mathsf{G}|}+\varepsilon}=o(1),
\]
where \(o(1)\) denotes a term tending to \(0\) as \(n\to\infty\).
\end{lemma}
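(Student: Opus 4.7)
The plan is to combine a Hoeffding-style concentration inequality (for a fixed assignment) with a union bound over all possible assignments. First I would fix an arbitrary assignment $\tau\colon[n]\times[3]\to\mathsf{G}$ and bound the probability, over $I\sim\Dno^{\mathsf{G},n,d}$, that $\val_{I}(\tau)\geq\tfrac{1}{|\mathsf{G}|}+\varepsilon$.

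By construction, each of the $dn$ constraints of $I$ has the form $\bigl((v_{1},v_{2},v_{3}),S_{b'}^{\mathsf{G}}\bigr)$ where $b'$ is drawn independently and uniformly at random from $\mathsf{G}$. Since $\tau(v_{1})+\tau(v_{2})+\tau(v_{3})$ is already determined by $\tau$, the probability that $\tau$ satisfies this particular constraint is exactly $1/|\mathsf{G}|$. The crucial observation is that these $dn$ satisfaction events are mutually independent (the offsets $b'$ are drawn independently, whatever the sampled matchings happen to be, and repeated hyperedges only yield independently sampled $b'$'s). Hoeffding's inequality applied to the sum of these $dn$ indicator variables therefore gives
\[
\Pru{I\sim\Dno^{\mathsf{G},n,d}}{\val_{I}(\tau)\geq \tfrac{1}{|\mathsf{G}|}+\varepsilon}\leq \exp\bigl(-2\varepsilon^{2}dn\bigr).
\]

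A union bound over the $|\mathsf{G}|^{3n}$ assignments $\tau\colon[n]\times[3]\to\mathsf{G}$ then yields
\[
\Pru{I\sim\Dno^{\mathsf{G},n,d}}{\val_{I}\geq \tfrac{1}{|\mathsf{G}|}+\varepsilon}\leq |\mathsf{G}|^{3n}\cdot\exp\bigl(-2\varepsilon^{2}dn\bigr),
\]
and choosing any integer $d$ with $2\varepsilon^{2}d>3\ln|\mathsf{G}|$ makes the right-hand side decay exponentially in $n$, which is in particular $o(1)$.

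I do not anticipate a substantive obstacle. The defining structural feature of $\Dno^{\mathsf{G},n,d}$ is the independence of the offsets $b'$ across hyperedges, which fully decouples the satisfaction events for any fixed $\tau$ and reduces the argument to a standard Chernoff-plus-union-bound calculation. The only quantitative input is that $d$ be a sufficiently large constant depending on $|\mathsf{G}|$ and $\varepsilon$, which is precisely what the statement permits.
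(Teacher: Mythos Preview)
Your argument is correct and in fact cleaner than the paper's own proof. You exploit the key structural feature of $\Dno^{\mathsf{G},n,d}$ directly: conditioned on the sampled matchings, the $dn$ satisfaction indicators for a fixed $\tau$ are i.i.d.\ Bernoulli$(1/|\mathsf{G}|)$ random variables (since each offset $b'$ is fresh and uniform), so ordinary Hoeffding gives $\exp(-2\varepsilon^{2}dn)$, and the bound is uniform in the matchings. The paper instead runs an Azuma--Hoeffding martingale argument: it builds a filtration $(\mathscr{F}_{t})_{0\le t\le dn}$ by revealing one constraint at a time, bounds the martingale differences of $\val_{I}(\tau)$ by $2/(dn)$ via a switching argument, and obtains the weaker tail $\exp(-\varepsilon^{2}dn/8)$. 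The payoff of the paper's more elaborate route is reusability: the same martingale machinery is later applied verbatim to the expander-gadget lemma (\Cref{lem:expander_gadget}), where the relation $R$ is fixed rather than randomly shifted and the satisfaction events are genuinely dependent, so your independence shortcut would not be available there.
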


It remains to show that $\Dyes^{\mathsf{G},n,d}$ and $\Dno^{\mathsf{G},n,d}$ are indistinguishable to sublinear-query algorithms. The idea is that, although a random instance from the YES distribution is globally very different from a random instance from the NO distribution, they are locally very similar. Specifically, when one inspects only a sublinear-sized portion of the instances, the two distributions are actually almost indistinguishable. The following definition helps formalize this intuition.

\begin{definition}
Let \((D,\Gamma)\) be a CSP template and let \(V\) be a set of variables.  
We write \(\CSP(\Gamma,V)\subseteq \CSP(\Gamma)\) for the collection of all \(\CSP(\Gamma)\)-instances whose variable set is exactly \(V\).  
For any subset \(U\subseteq V\), we define the \emph{sub-instance map}  
\[
\Sub_{U}\colon \CSP(\Gamma,V)\longrightarrow \CSP(\Gamma,U)
\]
to be the operation that retains only those constraints whose scopes are contained in \(U\).
\end{definition}

The indistinguishability between $\Dyes^{\mathsf{G},n,d}$ and $\Dno^{\mathsf{G},n,d}$ is formalized in the next lemma.

\begin{lemma}[\cite{bogdanov2002lower}]\label{lem:3LIN_indistinguishable}
For any Abelian group \(\mathsf{G}\) and any positive integer $d$, there exist a constant \(\delta\in (0,1)\) such that the following holds. For any positive integer $n$, there exists an event $\cE$ with $\Dpair^{\mathsf{G},n,d}(\cE)\geq 1-o(1)$ such that for any $U\subseteq [n]\times [3]$ with $|U|\leq \delta n$ and any subset $\Lambda\subseteq \CSP\big(\TLin_{\mathsf{G}},[n]\times [3]\big)$, we have
\[
\Pru{(I_{\yes},I_{\no})\sim \Dpair^{\mathsf{G},n,d}}{\Sub_{U}[I_{\yes}]\in \Lambda \,\Big|\, \cE}
=\Pru{(I_{\yes},I_{\no})\sim \Dpair^{\mathsf{G},n,d}}{\Sub_{U}[I_{\no}]\in \Lambda \,\Big|\, \cE}
\]
\end{lemma}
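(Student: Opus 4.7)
The plan is to exploit the coupling in $\Dpair^{\mathsf{G},n,d}$: the pair $(I_{\yes}, I_{\no})$ share the matchings $(M_{1},\ldots,M_{d})$ (and hence all constraint scopes), differing only in how the right-hand-side values are drawn --- deterministically by $b = \tau(v_{1})+\tau(v_{2})+\tau(v_{3})$ in $I_{\yes}$, and by independent uniform samples $b'$ in $I_{\no}$. I would define $\cE$ as a purely structural property of the matching sequence under which, for every small $U$, the linear functionals $\tau \mapsto \tau(v_{1})+\tau(v_{2})+\tau(v_{3})$ indexed by the constraints in $\Sub_{U}[I_{\yes}]$ are $\mathsf{G}$-linearly independent, so that their $b$-values are jointly uniform and independent, matching the distribution of the $b'$-values in $\Sub_{U}[I_{\no}]$.

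Specifically, let $H := M_{1}\cup\cdots\cup M_{d}$ and let $\cE$ be the event that every subset of at most $K := d\delta n/3$ rows of the $0/1$ incidence matrix of $H$ is $\mathsf{G}$-linearly independent (equivalently, $\bZ/p\bZ$-independent for every prime $p$ dividing $|\mathsf{G}|$). Since each variable has degree exactly $d$ in $H$, the sub-hypergraph $\Sub_{U}$ contains at most $d|U|/3 \leq K$ hyperedges whenever $|U| \leq \delta n$, so under $\cE$ the map sending $\tau|_{U}$ to the tuple of $b$-values of $\Sub_{U}[I_{\yes}]$ is surjective onto the corresponding power of $\mathsf{G}$, transferring the uniformity of $\tau|_{U}$ to the $b$-values.

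To bound $\Pr{\cE^{c}}$, I would union-bound over subset size $k \in [2, K]$ and nonzero coefficient vectors $(\alpha_{i})\in \mathsf{G}^{k}$: a $k$-subset $S$ of hyperedges witnesses a dependency for $(\alpha_{i})$ iff $\sum_{i\in S:\, v\in e_{i}} \alpha_{i}=0$ for every vertex $v$. The key structural observation is that within a single matching the hyperedges are vertex-disjoint, and their rows are automatically $\mathsf{G}$-linearly independent, so any dependent configuration must draw hyperedges from at least two matchings, and each cross-matching coincidence of a vertex contributes an $O(1/n)$ factor. Weighting $k$-subsets by the combinatorics of their cross-matching contacts and summing over coefficient vectors yields a bound of the form $\sum_{k\geq 2}(C(d,|\mathsf{G}|)/n)^{k}\cdot \mathrm{poly}(n)$ that is $o(1)$ provided $\delta$ is a sufficiently small constant depending on $d$ and $|\mathsf{G}|$.

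Finally, conditional on any fixed matching realization in $\cE$, the scopes in $\Sub_{U}[I_{\yes}]$ and $\Sub_{U}[I_{\no}]$ coincide, the $b'$-values in $\Sub_{U}[I_{\no}]$ are independent uniform by construction, and the $b$-values in $\Sub_{U}[I_{\yes}]$ are uniform by the surjectivity argument above; hence the conditional distributions of $\Sub_{U}[I_{\yes}]$ and $\Sub_{U}[I_{\no}]$ agree. Since $\cE$ depends only on the matchings, averaging over realizations in $\cE$ preserves this equality, yielding the lemma. The main obstacle is the probabilistic estimate on $\Pr{\cE^{c}}$: one has to simultaneously control the combinatorial entropy of $k$-subsets and coefficient vectors and exploit the structural scarcity of cross-matching vertex coincidences, which becomes especially delicate when $|\mathsf{G}|$ has small prime factors (where the $\bZ/p\bZ$-dependency condition is easiest to satisfy and the associated bad configurations are most numerous).
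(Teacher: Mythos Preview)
Your overall strategy matches the paper's: define $\cE$ as a structural event on the matchings, show $\Pr{\cE}=1-o(1)$, and argue that on any realization in $\cE$ the $b$-values in $\Sub_U[I_{\yes}]$ are jointly uniform, hence identically distributed with the $b'$-values in $\Sub_U[I_{\no}]$. The difference lies in the choice of $\cE$ and how its probability is bounded.

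You take $\cE$ to be linear independence over $\mathsf{G}$ of every small set of incidence rows, and propose a union bound over edge subsets and coefficient vectors in $\mathsf{G}^k$; as you yourself note, this is the delicate step, and it drags the arithmetic of $\mathsf{G}$ into the probabilistic estimate. The paper instead uses a purely combinatorial \emph{local sparsity} event: every $U$ with $|U|\le\delta n$ contains strictly fewer than $\tfrac{2}{3}|U|$ hyperedges across the $d$ matchings. This is group-agnostic, and its high probability follows from a short first-moment bound over subsets $U$ alone---no coefficient vectors, no dependence on $|\mathsf{G}|$. The threshold $\tfrac{2}{3}$ is chosen so that, under sparsity, one can iteratively peel hyperedges from any small $U$, at each step finding one incident to a vertex of degree~$1$ in what remains; this produces an ordering in which every hyperedge has a ``fresh'' vertex, and hence the map $\tau\mapsto(b_e)_e$ is surjective over \emph{any} abelian group. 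The paper's event is in fact strictly stronger than yours (a minimal $\mathsf{G}$-dependent set of $k$ rows must have every incident vertex appearing at least twice, hence spans at most $3k/2$ vertices, which local sparsity forbids), yet it is easier to establish and delivers the needed uniformity without any group-specific analysis.
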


It is not hard to deduce \Cref{thm:BOT} from \Cref{prop:Dpair} and \Cref{lem:3LIN_exists_d,lem:3LIN_indistinguishable}. For the sake of completeness, proofs of \Cref{lem:3LIN_exists_d,lem:3LIN_indistinguishable} will be sketched in \Cref{sec:random_regular_hypergraph}.

\section{The Reduction Map}

We now arrive at the main technical part of this paper. In this section, we construct the YES and NO distributions for \Cref{thm:main_repetition} using the distributions $\Dyes$ and $\Dno$ introduced in \Cref{sec:BOT}. The construction for the case where $(D, \Gamma)$ is a core is presented in \Cref{subsec:expander_gadget,subsec:construction_in_core} and analyzed in \Cref{subsec:analysis}. Finally, in \Cref{subsec:general_conclusion}, we complete the proof of \Cref{thm:main_repetition} by extending the argument to the general case.

\subsection{The Expander Gadget}\label{subsec:expander_gadget}

Recall that our universal-algebraic lemma (\Cref{lem:main_algebra}) guarantees the simulation of ``linear equation constraints'' only by $\Gamma \cup \Const_{D}$, rather than by $\Gamma$ alone. Consequently, we must first find a way to simulate the constant relations using only the relations in $\Gamma$. Note that constant relations may not be generated by $\Gamma$ in the sense of \Cref{def:relational_clone}, so different techniques are required. Interestingly, the original work of \cite{bogdanov2002lower} encountered the same difficulty when proving the hardness of testing graph 3-colorability.\footnote{We now view this need for constant relations in \cite{bogdanov2002lower} as a special case of the general phenomenon described in \Cref{lem:main_algebra}, since 3-colorability is an example of an unbounded-width CSP.} To address this issue, \cite{bogdanov2002lower} uses an ``expander gadget'' to fix the assignments of certain variables --- a technique that also plays a crucial role in our proof.

In contrast to \cite{bogdanov2002lower}, where ordinary expander graphs suffice, our more general setting requires the use of \emph{random regular hypergraphs}, defined formally below.

\begin{definition}\label{def:random_regular_hypergraph}
Let \(n,\ell\) be positive integers, and let $D$ be a finite set of cardinality $k\geq 2$. Let \(\mathcal{H}_{D,n,\ell}\) denote the distribution of the random \(\ell\)-regular \(k\)-partite hypergraph\footnote{Multi-edges are allowed.} on the vertex set \([n]\times D\), whose edge multiset is the union of \(\ell\) independent perfect matchings chosen uniformly at random from \(\mathcal{M}_{D,n}\).
\end{definition}

Similarly to \Cref{def:Dyes}, the hyperedges of the perfect matchings sampled in \Cref{def:random_regular_hypergraph} are exactly the variable tuples on which we impose constraints.

\begin{definition}\label{def:I_of_RH}
Let $D$ be a finite set of cardinality $k\geq 2$, and let $R:D^{D}\rightarrow\{0,1\}$ be a relation. Any $d$-regular $k$-partite hypergraph $H$ in the support of $\cH_{D,n,\ell}$ uniquely defines a $\CSP(\{R\},[n]\times D)$-instance $\cI[R,H]$, by viewing each hyperedge $\bfv=(v_{x})_{x\in D}$ as a constraint $(\bfv,R)$.
\end{definition}

Ideally, we would like to use a relation $R : D^{D} \to \{0,1\}$ that evaluates to $1$ on a tuple $y = (y_{x})_{x \in D} \in D^{D}$ only when $y_{x} = x$ for all $x \in D$. In this case, the instance $\cI[R, H]$ would admit a unique satisfying assignment that assigns to each variable $(i, x) \in [n] \times D$ its second coordinate, yielding a gadget that effectively simulates constant relations across many variables. However, for many templates $(D, \Gamma)$, such a strict relation cannot be generated in the sense of \Cref{def:relational_clone}. Instead, we settle for weaker relations: it suffices to use a relation $R : D^{D} \to \{0,1\}$ whose satisfying assignments are permutations of $D$. Such relations are formally defined below.

\begin{definition}
Let $D$ be a finite set. A relation $R:D^{D}\rightarrow \{0,1\}$ is called \emph{sub-unique} if any tuple $y=(y_{x})_{x\in D}\in D^{D}$ such that $R(y)=1$ has pairwise distinct components, i.e. $y_{x}\neq y_{x'}$ for any distinct elements $x,x'\in D$.
\end{definition}

Intuitively speaking, when $R:D^{D}\rightarrow\{0,1\}$ is sub-unique and the hypergraph $H$ has a good ``expansion'' property, any assignment that nearly satisfies $\cI[R,H]$ must assign to almost all variables $(i,x)\in [n]\times D$ the value $\pi(x)$, for some permutation $\pi:D\rightarrow D$. The next lemma asserts that a random regular hypergraph $H$ has this desired ``expansion'' property with high probability. Therefore, the instance $\cI[R,H]$ can serve as a gadget that effectively simulates constant relations across many variables, up to only a global permutation.

\begin{lemma}\label{lem:expander_gadget}
Let $D$ be a finite set with at least $2$ elements, and let $R:D^D\rightarrow \{0,1\}$ be a sub-unique relation. For any $\varepsilon\in \big(0,\frac{1}{10|D|}\big]$, there exists a constant integer $\ell\geq 1$ such that with probability $1-\exp(-\Omega(n))$, a random hypergraph $H\sim \cH_{D,n,\ell}$ satisfies the following. For any assignment $\tau:[n]\times D\rightarrow D$ to the instance $\cI[R,H]$ with $\val_{\cI[R,H]}(\tau)\geq 1-\varepsilon$, there exists a tuple $y\in D^{D}$ such that $R(y)=1$ and
\begin{equation}\label{eq:close_to_permutation}
\Pru{(i,x)\in [n]\times D}{\tau\big((i,x)\big)\neq y_{x}}\leq 2\varepsilon.
\end{equation}
\end{lemma}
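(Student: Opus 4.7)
The plan is to combine a uniform discrepancy bound for the random regular hypergraph with a structural rounding argument. Write $S := \{y \in D^D : R(y) = 1\}$; since $R$ is sub-unique, every $y \in S$ is a bijection $D \to D$. Given $\tau : [n] \times D \to D$, let $p^z_x := |\{i : \tau((i,x)) = z\}|/n$, and set $q(\tau) := \sum_{y \in S} \prod_{x \in D} p^{y_x}_x$; this is the expected fraction of hyperedges of $H \sim \cH_{D,n,\ell}$ that $\tau$ satisfies.

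First, I would prove a hypergraph mixing bound: for $\ell$ sufficiently large in terms of $|D|$ and $\varepsilon$, with probability $1 - \exp(-\Omega(n))$, $H$ satisfies, for every tuple of subsets $(S_x)_{x \in D}$ with $S_x \subseteq [n] \times \{x\}$,
\[
\left|\,e_H(S_1, \ldots, S_{|D|}) - \ell n \prod_{x \in D} \tfrac{|S_x|}{n}\,\right| \le \eta \cdot \ell n,
\]
where $\eta = \eta(\ell) \to 0$ as $\ell \to \infty$, and $e_H(\cdot)$ counts hyperedges meeting each $S_x$. This follows from Chernoff concentration on each of the $\ell$ independent matchings comprising $H$, together with a union bound over the $n^{O(|D|)}$ possible profiles $(|S_x|)_x$. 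Applied with $S_x = \{(i,x) : \tau((i,x)) = y_x\}$ for each $y \in D^D$ and summed over $y \in S$ (the contributions are disjoint by sub-uniqueness of $R$), this yields uniformly over $\tau$:
\[
\val_{\cI[R,H]}(\tau) \;=\; q(\tau) \;\pm\; |S|\eta.
\]
Choosing $\ell$ large enough that $|S|\eta \le \varepsilon / |D|^{3}$ (say), the hypothesis $\val(\tau) \ge 1 - \varepsilon$ then gives $q(\tau) \ge 1 - \varepsilon - \varepsilon/|D|^{3}$.

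The rounding step interprets $q(\tau) = \Pr[V \in S]$, where $V = (V_x)_{x \in D}$ has independent coordinates with $V_x \sim p^{\cdot}_x$. Since $S$ consists of bijections, $\Pr[V_x = V_{x'}] = \sum_z p^z_x p^z_{x'} \le 1 - q(\tau)$ for every $x \ne x'$. Combined with the identity $\sum_x \sum_z (p^z_x)^2 + 2\sum_{x<x'} \sum_z p^z_x p^z_{x'} = \sum_z m_z^2$, where $m_z := \sum_x p^z_x$ has $\sum_z m_z = |D|$ and thus $\sum_z m_z^2 \ge |D|$ by Cauchy--Schwarz, this gives $\sum_x \sum_z (p^z_x)^2 \ge |D| - O(|D|^2 (1 - q(\tau)))$. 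Setting $y^*_x := \arg\max_z p^z_x$ and $\mu_x := p^{y^*_x}_x$, the inequality $\sum_z (p^z_x)^2 \le \mu_x$ forces $\sum_x(1 - \mu_x) \le O(|D|^2 (1-q(\tau)))$, so in particular each $\mu_x$ is close to $1$. A pairwise argument ($y^*_x = y^*_{x'}$ would imply $\mu_x \mu_{x'} \le \Pr[V_x = V_{x'}] \le 1 - q(\tau)$, contradicting each $\mu$ being near $1$) forces the $y^*_x$ to be distinct, so $y^* \in D^D$ is a bijection. Finally, $\Pr[V = y^*] = \prod_x \mu_x$ being near $1$ while $\Pr[V \notin S] \le 1 - q(\tau)$ is near $0$ forces $y^* \in S$; the concluding computation yields $\Pru{(i,x)}{\tau((i,x)) \ne y^*_x} = (1/|D|)\sum_x (1 - \mu_x) \le 2\varepsilon$.

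The main obstacle is in the rounding step: a naive pairwise-collision analysis only produces a closeness bound of order $|D|\varepsilon$, not the sharp $2\varepsilon$ claimed. Recovering the correct constant requires taking $\ell$ as a large polynomial in $|D|$ and $1/\varepsilon$ so that $1 - q(\tau)$ is comfortably smaller than $\varepsilon/|D|^2$, thereby making the $O(|D|^2(1-q))$ losses in the Cauchy--Schwarz step negligible compared with $\varepsilon$. The hypothesis $\varepsilon \le 1/(10|D|)$ plays a role here by ensuring both dichotomies---``$y^*$ is a bijection'' and ``$y^* \in S$''---are decisive.
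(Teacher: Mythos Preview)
Your overall architecture (concentrate $\val_{\cI[R,H]}(\tau)$ around $q(\tau)$ uniformly in $\tau$, then round) is the same as the paper's, but both steps as you describe them contain real gaps.

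\textbf{The discrepancy step.} The claimed uniform mixing bound cannot be obtained by ``Chernoff on each of the $\ell$ matchings together with a union bound over the $n^{O(|D|)}$ size profiles $(|S_x|)_x$.'' You need the bound to hold, for a \emph{single} realization of $H$, simultaneously for every subset tuple that can arise from some assignment $\tau$; there are $|D|^{n|D|}$ such assignments, and two subset tuples with identical size profiles can have very different edge counts in the same $H$. So the union bound must be over exponentially many events, and the per-event concentration must be $\exp(-\Omega(n))$. Moreover, the $\ell$ matchings are independent of one another, but each contribution $e_{M}(S_1,\dots,S_{|D|})$ ranges over $[0,n]$, so Hoeffding across the $\ell$ matchings only gives $\exp(-\Omega(\eta^2\ell))$, which does not beat an exponential union bound. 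The paper handles this by applying Azuma--Hoeffding to the edge-revealing martingale for each fixed $\tau$ (bounded differences of order $1/(\ell n)$ over $\ell n$ steps, yielding $\exp(-\varepsilon^2\ell n/8)$), and then union-bounding over all $|D|^{n|D|}$ assignments.

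\textbf{The rounding step.} Your Cauchy--Schwarz computation only gives $\tfrac{1}{|D|}\sum_x(1-\mu_x)=O(|D|(1-q(\tau)))$, which is $O(|D|\varepsilon)$, not $2\varepsilon$. Your proposed fix --- choose $\ell$ large so that $1-q(\tau)\le \varepsilon/|D|^2$ --- cannot work: $q(\tau)$ is a function of $\tau$ alone and does not depend on $\ell$, and since $\val(\tau)$ may equal $1-\varepsilon$ exactly, the best you can deduce is $1-q(\tau)\le \varepsilon+|S|\eta$, never below $\varepsilon$. The paper gets the sharp constant by a different route: it observes that for every coordinate $t$, conditioning on $(\tau(v_x))_{x\neq t}$ being distinct forces $\tau(v_t)$ to hit one specific value, so $\text{Perm-val}(\tau)\le \mu_t$ for each $t$ individually (this is the content of \Cref{lem:perm-align}). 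It then does a case split at the threshold $\text{Perm-val}(\tau)=1-2\varepsilon$: above it, each $1-\mu_t\le 2\varepsilon$ directly, and the nearby permutation $\pi$ either lies in $R^{-1}(1)$ (done) or forces $q(\tau)\le 1-(1-2\varepsilon)^{|D|}\ll 1-\varepsilon$; below it, $q(\tau)\le\text{Perm-val}(\tau)<1-2\varepsilon$. Either way, for a ``bad'' $\tau$ one has $q(\tau)\le 1-2\varepsilon$, leaving an $\varepsilon$-gap for the concentration step.
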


A proof of \Cref{lem:expander_gadget} is given in \Cref{sec:random_regular_hypergraph}.

\subsection{Construction in the Core Case}\label{subsec:construction_in_core}

Recall that the expander gadget constructed in \Cref{subsec:expander_gadget} relies on the availability of a sub-unique relation. However, some CSP templates cannot generate any sub-unique relation (in the sense of \Cref{def:relational_clone}). Fortunately, when the template $(D, \Gamma)$ is a core (see \Cref{def:core}), the ``Endomorphism'' relation provides a natural sub-unique relation that can be generated from $\Gamma$, as established in \Cref{prop:End_in_clo}.\footnote{The idea of using the Endomorphism relation is due to \cite{dalmau2013robust}.}

In this subsection, we construct the YES and NO distributions for \Cref{thm:main_repetition} in the case the template is a core. Throughout this subsection, we fix a repetition-closed relational structure $(D,\Gamma)$, and assume that it is a core and has unbounded width. In particular, this implies $|D|\geq 2$. 

\subsubsection{Preparatory Materials}\label{subsubsec:prepatory}

We begin by gathering several preparatory materials that we are going to use in the construction. A checklist of these materials is presented below.

\begin{enumerate}[label=(\arabic*)]
\item By applying \Cref{lem:main_algebra}, we obtain a subset \(D'\subseteq D\), an Abelian group \(\mathsf{G}\), and a surjective map \(\varphi:D'\rightarrow \mathsf{G}\). For any $b\in \mathsf{G}$, we let $R_{b}:D^{3}\rightarrow\{0,1\}$ be the relation defined by
\[R_{b}(x_{1},x_{2},x_{3})=1\quad\text{if and only if}\quad
x_{1},x_{2},x_{3}\in D'\text{ and }
S^{\mathsf{G}}_{b}\big(\varphi(x_{1}),\varphi(x_{2}),\varphi(x_{3})\big)=1.\]

\item The conclusion of \Cref{lem:main_algebra} ensures that \(R_{b}\in\clo{\Gamma\cup\Const_{D}}\) for all $b\in \mathsf{G}$. Let $r_{1}$ be a positive integer such that for each $b\in \mathsf{G}$, the relation \(R_{b}\) is simulated by an instance \(\cI_{b}\in \CSP(\Gamma\cup \Const_{D})\) with at most \(r_{1}\) secondary variables and at most \(r_{1}\) constraints (see \Cref{def:relational_clone}). Let $\cI_{b}^{\natural}\in \CSP(\Gamma)$ be the instance obtained by removing all constant constraints from $\cI_{b}$. 

\item We then apply \Cref{lem:3LIN_exists_d} and obtain a positive integer $d$ such that a random $\CSP(\TLin_{\mathsf{G}})$-instance from $\Dno^{\mathsf{G},n,d}$ has value at most $2/3$ with high probability. 

\item By \Cref{prop:End_in_clo}, for some positive integer $r_{2}$, the relation $\End_{\Gamma}$ is simulated by an instance $\cI_{\End}\in \CSP(\Gamma)$ with at most $r_{2}$ secondary variables and at most $r_{2}$ constraints, in the sense of \Cref{def:relational_clone}. 

\item Apply \Cref{lem:expander_gadget} with $\varepsilon=1/(10 r_{1}|D|)$ and $R=\End_{\Gamma}$,\footnote{The relation $\End_{\Gamma}:D^{D}\rightarrow\{0,1\}$ is sub-unique because $(D,\Gamma)$ is a core (see \Cref{def:core}).} which yields a positive integer $\ell$ and a hypergraph $H\in\supp\left(\cH_{D,r_{1}n,\ell}\right)$ satisfying the property stated in \Cref{lem:expander_gadget}.

\end{enumerate}

\subsubsection{The Construction Procedure}\label{subsubsec:construction}

Given a \(\CSP\big(\TLin_{\mathsf{G}}, [n]\times [3]\big)\)-instance \(I\in \supp\big(\mathcal{D}_{\text{yes}}^{\mathsf{G},n,d}\big)\cup\supp\big(\mathcal{D}_{\text{no}}^{\mathsf{G},n,d}\big)\), we transform it into an instance \(\mathfrak{T}[I]\in \CSP(\Gamma)\) as follows.

\paragraph{Step 1: variable sets.}
The variable set of $\mathfrak{T}[I]$ is partitioned into 4 parts:
\begin{itemize}
  \item The set of \emph{original variables} \(V_{\mathrm{orig}}=[n]\times [3]\).
  \item The set of \emph{constant variables} \(V_{\mathrm{const}}=[n]\times [r_{1}]\times D\).
  \item The first set of \emph{auxiliary variables} \(V_{\mathrm{aux}}^{(1)}=[d]\times[n]\times[r_{1}]\).
  \item The second set of \emph{auxiliary variables} \(
  V_{\aux}^{(2)}=[\ell]\times [n]\times [r_{1}]\times [r_{2}].
  \)
\end{itemize}

\paragraph{Step 2: replacing constraints of $I$.} Initially, \(I\) is a \(d\)-regular \(\CSP(\TLin_{\mathsf{G}})\)-instance on \(V_{\orig}=[n]\times [3]\). For each original constraint $\big((v_{1},v_{2},v_{3}),S^{\mathsf{G}}_{b}\big)$ in $I$, we would like to replace it with a copy of the $\CSP(\Gamma\cup\Const_{D})$-instance $\cI_{b}$, which simulates $R_{b}$. This copied instance cannot stand alone but must be incorporated as a part of $\mathfrak{T}[I]$. In Step 3 below, we describe how to assign its variables to \(V_{\mathrm{const}}\) and \(V_{\mathrm{aux}}^{(1)}\).

\paragraph{Step 3: variable accommodation.} Suppose \(C=\big((v_{1},v_{2},v_{3}),S^{\mathsf{G}}_{b}\big)\) is the \(i\)-th constraint incident to \(v_{1}\) in \(I\). By assumption, \(\cI_{b}\) has at most \(r_{1}\) secondary variables and \(r_{1}\) constraints. Suppose $v_{1}=(j,1)$, where $j\in [n]$.
We allocate:
\[
\{j\}\times [r_{1}]\times D\subseteq V_{\mathrm{const}}\quad\text{and}\quad
\{i\}\times\{j\}\times[r_{1}]\subseteq V_{\mathrm{aux}}^{(1)}
\]
for these secondary variables.

Variables in \(\cI_{b}\) subject to constant constraints\footnote{These do not include the primary variables \(v_{1},v_{2},v_{3}\), since \(\cI_{b}\) simulates \(R\) on these variables and \(R\) does not fix any of them.} are identified with variables in \(\{j\}\times [r_{1}]\times D\subseteq V_{\mathrm{const}}\), where the choices in the third component correspond to which constants the variables are fixed to (the choices in the second component may be arbitrary). Other variables of \(\cI_{b}\) (neither primary nor constant-constrained) are represented by variables in \(\{i\}\times\{j\}\times[r_{1}]\subseteq V_{\mathrm{aux}}^{(1)}\). 

Having specified the variable accommodations, we can now place a copy of $\cI_{b}^{\natural}$ in $\mathfrak{T}[I]$, in replacement of the constraint $C=\big((v_{1},v_{2},v_{3}),S^{\mathsf{G}}_{b}\big)$ in $I$. Note that here we put in $\mathfrak{T}[I]$ only the non-constant constraints of $\cI_{b}$; the constant constraints will be ``simulated'' by the addition of an expander gadget in Steps 4 and 5. The set of constraints in this copy of $\cI_{b}^{\natural}$, as a subset of the constraint set of the eventual transformed instance $\mathfrak{T}[I]$, will be denoted by $\cQ_{C}$.

\paragraph{Step 4: expander gadget.} The constant-variable set \(V_{\mathrm{const}}=[n]\times [r_{1}]\times D\) can be viewed as a \(|D|\)-partite vertex set, with each part containing \(r_{1}n\) vertices (partitioned according to the third component). We may therefore use the expander gadget $H\in \supp\left(\cH_{D,r_{1}n,\ell}\right)$ (prepared in \Cref{subsubsec:prepatory}), taking $V_{\const}$ as its vertex set. For each hyperedge \(\mathbf{v}=(v_{x})_{x\in D}\) of \(H\), we would like to add an \(\End_{\Gamma}\)-constraint \((\mathbf{v},\End_{\Gamma})\) to \(\mathfrak{T}[I]\) (see \Cref{def:End_relation}).  
In the language of \Cref{def:I_of_RH}, this corresponds to placing a copy of \(\mathcal{I}[\End_{\Gamma},H]\) on \(V_{\mathrm{const}}\). However, since the relation \(\End_{\Gamma}\) may not belong to \(\Gamma\) itself, we must use the gadget instance \(\cI_{\End}\) to simulate \(\End_{\Gamma}\); this may introduce new variables, which must also be accommodated.

\paragraph{Step 5: variable accommodation (again).} Fix some \(z\in D\) as a “special element.”  
Suppose \(\mathbf{v}=(v_{x})_{x\in D}\) is the \(i\)-th hyperedge incident to \(v_{z}=(j,t,z)\) in \(H\).  
We create a copy of \(\cI_{\End}\) simulating \(\End_{\Gamma}\) on \(\mathbf{v}\), and allocate  
\[
\{i\}\times \{j\}\times \{t\}\times [r_{2}]\subseteq V_{\mathrm{aux}}^{(2)}
\]
for the secondary variables in this copy of \(\cI_{\End}\). The set of constraints in this copy of $\cI_{\End}$, as a subset of the constraint set of the eventual transformed instance $\mathfrak{T}[I]$, will be denoted by $\cQ_{\bfv}$.  

We repeat the process for every hyperedge \(\mathbf{v}\) of \(H\), thereby effectively placing a copy of \(\mathcal{I}[\End_{\Gamma},H]\) on \(V_{\mathrm{const}}\).

\paragraph{Conclusion.} We have now completed the construction of \(\mathfrak{T}[I]\).  
The constraint set of the final instance \(\mathfrak{T}[I]\) is the disjoint union of $\cQ_{C}$'s and $\cQ_{\bfv}$'s, where $C$ ranges in the constraint set of the original \(\CSP(\TLin_{\mathsf{G}})\)-instance \(I\) and $\bfv$ ranges in the hyperedge set of the expander gadget $H$. There are $dn$ constraints in $I$ and $r_{1}\ell n$ hyperedges in $H$; each $\cQ_{C}$ contains at most $r_{1}$ constraints and each $\cQ_{\bfv}$ contains at most $r_{2}$ constraints. Consequently, the total number of constraints in \(\mathfrak{T}[I]\) is at most  
\[
r_{1}\cdot dn+r_{2}\cdot r_{1}\ell n = (d+\ell r_{2})r_{1}n,
\]  
and at least \(dn+\ell r_{1}n\). To obtain degree bounds\footnote{The degree of a variable is the number of constraints it participates in.} on variables, we deal with the 4 parts of variable set separately:

\begin{itemize}
\item Every variable in $V_{\orig}$ is involved in at most $d$ of the sets $\cQ_{C}$'s, and so has degree at most $dr_{1}$.
\item Every variable $(j,t,x)\in V_{\const}$ is involved in at most $d$ of the sets $\cQ_{C}$'s (namely, for those $C$ that involve the variable $(j,1)\in [n]\times [3]$) and at most $\ell$ of the sets $\cQ_{\bfv}$'s. Therefore, the variable $(j,t,x)$ has degree at most $dr_{1}+\ell r_{2}$.

\item Every variable in $V_{\aux}^{(1)}$ is involved in at most 1 of the sets $\cQ_{C}$'s, and so has degree at most $r_{1}$.

\item Every variable in $V_{\aux}^{(2)}$ is involved in at most 1 of the sets $\cQ_{\bfv}$'s, and so has degree at most $r_{2}$.
\end{itemize}
Therefore, every variable in the instance $\mathfrak{T}[I]$ has degree at most $dr_{1}+\ell r_{2}$.

\subsection{Analysis of the Construction}\label{subsec:analysis}

Having completed the construction of the reduction map, we next prove its completeness and soundness.

\begin{lemma}[Completeness]\label{lem:completeness}
For any $\CSP\big(\TLin_{\mathsf{G}},[n]\times [3]\big)$-instance $I$ in the support of $\Dyes^{\mathsf{G},n,d}$, the transformed instance $\mathfrak{T}[I]\in \CSP(\Gamma)$ is satisfiable.
\end{lemma}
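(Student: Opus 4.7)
The plan is to build an explicit satisfying assignment $\sigma$ of $\mathfrak{T}[I]$ piece by piece over the four variable blocks, starting from a satisfying assignment $\tau:V_{\orig}\to \mathsf{G}$ of $I$ (which exists by \Cref{prop:Dpair}). Fix a right inverse $\psi:\mathsf{G}\to D'$ of the surjection $\varphi$ and set $\sigma(v)=\psi(\tau(v))$ for $v\in V_{\orig}$. For each original constraint $C=((v_{1},v_{2},v_{3}),S_{b}^{\mathsf{G}})$ of $I$, the identity $\tau(v_{1})+\tau(v_{2})+\tau(v_{3})=b$ forces $\varphi(\sigma(v_{1}))+\varphi(\sigma(v_{2}))+\varphi(\sigma(v_{3}))=b$ with each $\sigma(v_{i})\in D'$, so by definition $R_{b}(\sigma(v_{1}),\sigma(v_{2}),\sigma(v_{3}))=1$ on the primary variables of the copy of $\cI_{b}$ placed for $C$.

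Next, on $V_{\const}=[n]\times[r_{1}]\times D$, I set $\sigma(j,t,x)=x$. By the variable accommodation of Step 3 of the construction, every constant-constrained secondary variable of every $\cI_{b}$-copy is identified with a $V_{\const}$-element whose third coordinate equals the constant it must take; so this single uniform choice simultaneously satisfies every absorbed constant constraint and is consistent across all copies of $\cI_{b}$ that share a common $V_{\const}$-block. For each $C$ as above, since $\cI_{b}$ simulates $R_{b}$ and $R_{b}(\sigma(v_{1}),\sigma(v_{2}),\sigma(v_{3}))=1$, there is a satisfying assignment of $\cI_{b}$ agreeing with $\sigma$ on the primary variables; its values on constant-constrained secondaries are forced and already match $\sigma|_{V_{\const}}$, while its values on the remaining secondaries give $\sigma$ on the dedicated block of $V_{\aux}^{(1)}$ and make every constraint in $\cQ_{C}$ (the copy of $\cI_{b}^{\natural}$) hold.

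Finally, for each hyperedge $\bfv=(v_{x})_{x\in D}$ of $H$, the $|D|$-partiteness forces $v_{x}$ to lie in the $x$-slice of $V_{\const}$, so $(\sigma(v_{x}))_{x\in D}$ is literally the identity map on $D$, a trivial relational endomorphism; thus $\End_{\Gamma}$ is satisfied on the primary variables of the associated copy of $\cI_{\End}$, and its simulation property lets me extend $\sigma$ to the dedicated block of $V_{\aux}^{(2)}$ so that every constraint in $\cQ_{\bfv}$ holds. No step appears genuinely difficult; the argument is essentially bookkeeping powered by (i) the uniform choice on $V_{\const}$ being simultaneously compatible with every $\cI_{b}$-copy by design of Step 3, (ii) disjointness of $V_{\aux}^{(1)}$- and $V_{\aux}^{(2)}$-blocks assigned to different gadget copies eliminating any further conflicts, and (iii) the identity map of $D$ always satisfying $\End_{\Gamma}$.
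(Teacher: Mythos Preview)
Your proposal is correct and follows essentially the same approach as the paper: lift the satisfying assignment $\tau$ of $I$ to $V_{\orig}$ via a section of $\varphi$, set $\sigma(j,t,x)=x$ on $V_{\const}$, and then use the simulation properties of $\cI_{b}$ and $\cI_{\End}$ (with the identity endomorphism for the latter) to fill in $V_{\aux}^{(1)}$ and $V_{\aux}^{(2)}$. If anything, your write-up is slightly more explicit than the paper's in checking that the forced values on the constant-constrained secondaries match the uniform choice on $V_{\const}$ across all shared gadget copies.
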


\begin{proof}
By \Cref{def:Dyes}, there exists an assignment $\tau:[n]\times [3]\rightarrow \mathsf{G}$ that satisfies $I$. We provide an assignment 
\[\widetilde{\tau}:V_{\orig}\cup V_{\const}\cup V_{\aux}^{(1)}\cup V_{\aux}^{(2)}\longrightarrow D\] 
that satisfies $\mathfrak{T}[I]$, as follows.

\begin{enumerate}[label=(\arabic*)]
\item For any original variable $v\in V_{\orig}=[n]\times [3]$, we pick $\widetilde{\tau}(v)$ to be an arbitrary element in the inverse image $\varphi^{-1}(\tau(v))$. Such an element exists because $\varphi$ is surjective.

\item For any constant variable $(j,t,x)\in V_{\const}=[n]\times [r_{1}]\times D$, we assign $\widetilde{\tau}\big((j,t,x)\big)=x$.
\item For each original constraint $C=\big((v_{1},v_{2},v_{3}),S^{\mathsf{G}}_{b}\big)$ in $I$, we know that there is an assignment to the instance $\cI_{b}$ (prepared in \Cref{subsubsec:prepatory}) that is consistent with $\tau$. We use this assignment to define \(\widetilde{\tau}(v)\) for every \(v\in V_{\mathrm{aux}}^{(1)}\) involved in $\cQ_{C}$.  

Any remaining variables in \(V_{\mathrm{aux}}^{(1)}\) not involved in any $\cQ_{C}$ may be assigned arbitrarily by \(\widetilde{\tau}\).

\item Since the $\CSP(\Gamma)$-instance $\cI_{\End}$ (prepared in \Cref{subsubsec:prepatory}) simulates the endomorphism relation of $\Gamma$, and since the identity map $\mathrm{id}:D\rightarrow D$ is an endomorphism, there is an assignment to $\cI_{\End}$ that restricts to the identity assignment on the primary variables. For each hyperedge $\bfv$ of $H$, we use this assignment to define $\widetilde{\tau}(v)$ for every $v\in V_{\mathrm{\aux}}^{(2)}$ involved in $\cQ_{\bfv}$. 

Any remaining variables in \(V_{\mathrm{aux}}^{(2)}\) not involved in any $\cQ_{\bfv}$ may be assigned arbitrarily by \(\widetilde{\tau}\).
\end{enumerate}

Items (1) and (3) above together ensure that $\widetilde{\tau}$ satisfies all constraints in the sets $\cQ_{C}$'s. Due to item (2), $\widetilde{\tau}$ restricts to the identity assignment on any hyperedge $\bfv$ of $H$, and item (4) then ensures that $\widetilde{\tau}$ also satisfies all constraints in the sets $\cQ_{\bfv}$'s. Therefore, $\widetilde{\tau}$ satisfies all constraints in $\mathfrak{T}[I]$.
\end{proof}

\begin{lemma}[Soundness]\label{lem:soundness}
For a random instance $\CSP\big(\TLin_{\mathsf{G}},[n]\times [3]\big)$-instance $I\sim \Dno^{\mathsf{G},n,d}$, the transformed instance $\mathfrak{T}[I]\in \CSP(\Gamma)$ has value at most $1-(10(d+\ell r_{2})r_{1}|D|)^{-1}$ with probability at least $1-o(1)$. Here, $o(1)$ denotes a term tending to 0 as $n\rightarrow\infty$.
\end{lemma}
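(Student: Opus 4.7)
The plan is a contrapositive argument. Suppose $\widetilde{\tau}: V \to D$ is an assignment to $\mathfrak{T}[I]$ with $\val_{\mathfrak{T}[I]}(\widetilde{\tau}) > 1 - \delta$ for $\delta = (10(d+\ell r_2) r_1 |D|)^{-1}$. From $\widetilde{\tau}$ I extract an assignment $\tau: [n]\times [3] \to \mathsf{G}$ with $\val_I(\tau) > 2/3$. Since \Cref{lem:3LIN_exists_d} together with the preparatory choice of $d$ makes $\val_I \leq 2/3$ hold with probability $1-o(1)$ over $I \sim \Dno^{\mathsf{G},n,d}$, the bad event $\{\val_{\mathfrak{T}[I]} > 1-\delta\}$ inherits the $o(1)$ probability.

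The first step uses the expander gadget to realign $\widetilde{\tau}$ on the constant variables. Since $\mathfrak{T}[I]$ has at most $(d + \ell r_2) r_1 n$ constraints, fewer than $n/(10|D|)$ of them are violated, and each violation lies in at most one of the $\ell r_1 n$ gadget copies $\cQ_\bfv$. Hence $\widetilde{\tau}|_{V_\const}$, viewed as an assignment of $\cI[\End_\Gamma,H]$, satisfies at least a $1 - 1/(10 r_1 |D|)$ fraction of the hyperedges. Applying \Cref{lem:expander_gadget} with $\varepsilon = 1/(10 r_1 |D|)$ (matching the value baked into $H$ in the preparatory step) produces an endomorphism $\pi$ of $(D,\Gamma)$ such that at most $n/5$ of the constant variables $(j, t, x)$ satisfy $\widetilde{\tau}((j, t, x)) \neq \pi(x)$. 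Here is where the core hypothesis enters: $\pi$ is automatically an automorphism and $\pi^{-1}$ is another endomorphism, so as a unary polymorphism $\pi^{-1}$ preserves every constraint of $\mathfrak{T}[I]$. Thus replacing $\widetilde{\tau}$ by $\widetilde{\tau}' := \pi^{-1} \circ \widetilde{\tau}$ keeps the satisfied-constraint set intact while transforming the condition $\widetilde{\tau}((j,t,x)) = \pi(x)$ into $\widetilde{\tau}'((j, t, x)) = x$.

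Now define $\tau(v) = \varphi(\widetilde{\tau}'(v))$ when $\widetilde{\tau}'(v) \in D'$, and arbitrarily otherwise. Call a constraint $C = ((v_1, v_2, v_3), S_b^{\mathsf{G}}) \in I$ with $v_1 = (j, 1)$ \emph{good} if (a) $\widetilde{\tau}'$ satisfies every constraint in $\cQ_C$ and (b) $\widetilde{\tau}'((j, t, x)) = x$ for all $(t, x) \in [r_1] \times D$. Under (a) and (b), the restriction of $\widetilde{\tau}'$ to the variables of the embedded copy of $\cI_b$ satisfies all of $\cI_b$ (the non-constant constraints by (a), the constant constraints by (b)), so the simulation property of \Cref{def:relational_clone} forces $R_b(\widetilde{\tau}'(v_1), \widetilde{\tau}'(v_2), \widetilde{\tau}'(v_3)) = 1$, which in turn forces $\tau$ to satisfy $C$. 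Counting bad $C$'s: at most $n/(10|D|)$ fail (a), a fraction at most $1/(10 d |D|)$ of $|I| = dn$; and at most $dn/5$ fail (b), because each of the at most $n/5$ misassigned constants $(j, t, x)$ affects only the at most $d$ constraints $C$ with $v_1 = (j, 1)$ (by $d$-regularity, \Cref{prop:Dpair}). Thus $\val_I(\tau) \geq 1 - 1/5 - 1/(10 d |D|) > 2/3$.

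I expect the subtlest point to be the calibration of the expander parameter $\varepsilon$: it must be small enough that the $2\varepsilon$ upper bound on the misassignment fraction, once multiplied by $|V_\const| = r_1 n |D|$ and by the factor $d$ (because each misassigned constant can spoil up to $d$ constraints of $I$), still leaves a strictly positive slack above the $2/3$ threshold; yet loose enough that the density of violated hyperedges in $\cI[\End_\Gamma, H]$ (which equals $\delta \cdot (d+\ell r_2) r_1 n / (\ell r_1 n)$) does not exceed $\varepsilon$. The preparatory choices $\varepsilon = 1/(10 r_1 |D|)$ and $\delta = (10(d+\ell r_2) r_1 |D|)^{-1}$ are rigged to make both inequalities hold simultaneously.
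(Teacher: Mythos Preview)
Your proposal is correct and follows essentially the same approach as the paper: both argue that a high-value assignment to $\mathfrak{T}[I]$, after composing with the inverse of the automorphism supplied by the expander gadget (\Cref{lem:expander_gadget}), yields an assignment to $I$ of value strictly above $2/3$, contradicting the choice of $d$ via \Cref{lem:3LIN_exists_d}. The structure, the counting of bad constraints via conditions (a) and (b), and the numerical constants all match the paper's argument almost line by line.
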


\begin{proof}
As guaranteed in \Cref{subsubsec:prepatory}, the instance $I$ has value at most $2/3$ with probability $1-o(1)$. We claim that whenever $\val_{I}\leq 2/3$, any assignment must violate at least $n/(10|D|)$ constraints in $\mathfrak{T}[I]$. Since the total number of constraints in $\mathfrak{T}[I]$ is at most $(d+\ell r_{2})r_{1}n$, this implies that 
\[\val_{\mathfrak{T}[I]}\leq 1-\frac{n/(10|D|)}{(d+\ell r_{2})r_{1}n}=1-\frac{1}{10 (d+\ell r_{2})r_{1}|D|},\]
as desired.

Assume on the contrary that $\val_{I}\leq 2/3$ and there is an assignment 
\[\widetilde{\tau}:V_{\orig}\cup V_{\const}\cup V_{\aux}^{(1)}\cup V_{\aux}^{(2)}\longrightarrow D\] 
violating less than $n/(10|D|)$ constraints of $\mathfrak{T}[I]$. Since $H$ has $\ell r_{1}n$ hyperedges $\bfv$, at most $1/(10r_1|D|)$ fraction of the sets $\cQ_{\bfv}$'s contain constraints violated by $\widetilde{\tau}$. Therefore, $\widetilde{\tau}$ violates at most $1/(10 r_{1}|D|)$ fraction of the constraints $(\bfv,\End_{\Gamma})$ simulated by $\cQ_{\bfv}$'s. In other words, we have
\[\val_{\cI(\End_{\Gamma},H)}\big(\widetilde{\tau}_{\,|V_{\const}}\big)\geq 1-\frac{1}{10 r_{1}|D|}.\]
Since $H$ satisfies the property stated in \Cref{lem:expander_gadget} (with respect to the sub-unique relation $\End_{\Gamma}$), it follows from the conclusion of \Cref{lem:expander_gadget} that there exists a relational endomorphism $\psi:D\rightarrow D$ of $(D,\Gamma)$ such that 
\begin{equation}\label{eq:errors_on_constant}
\Pru{(j,t,x)\in V_{\const}}{\widetilde{\tau}\big((j,t,x)\big)\neq \psi(x)}\leq \frac{1}{5 r_{1}|D|}.
\end{equation}

As $(D,\Gamma)$ is a core, $\psi^{-1}$ is also a relational endomorphism. Thus the assignment

\[\psi^{-1}\circ \widetilde{\tau}:V_{\orig}\cup V_{\const}\cup V_{\aux}^{(1)}\cup V_{\aux}^{(2)}\longrightarrow D\]
satisfies the same subset of constraints in $\mathfrak{T}[I]$ as $\widetilde{\tau}$ does. In particular, it violates at most $n/(10|D|)$ of these constraints. 

Now consider the assignment $\tau:[n]\times [3]\rightarrow \mathsf{G}$ defined as follows. For any $v\in V_{\orig}=[n]\times [3]$, let 
\[
\tau(v)=\begin{cases}
\varphi\big(\psi^{-1}(\widetilde{\tau}(v))\big), &\text{if }\psi^{-1}\big(\widetilde{\tau}(v)\big)\in D',\\
0 ,&\text{otherwise}.
\end{cases}
\]
It is easy to see that any original constraint $C=\big((v_{1},v_{2},v_{3}),S^{\mathsf{G}}_{b}\big)$ of $I$ is violated by $\tau$ only if either of the following happens:

\begin{enumerate}[label=(\arabic*)]
\item The assignment $\psi^{-1}\circ \widetilde{\tau}$ violates one of the constraints in $\cQ_{C}$.
\item For some constant variable $v=(j,t,x)\in V_{\const}$ involved in $\cQ_{C}$, it happens that $\psi^{-1}(\widetilde{\tau}(v))\neq x$.
\end{enumerate}

The second scenario above happens for at most $|V_{\const}|/(5r_{1}|D|)=n/5$ constant variables $v\in V_{\const}$, due to \eqref{eq:errors_on_constant}. Since every constant variable in $ V_{\const}$ is involved in at most $d$ of the sets $\cQ_{C}$'s, it follows that the second scenario happens for at most $dn/5$ constraints of $I$. 

The first scenario happens for at most $n/(10|D|)$ constraints of $I$, since no two $\cQ_{C}$'s share constraints in $\mathfrak{T}[I]$, and at most $n/(10|D|)$ constraints in $\mathfrak{T}[I]$ are violated by $\psi^{-1}\circ \widetilde{\tau}$.

In total, at most 
\[
\frac{dn}{5}+\frac{n}{10|D|}<\frac{dn}{3}
\]
constraints of $I$ are violated by $\tau$, contradicting the assumption $\val_{I}\leq 2/3$. 
\end{proof}

The following lemma proves that the indistinguishability between $\Dyes$ and $\Dno$, as established in \Cref{lem:3LIN_indistinguishable}, is ``preserved'' by the reduction map $\mathfrak{T}[\cdot]$. 

\begin{lemma}[Indistinguishability]\label{lem:indistinguishability}
Let $V = V_{\orig} \cup V_{\const} \cup V_{\aux}^{(1)} \cup V_{\aux}^{(2)}$.  
There exists a constant $\gamma > 0$, depending only on $(D,\Gamma)$ but not on $n$, such that any algorithm $\cA$ computing a function  
\[
\cA : \CSP(\Gamma,V) \longrightarrow \{0,1\}
\]  
using at most $\gamma n$ variable-queries satisfies  
\[
\left|\Pru{I \sim \Dyes^{\mathsf{G},n,d}}{\cA(\mathfrak{T}[I])=1\big.} - 
\Pru{I \sim \Dno^{\mathsf{G},n,d}}{\cA(\mathfrak{T}[I])=1\big.}\right|\leq o(1),
\]
where $o(1)$ denotes a term tending to 0 as $n\rightarrow \infty$.
\end{lemma}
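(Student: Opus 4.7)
The plan is to reduce a query algorithm on $\mathfrak{T}[I]$ to one on $I$ itself and then invoke \Cref{lem:3LIN_indistinguishable}. The key structural observation is that the map $I \mapsto \mathfrak{T}[I]$ depends on $I$ only through the constraint set of $I$: the expander gadget $H$, the fixed gadget instances $\cI_{b}$ (for $b\in\mathsf{G}$) and $\cI_{\End}$, and the deterministic bookkeeping that assigns secondary variables in $V_{\const}$, $V_{\aux}^{(1)}$, and $V_{\aux}^{(2)}$ are all chosen independently of $I$. In particular, every constraint of $\mathfrak{T}[I]$ lies either in some $\cQ_{\bfv}$ (for a hyperedge $\bfv$ of $H$), in which case it is completely determined by public information, or in some $\cQ_{C}$ (for a constraint $C$ of $I$), in which case it is determined once that particular $C$ is known.

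I would build a simulator $\cB$ with oracle access to $I$ that runs $\cA$ internally and plays the role of the $\BD$ oracle for $\mathfrak{T}[I]$. Since the latter oracle is permitted to resolve ties arbitrarily, $\cB$ may adopt a convenient strategy: upon receiving a query at a variable $v$ of $\mathfrak{T}[I]$, $\cB$ first looks for an unrevealed constraint at $v$ contained either in some $\cQ_{\bfv}$ or in a $\cQ_{C}$ for a $C$ already discovered by $\cB$; only when no such constraint remains does $\cB$ query $I$ at the primary variable $(j,1)$ associated with $v$'s first index, learn a fresh constraint $C$ from $I$, and reconstruct $\cQ_{C}$ in full so as to answer $\cA$. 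A routine case analysis across the four variable sets shows that the number of $I$-queries $\cB$ makes throughout the entire simulation is bounded by $C_{0}\,\gamma n$, where $C_{0}$ is a constant depending only on $r_{1}, r_{2}, d, \ell$.

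Consequently, the subset $U_{I}\subseteq [n]\times[3]$ consisting of the variables $\cB$ queries in $I$ together with the variables appearing in the revealed constraints satisfies $|U_{I}|\leq 4C_{0}\,\gamma n$. Picking $\gamma = \delta/(4C_{0})$, where $\delta$ is the constant provided by \Cref{lem:3LIN_indistinguishable}, ensures $|U_{I}|\leq \delta n$. Now the output of $\cA$ on $\mathfrak{T}[I]$ is a deterministic function, given $\cA$'s random bits, of $\cB$'s transcript with $I$, which in turn is determined by the sub-instance $\Sub_{U_{I}}[I]$ together with public data. A standard adaptive-to-nonadaptive reduction then turns \Cref{lem:3LIN_indistinguishable} into the statement that, conditioned on the event $\cE$, the distribution of $\cA$'s output is identical under $I\sim\Dyes^{\mathsf{G},n,d}$ and $I\sim\Dno^{\mathsf{G},n,d}$. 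Since $\cE$ occurs with probability $1-o(1)$, unconditioning costs only an additive $o(1)$, giving the claimed bound.

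The main obstacle is the query-count analysis in the middle paragraph: one must verify that, despite the adversarial freedom of the $\mathfrak{T}[I]$-oracle (which $\cB$ controls to its own advantage) and the fact that each variable in $V_{\const}$ participates in up to $dr_{1}+\ell r_{2}$ constraints spanning up to $d$ distinct $\cQ_{C}$'s, $\cB$ can always reply to $\cA$ without triggering more than an amortized $O(1)$ fresh $I$-queries per query of $\cA$. A secondary but technically non-trivial point is the passage from the fixed-$U$ form of \Cref{lem:3LIN_indistinguishable} to the adaptively determined $U_{I}$ used above, via the decision-tree view of $\cB$'s interaction with $I$.
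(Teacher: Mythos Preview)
Your proposal is correct and takes essentially the same route as the paper's proof. The paper is more direct on your flagged ``main obstacle,'' simply observing that each query to $\mathfrak{T}[I]$ can be replaced by at most one query to $I$ (none for $V_{\aux}^{(2)}$; a single query to $(j,1)$ for a variable in $V_{\aux}^{(1)}$ or $V_{\const}$ indexed by $j$; a query to $v$ itself for $v\in V_{\orig}$), and then sets $\gamma=\delta/(3d+1)$; your secondary concern about the adaptive-to-nonadaptive passage is likewise glossed over in the paper and handled by the same standard decision-tree argument you sketch.
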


\begin{proof}
Let $\cA$ be such an algorithm that makes at most $\gamma n$ variable-queries.  
We claim that there exists an algorithm $\cA'$ computing a function  
\[
\cA' : \supp\big(\Dyes^{\mathsf{G},n,d}\big) \cup \supp\big(\Dno^{\mathsf{G},n,d}\big) \longrightarrow \{0,1\}
\]  
using at most $\gamma n$ variable-queries, such that  
\begin{equation}\label{eq:A'_simulates_A}
\cA(\mathfrak{T}[I]) = \cA'(I) \quad 
\text{for all } I \in \supp\big(\Dyes^{\mathsf{G},n,d}\big) \cup \supp\big(\Dno^{\mathsf{G},n,d}\big).
\end{equation}
In other words, given an instance $I$ from $\Dyes$ or $\Dno$, the algorithm $\cA'$ effectively simulates the execution of $\cA$ on $\mathfrak{T}[I]$.

Consider the types of queries that $\cA$ may make on input $\mathfrak{T}[I]$:
\begin{itemize}
\item A query to a variable in $V_{\aux}^{(2)}$ reveals no information about $I$.  
\item A query to $(i,j,t)\in V_{\aux}^{(1)}$ reveals no more information than a query to $(j,1)\in  V_{\orig}$ in $I$.  
\item A query to $(j,t,x)\in V_{\const}$ reveals no more information than a query to $(j,1)\in V_{\orig}$ in $I$.  
\item A query to $v \in V_{\orig}$ reveals no more information than a query to the same variable $v$ in $I$.  
\end{itemize}

Thus every query to $\mathfrak{T}[I]$ performed by $\cA$ can be replaced by a query to $I$ by $\cA'$, proving the claim.  

Now let $\delta\in (0,1)$ and $\cE$ be the parameter and the event from \Cref{lem:3LIN_indistinguishable}, respectively. Conditioned on the event that the pair of instances $(I_{\yes},I_{\no})$ sampled from $\Dpair^{\mathsf{G},n,d}$ falls in $\cE$, the distributions of $I_{\yes}$ and $I_{\no}$ coincide when restricted to any $\delta n$ variables, according to \Cref{lem:3LIN_indistinguishable}. We set $\gamma = \delta/(3d+1)$. Since any algorithm $\cA'$ with $\gamma n$ queries to $I$ can observe constraints involving at most $\gamma n(1+3d)=\delta n$ variables in $V_{\orig}$, it follows that
\[
\Pru{(I_{\yes},I_{\no})\sim \Dpair^{\mathsf{G},n,d}}{\cA'(I_{\yes})=1\,\middle|\,\cE}=\Pru{(I_{\yes},I_{\no})\sim \Dpair^{\mathsf{G},n,d}}{\cA'(I_{\no})=1\,\middle|\,\cE}.
\]
Since $\Dpair^{\mathsf{G},n,d}(\cE)\geq 1-o(1)$, the fact that $\Dyes$ and $\Dno$ are the marginals of $\Dpair^{\mathsf{G},n,d}$ implies 
\[
\left|\Pru{I \sim \Dyes^{\mathsf{G},n,d}}{\cA'(I)=1\big.} - 
\Pru{I \sim \Dno^{\mathsf{G},n,d}}{\cA(I)=1\big.}\right|\leq o(1).
\]
Combined with \eqref{eq:A'_simulates_A}, this yields the desired conclusion.
\end{proof}

\subsection{Conclusion in the General Case}\label{subsec:general_conclusion}

We are now ready to conclude the proof of \Cref{thm:main_repetition}.

\begin{proof}[Proof of \Cref{thm:main_repetition}]
We first consider the case where $(D,\Gamma)$ is a core.  
In \Cref{subsec:construction_in_core}, we obtained a constant $d\in\mathbb{N}$, an Abelian group $\mathsf{G}$, and a reduction map
\[
\mathfrak{T}:\supp\big(\mathcal{D}_{\text{yes}}^{\mathsf{G},n,d}\big)\cup\supp\big(\mathcal{D}_{\text{no}}^{\mathsf{G},n,d}\big) \longrightarrow \CSP(\Gamma).
\]
The map $\mathfrak{T}$ produces instances over the variable set
\[
V_{\orig}\cup V_{\const}\cup V_{\aux}^{(1)}\cup V_{\aux}^{(2)},
\]
whose total size is $(3 + |D|r_{1} + dr_{1} + \ell r_{1}r_{2})n$.  
As noted at the end of \Cref{subsubsec:construction}, the instances produced by $\mathfrak{T}$ contain at least $(d+\ell r_{1})n$ constraints, and each variable has degree at most $dr_{1}+\ell r_{2}$.  
Define the constants
\[
\beta = 3 + |D|r_{1} + dr_{1} + \ell r_{1}r_{2}, 
\qquad 
\alpha = \frac{d+\ell r_{1}}{\beta},
\qquad 
d' = dr_{1}+\ell r_{2}.
\]
It follows that all instances produced by $\mathfrak{T}$ are valid in the $\BD^*(d', \alpha,\beta n)$ model.  
By combining \Cref{lem:completeness,lem:soundness,lem:indistinguishability}, there exists a constant $\gamma > 0$ such that, for sufficiently large $n$, the problem $\Mcsp{\Gamma}{1}{1-\varepsilon}$ requires at least $\gamma n$ queries in the $\BD^*(d', \alpha,\beta n)$ model, where
\[
\varepsilon := \frac{1}{10(d+\ell r_{2})r_{1}|D|}.
\]

We now turn to the case where $(D,\Gamma)$ is not a core.  
By \Cref{prop:core_existence}, $(D,\Gamma)$ is homomorphically equivalent to a repetition-closed core $(D',\Gamma')$.  
We claim that $\Mcsp{\Gamma'}{1}{1-\varepsilon}$ is essentially equivalent to $\Mcsp{\Gamma}{1}{1-\varepsilon}$.  

Let $\varphi_{1}:D\rightarrow D'$ and $\varphi_{2}:D'\rightarrow D$ be the corresponding relational homomorphisms.  
Since $\Gamma$ and $\Gamma'$ are bijectively matched and corresponding relations have the same arity, any $\CSP(\Gamma)$ instance $I=(V,\mathcal{C})$ can be viewed interchangeably as a $\CSP(\Gamma')$ instance $I'=(V,\mathcal{C}')$, and vice versa.  
Furthermore, for any assignment $\tau:V\rightarrow D$ to $I$, we have
\[
\val_{I}(\tau) = \val_{I'}(\varphi_{1}\circ \tau),
\]
by the definition of relational homomorphism.  
Conversely, for any assignment $\tau':V\rightarrow D'$, we have
\[
\val_{I}(\varphi_{2}\circ \tau') = \val_{I'}(\tau').
\]
We therefore have $\val_{I}=\val_{I'}$. 

By the above discussion, the identification between $\CSP(\Gamma)$ and $\CSP(\Gamma')$ yields trivial reductions between $\Mcsp{\Gamma'}{1}{1-\varepsilon}$ and $\Mcsp{\Gamma}{1}{1-\varepsilon}$.  
Since $(D',\Gamma')$ also has unbounded width (by \Cref{prop:homo_equivalence_unbounded_width}), the hardness of $\Mcsp{\Gamma}{1}{1-\varepsilon}$ follows from the established hardness of $\Mcsp{\Gamma'}{1}{1-\varepsilon}$.
\end{proof}

\section*{Acknowledgements}

The author would like to thank Dor Minzer, Ronitt Rubinfeld, Shuo Wang and Standa Živný for many stimulating discussions during the development of this work. Special thanks to Ronitt Rubinfeld, for her helpful suggestions in improving the presentation of the paper.

\addcontentsline{toc}{section}{References}
\bibliography{reference}
\bibliographystyle{alpha}

\appendix

\section{Definition of Bounded Width}\label{sec:bounded-width}

In this section, we provide the definition and basic properties of the ``bounded width'' notion for CSPs. The class of bounded-width CSP templates is defined by the correctness of an algorithm, which we call the width-$(k,\ell)$ algorithm, in deciding satisfiability of instances.

\begin{definition}[\cite{feder1998computational}]
Let $k,\ell$ be integers such that $\ell\geq k\geq 1$. A CSP template $(D,\Gamma)$ is said to have width $(k,\ell)$ if for all instances $I\in \CSP(\Gamma)$, \Cref{alg:width_k_ell} correctly decides whether $I$ is satisfiable. A template $(D,\Gamma)$ is said to have bounded width if it has width $(k,\ell)$ for some integers $k,\ell$; otherwise, we say $(D,\Gamma)$ has unbounded width.
\end{definition}
\begin{algorithm}
    \DontPrintSemicolon
    \SetKwInOut{Input}{Input}\SetKwInOut{Output}{Output}
    \caption{Width-$(k,\ell)$ Algorithm}\label{alg:width_k_ell}
    \Input{a $\CSP(\Gamma)$ instance $I=(V,(C_{1},\dots,C_{m}))$}
    \Output{\texttt{satisfiable} or \texttt{unsatisfiable}}
    \For{each subset $U\subseteq V$ of size $\ell$}{
        Initialize a set of partial assignments $S_{U}\gets D^{U}$\;
        \For{each pair $(\tau,i)$ of partial assignment $\tau\in D^{U}$ and constraint index $i\in [m]$}{
            \If{$U$ contains the scope of $C_{i}$ \textup{\textbf{and}} $\tau$ does not satisfy $C_{i}$}{
                $S_{U}\gets S_{U}\setminus \{\tau\}$\label{line:remove_inconsistencies_original}
                \tcp*{remove inconsistencies with original constraints}
            }
        }
    }
    \tcp*{finish initialization; start main procedure}
    \Repeat{the sets $S_{U}$ stop changing}{
    \For{each pair of subsets $U_{1},U_{2}\subseteq V$ such that $|U_{1}|=|U_{2}|=\ell$}{
    \For{each (if any) subset $W\subseteq U_{1}\cap U_{2}$ such that $|W|=k$}{
    \For{each $\tau\in S_{U_{1}}$}{
    \If{$\tau_{|W}\neq \sigma_{|W}$ for all $\sigma\in S_{U_{2}}$}{
    $S_{U_{1}}\gets S_{U_{1}}\setminus \{\tau\}$\tcp*{remove inconsistencies in any $k$ variables}\label{line:remove_inconsistencies_in_k}
    }
    }
    }
    }
    }
    \textbf{Output} \texttt{unsatisfiable} if $S_{U}=\emptyset$ for some $U$; otherwise \textbf{output} \texttt{satisfiable}
\end{algorithm}

Note that \Cref{alg:width_k_ell} is always correct if it outputs \texttt{unsatisfiable}, so the only mistake is when the input instance is unsatisfiable but it outputs \texttt{satisfiable}. Also, since the running time of \Cref{alg:width_k_ell} is clearly polynomial in $|V|$ and $m$, we immediately have the following property:

\begin{proposition}\label{prop:bounded-width_in_P}
For a CSP template $(D,\Gamma)$, we let $\Gamma\textsc{-Sat}$ be the problem of deciding satisfiability of instances from $\CSP(\Gamma)$. If $(D,\Gamma)$ has bounded width, then $\Gamma\textsc{-Sat}\in \mathbf{P}$.
\end{proposition}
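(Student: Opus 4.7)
The plan is to combine the definition of bounded width with a straightforward running-time analysis of the width-$(k,\ell)$ algorithm (\Cref{alg:width_k_ell}). By hypothesis, $(D,\Gamma)$ has bounded width, so by definition there exist integers $\ell\geq k\geq 1$ such that \Cref{alg:width_k_ell}, instantiated with these parameters, correctly decides satisfiability on every $\CSP(\Gamma)$ instance. The constants $k$, $\ell$, and $|D|$ depend only on the template and will be treated as fixed in the complexity analysis below, so the only remaining task is to verify that \Cref{alg:width_k_ell} runs in time polynomial in the instance size, as measured by the number of variables $|V|$ and the number of constraints $m$.

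For the initialization phase, the algorithm iterates over the $\binom{|V|}{\ell}=O(|V|^{\ell})$ subsets $U$ of size $\ell$, and for each such $U$ it tests each of the $|D|^{\ell}$ candidate partial assignments against each of the $m$ constraints, for a total cost of $O(|V|^{\ell}\cdot |D|^{\ell}\cdot m)$. The key observation for the main \texttt{repeat} loop is that each full pass either strictly shrinks at least one set $S_U$ or triggers the termination condition; since the total capacity $\sum_U |S_U|$ is bounded by $\binom{|V|}{\ell}\cdot |D|^{\ell} = O(|V|^{\ell})$, the outer repeat executes at most $O(|V|^{\ell})+1$ times. Each such pass does polynomial work, bounded by $O(|V|^{2\ell}\cdot \binom{\ell}{k}\cdot |D|^{2\ell})$ for iterating over pairs of $\ell$-subsets $(U_1,U_2)$, intermediate $k$-subsets $W\subseteq U_1\cap U_2$, and checking the projection consistency between $S_{U_1}$ and $S_{U_2}$ on $W$. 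Multiplying these bounds together gives an overall running time polynomial in $|V|$ and $m$.

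There is essentially no real obstacle: correctness of the output follows immediately from the definition of width $(k,\ell)$, termination is guaranteed because the sets $S_U$ are monotonically shrinking finite sets, and polynomiality is a routine bookkeeping exercise. Combining these three facts yields $\Gamma\textsc{-Sat}\in\mathbf{P}$, as claimed.
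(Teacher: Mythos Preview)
Your proposal is correct and follows the same approach as the paper: the paper simply remarks that ``the running time of \Cref{alg:width_k_ell} is clearly polynomial in $|V|$ and $m$'' and states the proposition as an immediate consequence, whereas you have spelled out the bookkeeping behind that claim in detail. There is no substantive difference in method.
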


The following basic property of bounded-width templates is used in \Cref{subsec:general_conclusion}.

\begin{proposition}\label{prop:homo_equivalence_unbounded_width}
Let $(D_{1},\Gamma_{1})$ and $(D_{2},\Gamma_{2})$ be two relational structures that are homomorphically equivalent to each other. Then $(D_{1},\Gamma_{1})$ has bounded width if and only if $(D_2,\Gamma_{2})$ has bounded width.
\end{proposition}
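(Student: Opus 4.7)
By symmetry, it suffices to show that if $(D_{2},\Gamma_{2})$ has width $(k,\ell)$, then $(D_{1},\Gamma_{1})$ also has width $(k,\ell)$. Let $\varphi_{1}:D_{1}\to D_{2}$ and $\varphi_{2}:D_{2}\to D_{1}$ be the relational homomorphisms witnessing the homomorphic equivalence, and identify $\CSP(\Gamma_{1})$ with $\CSP(\Gamma_{2})$ via the bijection between $\Gamma_{1}$ and $\Gamma_{2}$. A routine check (as in the final paragraph of \Cref{subsec:general_conclusion}) shows that an instance $I=(V,\cC)$ is satisfiable as a $\Gamma_{1}$-instance if and only if it is satisfiable as a $\Gamma_{2}$-instance: given a $\Gamma_{i}$-satisfying assignment $\tau$, the composition $\varphi_{i}\circ\tau$ is $\Gamma_{3-i}$-satisfying by the definition of a relational homomorphism.

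Fix an instance $I$. Let $(S_{U})_{U}$ denote the family of partial-assignment sets produced by \Cref{alg:width_k_ell} running on $I$ viewed as a $\Gamma_{1}$-instance, and let $(S_{U}^{(2)})_{U}$ denote the corresponding family for $I$ viewed as a $\Gamma_{2}$-instance. The plan is to show that if the $\Gamma_{1}$-run outputs \texttt{satisfiable}, then so does the $\Gamma_{2}$-run; combined with the assumption that $\Gamma_{2}$ has width $(k,\ell)$ and the observation above on satisfiability, this gives that \Cref{alg:width_k_ell} is also correct for $(D_{1},\Gamma_{1})$.

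The transfer proceeds by pushing the sets $S_{U}$ through $\varphi_{1}$. Define $S'_{U}=\{\varphi_{1}\circ\tau\,:\,\tau\in S_{U}\}\subseteq D_{2}^{U}$. I will argue, by induction on the steps of \Cref{alg:width_k_ell}, that $S'_{U}\subseteq S_{U}^{(2)}$ holds throughout the run on the $\Gamma_{2}$-instance. Initially both are $D_{2}^{U}$ (after the obvious inclusion), so the inclusion holds. For the initialization removal on \Cref{line:remove_inconsistencies_original}: if $\tau\in S_{U}$ satisfies a constraint $C_{i}$ with scope in $U$ under $\Gamma_{1}$, then $\varphi_{1}\circ\tau$ satisfies the paired constraint under $\Gamma_{2}$ by the homomorphism property, so no element of $S'_{U}$ is removed by this step. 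For the main-loop removal on \Cref{line:remove_inconsistencies_in_k}: given $\tau'\in S'_{U_{1}}$, write $\tau'=\varphi_{1}\circ\tau$ with $\tau\in S_{U_{1}}$; by $k$-consistency of the $S_{U}$'s (a consequence of the $\Gamma_{1}$-run having reached its fixed point with all $S_{U}$ nonempty), there is some $\sigma\in S_{U_{2}}$ with $\tau|_{W}=\sigma|_{W}$, and then $\sigma':=\varphi_{1}\circ\sigma\in S'_{U_{2}}\subseteq S_{U_{2}}^{(2)}$ satisfies $\tau'|_{W}=\sigma'|_{W}$. Hence $\tau'$ is never removed from $S_{U_{1}}^{(2)}$, completing the induction. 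Since $\varphi_{1}$ maps nonempty sets to nonempty sets, $S_{U}\neq\emptyset$ for all $U$ forces $S'_{U}\neq\emptyset$ and hence $S_{U}^{(2)}\neq\emptyset$, so the $\Gamma_{2}$-run also outputs \texttt{satisfiable}, which gives the desired conclusion.

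The main subtlety I anticipate is getting the inductive invariant right: the sets $S'_{U}$ are not a priori the ones computed by the $\Gamma_{2}$-run, but rather a ``shadow'' collection that we must verify survives every removal performed by that run. The argument above hinges on the fact that both types of removals (original-constraint inconsistency and $k$-ary inconsistency across pairs $U_{1},U_{2}$) commute with applying $\varphi_{1}$ pointwise, which in turn uses only the defining property of a relational homomorphism. No additional machinery beyond \Cref{def:relational_homomorphism} and the explicit description of \Cref{alg:width_k_ell} is needed.
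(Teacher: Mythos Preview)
Your argument is correct and follows essentially the same route as the paper: both push the final $\Gamma_{1}$-run family $\{S_{U}\}$ forward through $\varphi_{1}$, verify that the resulting family is consistent with the $\Gamma_{2}$-constraints and $k$-consistent across overlapping $\ell$-sets, conclude that the $\Gamma_{2}$-run also outputs \texttt{satisfiable}, and then use $\varphi_{2}$ to transport a satisfying assignment back. Your inductive framing (maintaining $S'_{U}\subseteq S_{U}^{(2)}$ throughout the $\Gamma_{2}$-run) is just a more explicit rendering of the paper's one-line claim that the pushed-forward family ``persists to the end'' of \Cref{alg:width_k_ell}.
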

\begin{proof}
Suppose $\varphi_{1}\colon D_{1}\to D_{2}$ and $\varphi_{2}\colon D_{2}\to D_{1}$ are homomorphisms.  
Assume that $(D_{2},\Gamma_{2})$ has width $(k,\ell)$.  
To show that $(D_{1},\Gamma_{1})$ also has width $(k,\ell)$, it suffices to prove that for any instance 
$I\in\CSP(\Gamma_{1})$, if \Cref{alg:width_k_ell} outputs \texttt{satisfiable} on $I$, then $I$ is indeed satisfiable.  

For local assignments we introduce the following notation:

\begin{enumerate}[label=(\arabic*)]
\item For each subset $U\subseteq V$ and each collection of local assignments 
$S\subseteq D_{1}^{U}$, set
\[
\varphi_{1}^{U}(S)
:=\bigl\{\varphi_{1}\circ\tau \,\bigm|\, \tau\in S\bigr\}
\subseteq D_{2}^{U}.
\]

\item For any $W,U\subseteq V$ with $W\subseteq U$, and each set of local assignments 
$S\subseteq D_{1}^{U}$, define the restriction
\[
\Res_{U,W}(S)
:=\bigl\{\tau_{|W}\,\bigm|\,\tau\in S\bigr\}.
\]
By slight abuse of notation we use the same symbol 
$\Res_{U,W}(S)$ when $S\subseteq D_{2}^{U}$.
\end{enumerate}

Let $\{S_{U}\}_{U\subseteq V,\,|U|=\ell}$ denote the family of local assignment sets produced by 
\Cref{alg:width_k_ell} at the end of its execution on $I$.  
Since we have assumed that \Cref{alg:width_k_ell} outputs \texttt{satisfiable} on $I$, it follows that $S_{U}\neq\emptyset$ for every $U$.  
By \Cref{line:remove_inconsistencies_in_k} of \Cref{alg:width_k_ell}, we also have 
\[
\Res_{U_{1},W}(S_{U_{1}})=\Res_{U_{2},W}(S_{U_{2}})
\]
for any $k$-element set $W\subseteq V$ and any $\ell$-element sets $U_{1},U_{2}\subseteq V$ containing $W$.  
Furthermore, by \Cref{line:remove_inconsistencies_original}, for any $\ell$-element set $U\subseteq V$, every assignment in $S_{U}$ satisfies all constraints of $I$ whose scopes are contained in $U$.

Construct an instance $I'\in\CSP(\Gamma_{2})$ by replacing each constraint of $I$ with the corresponding constraint from $\Gamma_{2}$.  
Since $\varphi_{1}$ is a relational homomorphism, \Cref{def:relational_homomorphism} implies that for any $\ell$-element set $U\subseteq V$, each assignment 
$\tau'\in \varphi_{1}^{U}(S_{U})$ satisfies all constraints of $I'$ whose scopes are contained in $U$.  
Moreover, for any $k$-element set $W\subseteq V$ and $\ell$-element sets $U_{1},U_{2}\subseteq V$ containing $W$, we have
\[
\Res_{U_{1},W}\bigl(\varphi_{1}^{U_{1}}(S_{U_{1}})\bigr)
=\varphi_{1}^{W}\bigl(\Res_{U_{1},W}(S_{U_{1}})\bigr)
=\varphi_{1}^{W}\bigl(\Res_{U_{2},W}(S_{U_{2}})\bigr)
=\Res_{U_{2},W}\bigl(\varphi_{1}^{U_{2}}(S_{U_{2}})\bigr).
\]
Thus the (nonempty) family of local assignment sets 
$\bigl\{\varphi_{1}^{U}(S_{U})\bigr\}_{U\subseteq V,\,|U|=\ell}$ persists to the end of \Cref{alg:width_k_ell} when run on $I'$.  
Since $(D_{2},\Gamma_{2})$ has width $(k,\ell)$ by assumption, $I'$ is satisfiable.  
Finally, for any global assignment $\tau'\colon V\to D_{2}$ satisfying $I'$, the composed assignment 
$\varphi_{2}\circ \tau'\colon V\to D_{1}$ satisfies $I$ because $\varphi_{2}$ is also a relational homomorphism.  
Hence $I$ is satisfiable, completing the proof.
\end{proof}

The following simple observation is used in the proof of \Cref{lem:main_algebra}.

\begin{proposition}\label{prop:Const_preserves_unbounded_width}
Let $(D,\Gamma_{1})$ and $(D,\Gamma_{2})$ be two relational structures on a same domain. If $(D,\Gamma_{1})$ has unbounded width and $\Gamma_{1}\subseteq \Gamma_{2}$, then $(D,\Gamma_{2})$ also has unbounded width.
\end{proposition}

\begin{proof}
This is simply because any $\CSP(\Gamma_{1})$ instance is also a $\CSP(\Gamma_{2})$ instance. If \Cref{alg:width_k_ell} makes error on some instance $I\in\CSP(\Gamma_{1})$, then it also makes error on the same instance $I\in \CSP(\Gamma_{2})$. 
\end{proof}

\begin{remark}\label{rem:colorability_unbounded_width}
A decidable characterization of bounded-width CSP templates was given by Barto \cite[Corollary 8.5]{barto2014collapse}. Using Barto's criterion, it is easy to verify that for any $k,\ell\geq 2$ with $(k,\ell)\neq (2,2)$, the CSP template expressing $k$-colorability of $\ell$-uniform hypergraphs does not have bounded width.
\end{remark}

\section{Allowing Variable Repetition}\label{sec:allow_repeition}

\begin{proof}[Proof of \Cref{thm:main_rephrase} assuming \Cref{thm:main_repetition}]
Given any relational structure $(D,\Gamma)$, we will define a reduction map $\mathfrak{R}:\CSP(\overline{\Gamma})\rightarrow \CSP(\Gamma)$. Let $k$ be the maximum arity of any relation in $\Gamma$. Given an instance $I=(V,\cC)\in \CSP(\overline{\Gamma})$, we construct $\mathfrak{R}[I]$ as follows.
\begin{enumerate}[label=(\arabic*)]
\item Replace every variable $v\in V$ by $m:=k|D|$ copies $v^{(1)},\dots,v^{(m)}$. The new variable set is denoted by $V\times [m]$.
\item Each constraint $C\in \cC$ can be viewed as a requirement \[R(\tau(v_{1}),\dots,\tau(v_{r}))=1\]
for assignments $\tau:V\rightarrow D$, where $R\in \Gamma$ and $v_{1},\dots,v_{r}$ are \emph{not necessarily distinct} variables in $V$. In place of this constraint, we add to $\mathfrak{R}[I]$ every constraint of the form
\[
\left(\big(v_{1}^{(i_{1})},\dots,v_{r}^{(i_{r})}\big),R\right)
\]
such that the chosen variables $v_{1}^{(i_{1})},\dots,v_{r}^{(i_{r})}$ are pairwise distinct. 
\end{enumerate}
The reduction map has the following desired properties.
\begin{enumerate}[label=(\arabic*)]
\item Note that every constraint of $I$ corresponds to at least 1 and at most $m^{k}$ constraints of $\mathfrak{R}[I]$. If $I$ has maximum degree at most $d$, then $\mathfrak{R}[I]$ has maximum degree at most $m^{k}d$. 
\item It is clear that if $I$ is satisfiable then $\mathfrak{R}[I]$ is also satisfiable. 
\item If $\val_{I}\leq 1-\varepsilon$, then we claim that $\val_{\mathfrak{R}[I]}\leq 1-m^{-k}\varepsilon$. Given any assignment $\widetilde{\tau}:V\times [m]\rightarrow D$ and any $v\in V$, by the pigeonhole principle some element of $D$ appears in $\tau(v^{(1)}),\dots,\tau(v^{(m)})$ at least $k$ times. We define $\tau(v)$ to be any such element. For any constraint $C\in \cC$ that this assignment $\tau:V\rightarrow D$ violates, at least one of the (at most $m^{k}$) constraints in $\mathfrak{R}[I]$ corresponding to $C$ is violated by $\widetilde{\tau}$. Therefore, the assumption that $\val_{I}(\tau)\leq 1-\varepsilon$ implies $\val_{\mathfrak{R}[I]}(\widetilde{\tau})\leq 1-m^{-k}\varepsilon$.
\item Finally, it is easy to see that every query to a variable $v^{(i)}$ in $\mathfrak{R}[I]$ reveals no more information than a query to $v$ in $I$. 
\end{enumerate}

Now suppose \Cref{thm:main_repetition} holds. Then for any unbounded-width CSP template $(D,\Gamma)$, since $(D,\overline{\Gamma})$ is repetition-closed and also has unbounded width (due to \Cref{prop:Const_preserves_unbounded_width}), there exists $\varepsilon>0$ such that $\Mcsp{\overline{\Gamma}}{1}{1-\varepsilon}$ requires $\Omega(n)$ queries in the $\BD^*(d,\alpha,n)$ model. Due to the reduction above, this implies that $\Mcsp{\Gamma}{1}{1-m^{-k}\varepsilon}$ requires $\Omega(n)$ queries in the \(\BD^*(m^{k}d,\alpha ,mn)\) model, where $k$ is the largest arity of any relation in $\Gamma$ and $m:=k|D|$.
\end{proof}
\section{Galois Duality Proof}\label{sec:Galois_proof}

\begin{proof}[Proof of \Cref{thm:Galois}]

The proof proceeds in the following steps.

\paragraph{Step 1: the variable pool.} Recall from \Cref{def:relational_clone} that to show \(R\in\clo{\Gamma}\) we must build a \(\CSP(\Gamma)\) instance that “simulates” \(R\).  
Such an instance contains \(k\) primary variables \(v_{1},\dots,v_{k}\) on which \(R\) is simulated, together with a number of secondary variables.  
Here, however, we first introduce a single \emph{pool} of variables containing both the primary and secondary variables we need.  
Only afterwards will we specify which are the primary variables.
 
Let \(R^{-1}(1)=\{x=(x_{1},\dots,x_{k})\in D^{k}\mid R(x)=1\}\) and put \(m=|R^{-1}(1)|\).  
We create \(|D|^{m}\) variables \(u_{y}\), indexed by all vectors \(y\in D^{m}\).  
Initially, the instance \(I\) on variables \((u_{y})_{y\in D^{m}}\) has no constraints; the next step explains how constraints are added.

\paragraph{Step 2: encoding “polymorphism”.}  
The key idea is that the condition of being a polymorphism of \(\Gamma\) can be written syntactically as a CSP instance.  
An \(m\)-ary operation \(f\colon D^{m}\to D\) corresponds to an assignment  
\[
\tau:\{u_{y}\mid y\in D^{m}\}\to D,
\]
interpreting \(\tau(u_{y})=f(y)\). Recall that for $f$ to qualify as a polymorphism of $(D,\Gamma)$, it needs to pass all tests given by a relation $R'\in \Gamma$ and a matrix of elements $(x_{i,j})\in D^{\ell\times m}$, where $\ell$ is the arity of $R'$, such that each column of the matrix satisfies $R'$. In the test, one checks the output of $f$ on each \emph{row} of the matrix and verifies if the combined outputs as an element of $D^{\ell}$ satisfies $R'$. We now describe a procedure producing a CSP instance that enforces these tests.

\begin{itemize}
\item For each test \((R',(x_{i,j}))\in\Gamma\times D^{\ell\times m}\):
  \begin{enumerate}
  \item For each row index \(i\in[\ell]\), define \(y^{(i)}\in D^{m}\) by \(y^{(i)}_{j}=x_{i,j}\) for all \(j\in[m]\).
  \item Add to \(I\) the constraint
  \[
  \bigl((u_{y^{(1)}},\dots,u_{y^{(\ell)}}),R'\bigr).
  \]
  \end{enumerate}
\end{itemize}
By construction, an assignment \(\tau\) satisfies \(I\) iff the corresponding operation \(f\) is a polymorphism of \(\Gamma\).
 
\paragraph{Step 3: selecting the primary variables.}  
Let \(w^{(1)},\dots,w^{(m)}\) be an enumeration of elements of \(R^{-1}(1)\).  
For each \(i\in[k]\) define \(z^{(i)}\in D^{m}\) by \(z^{(i)}_{j}=w^{(j)}_{i}\) for \(j\in[m]\).  
The condition that \(R\) is irredundant now easily translates to (desired) property that the vectors \(z^{(i)}\) are distinct.  
We designate as the primary variables
\[
\bigl(u_{z^{(1)}},\dots,u_{z^{(k)}}\bigr).
\]
We have now completed the construction of $I$.

\paragraph{Step 4: correctness.}  
For any \(x=(x_{1},\dots,x_{k})\in R^{-1}(1)\) we have \(x=w^{(j)}\) for some \(j\in[m]\).  
The assignment \(\tau(u_{y})=y_{j}\) (corresponding to the projection polymorphism \(\pi_{j}\colon D^{m}\to D\)) then satisfies \(I\) and yields
\(\tau(u_{z^{(i)}})=x_{i}\) for all \(i\in[k]\).

Conversely, if \(\tau\) satisfies \(I\), the induced operation \(f:D^{m}\to D\) is a polymorphism of \((D,\Gamma)\) and hence preserves \(R\).  
Since each column of the matrix \((w^{(j)}_{i})_{i\in[k],j\in[m]}\in D^{k\times m}\) satisfies \(R\), the tuple of row outputs
\[
\bigl(f(z^{(1)}),\dots,f(z^{(k)})\bigr)
=\bigl(\tau(u_{z^{(1)}}),\dots,\tau(u_{z^{(k)}})\bigr)\in D^{k}
\]
also satisfies \(R\).  
Thus the instance \(I\in\CSP(\Gamma)\) correctly simulates \(R\) on the designated primary variables in the sense of \Cref{def:relational_clone}, which proves \(R\in\clo{\Gamma}\).
\end{proof}

\section{Random Regular Hypergraphs}\label{sec:random_regular_hypergraph}

In this section, we outline the proofs of \Cref{lem:3LIN_exists_d,lem:3LIN_indistinguishable,lem:expander_gadget}, all of which revolve around random regular hypergraphs. The first two lemmas, \Cref{lem:3LIN_exists_d,lem:3LIN_indistinguishable}, are only minor adaptations of results from \cite{bogdanov2002lower}, and our proofs closely follow those in that work. In contrast, \Cref{lem:expander_gadget} is new to this paper, though its proof uses arguments that are largely analogous to those used in the proof of \Cref{lem:3LIN_exists_d}.

\subsection{Random Regular Hypergraphs are Locally Sparse}

We first sketch a proof of \Cref{lem:3LIN_indistinguishable}. 

Recall from \Cref{def:perfect_matching} that $\cM_{[3],n}$ denotes the set of all $3$-partite perfect matchings on the vertex set $[n]\times [3]$. For any matching $M\in \cM_{[3],n}$ and any vertex subset $U\subseteq [n]\times [3]$, we let $M[U]$ denote the set of hyperedges $(v_{1},v_{2},v_{3})\in M$ such that $v_{1},v_{2},v_{3}\in U$. The following lemma is a slightly modified version of \cite[Lemma 6]{bogdanov2002lower}. 

\begin{lemma}[\cite{bogdanov2002lower}]\label{lem:local_sparse}
    For any fixed positive integer $d$, there exists a constant $\delta\in (0,1)$ such that if $M^{(1)},\dots,M^{(d)}$ are perfect matchings sampled independent and uniformly from $\cM_{[3],n}$, then the condition
    \begin{equation}\label{eq:locally_sparse}
    \sum_{i=1}^{d}\left|M^{(i)}[U]\right|<\frac{2}{3}|U|\qquad \text{for all }U\subseteq [n]\times [3]\text{ with }1\leq |U|\leq \delta n
    \end{equation}
    is satisfied with probability $1-o(1)$, where $o(1)$ is a term tending to 0 as $n\rightarrow \infty$.
\end{lemma}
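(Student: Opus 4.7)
}

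The plan is a first-moment / union bound argument over subsets $U$. For a fixed $U\subseteq [n]\times [3]$, I would write $s_j=|U\cap([n]\times\{j\})|$ and $s=s_1+s_2+s_3=|U|$, and bound the probability that $\sum_{i=1}^{d}|M^{(i)}[U]|\geq t$ by summing, over all compositions $t=k_1+\cdots+k_d$ and all choices of a partial matching $P_i$ of size $k_i$ inside $U$ for each $i$, the probability that $P_i\subseteq M^{(i)}$ for every $i$. This uses the independence of the matchings $M^{(1)},\dots,M^{(d)}$.

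The two ingredients I will need are: (a) the number of partial matchings of size $k$ inside $U$ is $\binom{s_1}{k}\binom{s_2}{k}\binom{s_3}{k}(k!)^2\leq \frac{s_1^{k}s_2^{k}s_3^{k}}{k!}$, and (b) the probability that a fixed partial matching of size $k$ is contained in a uniformly random $M\sim\cM_{[3],n}$ is $\frac{(n-k)!^{2}}{(n!)^{2}}\leq (n-k+1)^{-2k}$, since there are $(n!)^2$ matchings in total and $(n-k)!^2$ of them extend a given partial matching. Combining these with AM-GM ($s_1 s_2 s_3\leq (s/3)^3$), assuming $\delta\leq 1/2$ so that $n-k+1\geq n/2$ for all $k\leq s\leq \delta n$, and summing $\frac{1}{\prod k_i!}$ over compositions of $t$ via the multinomial identity $\sum_{k_1+\cdots+k_d=t}\frac{1}{\prod k_i!}=\frac{d^t}{t!}$, one obtains
\[
\Pr{\sum_{i=1}^{d}|M^{(i)}[U]|\geq t}\leq \left(\frac{C\,d\,s^{3}}{t\,n^{2}}\right)^{t}
\]
for an absolute constant $C>0$.

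Next I would set $t=\lceil 2s/3\rceil$, yielding a bound of the form $\bigl(C'ds^{2}/n^{2}\bigr)^{2s/3}$ for a single $U$ of size $s$, and union bound over all $\binom{3n}{s}\leq (3en/s)^{s}$ subsets $U$ of size $s$. The factors of $s$ combine to give a bound of the shape $\bigl(C''d^{2}\,s/n\bigr)^{s/3}\leq (C''d^{2}\delta)^{s/3}$ for each $s\in[3,\delta n]$. Choosing $\delta$ small enough (depending only on $d$) so that $C''d^{2}\delta<1/2$ makes this at most $2^{-s/3}$ for all sufficiently large $s$; for each fixed small $s\geq 3$, direct inspection of $\binom{3n}{s}(s^2/n^2)^{2s/3}=O(n^{-s/3})$ shows the term is $o(1)$. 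Summing over $s=3,\dots,\delta n$ (splitting into, say, $s\leq \log^2 n$ handled by the $O(n^{-s/3})$ estimate and $s>\log^{2} n$ handled by the geometric tail) yields a total of $o(1)$, which is the desired bound.

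The main potential obstacle is that for very small $s$ the geometric factor $(C''d^{2}\delta)^{s/3}$ is not yet small, so one cannot just sum it as a geometric series. The fix — and this is really the only subtle point — is to use the stronger, ``polynomial-in-$1/n$'' bound that survives in the small-$s$ regime (before discarding the $(n-k+1)^{-2k}$ denominator in favor of a $\delta$-dependent constant), and treat small and large $s$ separately. Everything else is routine counting and Stirling's estimate $k!\geq (k/e)^{k}$.
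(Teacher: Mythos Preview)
Your proposal is correct and follows essentially the same first-moment/union-bound strategy as the paper. The only difference is bookkeeping: instead of decomposing by compositions $t=k_1+\cdots+k_d$ and counting partial matchings, the paper indexes each ``bad'' hyperedge by the pair (its vertex in $U_1$, the matching index $i$), and union-bounds directly over size-$t$ subsets $W_0\subseteq U_1\times[d]$, using $\Pr[W\supseteq W_0]\leq(|U_2||U_3|/n^{2})^{|W_0|}$. This sidesteps the multinomial identity and gives the same $(O(d)\,|U|^{2}/n^{2})^{\lceil 2|U|/3\rceil}$ bound in one line, after which the union bound over $U$ and the small-$s$/large-$s$ split you describe proceed identically.
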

\begin{proof}
For a fixed $U\subseteq [n]\times [3]$ and any $i\in [3]$, let $U_{i}=\{(j,i)\in U\mid j\in [n]\}$ be the set of vertices in $U$ that belong to the $i$-th part in the 3-way partition of $[n]\times [3]$. Let $W$ be the set of pairs $(j,i)\in U_{1}\times [d]$ such that some hyperedge $(v_{1},v_{2},v_{3})$ in $M^{(i)}$ contains the vertex $v_{1}=(j,1)$ and satisfies $v_{1},v_{2},v_{3}\in U$. We therefore have
\[
|W|=\sum_{i=1}^{d}\left|M^{(i)}[U]\right|.
\]
For any given subset $W_{0}\subseteq U_{1}\times [d]$, it is not hard to see that
\[
\Pru{M^{(1)},\dots,M^{(d)}\in \cM_{[3],n}}{W\supseteq W_{0}}\leq \left(\frac{|U_{2}|\times |U_{3}|}{n^{2}}\right)^{|W_{0}|}.
\]
Therefore we have
\begin{align*}
\Pru{M^{(1)},\dots,M^{(d)}\in \cM_{[3],n}}{|W|\geq \frac{2}{3}|U|}&\leq \sum_{\substack{W_{0}\subseteq U_{1}\times [d]\\ |W_{0}|=\lceil 2|U|/3\rceil}}\Pru{M^{(1)},\dots,M^{(d)}\in \cM_{[3],n}}{W\supseteq W_{0}}\\
&\leq \binom{d|U_{1}|}{\lceil2|U|/3\rceil}\left(\frac{|U_{2}|\times |U_{3}|}{n^{2}}\right)^{\lceil2|U|/3\rceil}\leq \left(\frac{5d|U|^{2}}{n^{2}}\right)^{\lceil 2|U|/3\rceil}.
\end{align*}
Pick $\delta= 10^{-6}d^{-2}$. By taking a union bound over all $U\subseteq [n]\times [3]$ with $1\leq |U|\leq \delta n$, we have
\[
\Pr{\text{condition \eqref{eq:locally_sparse} fails}\big.}\leq \sum_{t=1}^{\lfloor \delta n\rfloor }\binom{3n}{t}\left(\frac{5dt^{2}}{n^{2}}\right)^{2t/3}\leq \sum_{t=1}^{\lfloor \delta n\rfloor}\left(\frac{10^{5}t}{n}\right)^{t/3}\leq o(1).\qedhere
\]
\end{proof}

As stated in \cite{bogdanov2002lower}, the conclusion of \Cref{lem:local_sparse} still holds if the constant $2/3$ in \eqref{eq:locally_sparse} is replaced by any constant greater than $1/2$. The reason for choosing $2/3$ in \eqref{eq:locally_sparse} is that it is the largest constant for which the following observation remains true. 

\begin{proposition}\label{prop:locally_sparse}
Suppose $M^{(1)}, \dots, M^{(d)}$ are perfect matchings in $\cM_{[3],n}$ that satisfy \eqref{eq:locally_sparse}. Then, for any subset $U \subseteq [n] \times [3]$ with $|U| \leq \delta n$, the following hold:
\begin{enumerate}[label=(\arabic*)]
\item The edge sets $M^{(i)}$, for $i \in [d]$, are pairwise disjoint.
\item There exists a total order on the union $\bigcup_{i=1}^{d} M^{(i)}$ such that every hyperedge in this union is incident to at least one vertex that does not appear in any hyperedge preceding it in the order.
\end{enumerate}
\end{proposition}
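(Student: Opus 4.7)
My plan is to interpret parts (1) and (2) as statements about the restricted edge collections $M^{(i)}[U]$ (the role of the subset $U$ is otherwise invisible), and then to deduce both from \eqref{eq:locally_sparse} by instantiating it on carefully chosen vertex sets.

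For (1), I would argue by contradiction. If some hyperedge $e=(v_1,v_2,v_3)$ lies in both $M^{(i)}[U]$ and $M^{(j)}[U]$ for distinct $i,j\in[d]$, then apply \eqref{eq:locally_sparse} to $U'=\{v_1,v_2,v_3\}\subseteq U$, which has $|U'|=3\le\delta n$ once $n$ is large enough (smaller $n$ makes the whole statement vacuous). This gives $\sum_{t=1}^d |M^{(t)}[U']|\ge 2=(2/3)|U'|$, contradicting the strict inequality in \eqref{eq:locally_sparse}.

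For (2), the strategy is a reverse-peeling argument supported by the following key claim: for every nonempty $E'\subseteq E:=\bigcup_{i=1}^d M^{(i)}[U]$ there exists an edge $e\in E'$ containing a vertex that belongs to no other edge of $E'$. Granted the claim, I iteratively extract such edges $e_1,e_2,\ldots,e_m$ from $E,\,E\setminus\{e_1\},\ldots$, each with its private vertex $v_t$. By construction $v_t\in e_t$ but $v_t\notin e_s$ for all $s>t$, so reversing the removal order (declaring $e_m<e_{m-1}<\cdots<e_1$) yields a total order in which each edge contains a vertex absent from all preceding edges, as required.

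The main task is therefore to prove the claim, which I plan to do by another contradiction. If every vertex of $V(E')$ is covered by at least two edges of $E'$, then a double-count gives $3|E'|\ge 2|V(E')|$, hence $|E'|\ge(2/3)|V(E')|$. On the other hand, set $U''=V(E')\subseteq U$, so that $|U''|\le\delta n$. Using part~(1) to ensure the $M^{(t)}[U'']$ are pairwise disjoint, every edge of $E'$ lies in some $M^{(t)}[U'']$, hence $|E'|\le\sum_{t=1}^d |M^{(t)}[U'']|<(2/3)|U''|=(2/3)|V(E')|$ by \eqref{eq:locally_sparse}. The two inequalities contradict, proving the claim. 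The only real subtlety is to remember to invoke \eqref{eq:locally_sparse} on the tighter set $V(E')$ rather than on $U$ itself, which is precisely what makes the density bound match the average-degree bound $2|V(E')|/3$; everything else is standard peeling.
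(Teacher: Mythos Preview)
Your proposal is correct and follows essentially the same approach as the paper: item~(1) via instantiating \eqref{eq:locally_sparse} on the three-vertex set of a doubly-covered hyperedge, and item~(2) via the standard degree-$<2$ peeling argument applied to the vertex support of the current edge set, then reversing the removal order. Your reading of both items as statements about the restricted collections $M^{(i)}[U]$ is exactly the intended one (the paper's own proof implicitly works with $E=\bigcup_i M^{(i)}[U]$), and your explicit invocation of part~(1) to justify $|E'|\le\sum_t|M^{(t)}[U'']|$ is harmless though not strictly needed, since the inequality $|E'|\le|\bigcup_t M^{(t)}[U'']|\le\sum_t|M^{(t)}[U'']|$ holds regardless of disjointness.
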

\begin{proof}
The first item holds because no hyperedge can appear in more than one of the $M^{(i)}$'s; otherwise, there would exist a set $U$ of three vertices such that $\sum_{i=1}^{d}\left|M^{(i)}[U]\right| \ge 2$, violating condition~\eqref{eq:locally_sparse}. 

We now prove item~(2). Let $U_{0} \subseteq U$ denote the set of vertices incident to at least one hyperedge in the union $E := \bigcup_{i=1}^{d} M^{(i)}$. Applying condition~\eqref{eq:locally_sparse} to $U_{0}$ implies that a uniformly random vertex from $U_{0}$ is, in expectation, incident to fewer than two hyperedges in $E$. Hence, there exists at least one vertex in $U_{0}$ that is incident to exactly one hyperedge in $E$. 

We remove this hyperedge from $E$ and update $U_{0}$ to be the set of vertices incident to at least one hyperedge in the remaining set $E$. Repeating this process iteratively, we eventually remove all hyperedges from $E$. This procedure defines a total order on the edge set $\bigcup_{i=1}^{d} M^{(i)}$, where edges removed later are considered smaller in the order. The desired conclusion then follows directly from the construction.
\end{proof}
We now use \Cref{lem:local_sparse} and \Cref{prop:locally_sparse} to prove \Cref{lem:3LIN_indistinguishable}.

\begin{proof}[Proof Sketch of \Cref{lem:3LIN_indistinguishable}]
Recall from \Cref{def:Dyes} that for any sample $(I_{\yes}, I_{\no}) \sim \Dpair^{\mathsf{G},n,d}$, the same collection of $d$ perfect matchings $M^{(1)}, \dots, M^{(d)} \in \cM_{[3],n}$ underlies both $I_{\yes}$ and $I_{\no}$. Applying \Cref{lem:local_sparse}, we obtain a constant $\delta \in (0,1)$, and let $\cE$ denote the event that the condition~\eqref{eq:locally_sparse} holds for $M^{(1)}, \dots, M^{(d)}$. 

We now condition on a choice of $M^{(1)}, \dots, M^{(d)}$ satisfying~\eqref{eq:locally_sparse}. Recall that for each hyperedge $\bfv = (v_{1}, v_{2}, v_{3}) \in \bigcup_{i=1}^{d} M^{(i)}$, the instance $I_{\yes}$ contains a constraint $(\bfv, S^{\mathsf{G}}_{b})$ and $I_{\no}$ contains a constraint $(\bfv, S^{\mathsf{G}}_{b'})$, for possibly distinct ``right-hand side'' coefficients $b, b' \in \mathsf{G}$. The right-hand sides of the equations in $I_{\no}$ are chosen independently, whereas those in $I_{\yes}$ are correlated through an underlying random assignment $\tau : [n] \times [3] \to \mathsf{G}$. 

However, it follows from \Cref{prop:locally_sparse} that for any variable set $U\subseteq [n]\times [3]$ of size at most $\delta n$, the equations placed by $I_{\yes}$ within $U$ have right-hand sides that remain jointly independent. Consequently, conditioned on $M^{(1)}, \dots, M^{(d)}$, the restricted random instances $\Sub_{U}[I_{\yes}]$ and $\Sub_{U}[I_{\no}]$ are identically distributed, as desired.
\end{proof}

\subsection{Concentration of Value for Random Instances}

The proofs of \Cref{lem:3LIN_exists_d,lem:expander_gadget} both hinge on the following concentration inequality, known as the Azuma-Hoeffding inequality.

\begin{proposition}[{\cite[Corollary 3.9]{Han14}}]\label{prop:Azuma}
Let $\{\mathscr{F}_{t}\}_{0\leq t\leq n}$ be a filtration of $\sigma$-algebras on a probability space, and let $X$ be a real-valued random variable. If there exists a constant $c>0$ such that $\big|\Ex{X|\mathscr{F}_{t}}-\Ex{X|\mathscr{F}_{t-1}}\big|\leq c$ almost surely for all $t\in [n]$, then 
\[
\Pr{\Ex{X|\mathscr{F}_{n}}-\Ex{X|\mathscr{F}_{0}}\geq \Delta\Big.}\leq\exp\left(-\frac{\Delta^{2}}{2nc^{2}}\right). 
\]
\end{proposition}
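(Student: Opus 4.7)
The plan is to prove the statement by the standard exponential moment (Chernoff) method applied to the martingale $M_t := \Ex{X \mid \mathscr{F}_t}$. Define the martingale differences $D_t := M_t - M_{t-1}$, which by hypothesis satisfy $|D_t| \leq c$ almost surely and are mean-zero conditional on $\mathscr{F}_{t-1}$. The quantity of interest telescopes as $\Ex{X \mid \mathscr{F}_n} - \Ex{X \mid \mathscr{F}_0} = \sum_{t=1}^n D_t$, so it suffices to bound $\Pr{\sum_{t=1}^n D_t \geq \Delta\big.}$. For any $\lambda > 0$, Markov's inequality applied to $e^{\lambda \sum_t D_t}$ yields
\[
\Pr{\sum_{t=1}^n D_t \geq \Delta\Big.} \leq e^{-\lambda \Delta}\, \Ex{\exp\!\left(\lambda \sum_{t=1}^n D_t\right)}.
\]

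The key step is to peel off the conditional moment generating functions one at a time using the tower property. Writing
\[
\Ex{\exp\!\left(\lambda \sum_{t=1}^n D_t\right)} = \Ex{\exp\!\left(\lambda \sum_{t=1}^{n-1} D_t\right)\cdot \Ex{e^{\lambda D_n} \,\middle|\, \mathscr{F}_{n-1}}},
\]
and iterating, the whole expression reduces to a product of $n$ conditional MGFs. Here I would invoke Hoeffding's lemma: any mean-zero random variable $Y$ with $|Y| \leq c$ satisfies $\Ex{e^{\lambda Y}} \leq \exp(\lambda^2 c^2 / 2)$. The standard proof of this auxiliary lemma uses convexity of $y \mapsto e^{\lambda y}$ to dominate $Y$ by a random variable supported on $\{-c, c\}$, followed by a calculus estimate on the resulting log-MGF. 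Applying this conditionally to each $D_t$ (which has conditional mean zero and conditional range in $[-c,c]$) gives $\Ex{e^{\lambda D_t} \mid \mathscr{F}_{t-1}} \leq e^{\lambda^2 c^2/2}$ almost surely.

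Combining the iterated tower identity with the pointwise MGF bound yields
\[
\Ex{\exp\!\left(\lambda \sum_{t=1}^n D_t\right)} \leq \exp\!\left(\frac{n \lambda^2 c^2}{2}\right),
\]
and therefore
\[
\Pr{\sum_{t=1}^n D_t \geq \Delta\Big.} \leq \exp\!\left(-\lambda \Delta + \frac{n \lambda^2 c^2}{2}\right).
\]
Optimizing over $\lambda > 0$ by choosing $\lambda = \Delta / (n c^2)$ produces the exponent $-\Delta^2/(2 n c^2)$, matching the stated bound. The main conceptual obstacle is Hoeffding's lemma itself, as everything else is bookkeeping; however, this lemma is entirely elementary once one uses the convexity argument above, and the rest of the proof is a routine application of the martingale Chernoff template.
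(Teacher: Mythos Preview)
The paper does not supply its own proof of this proposition; it is quoted verbatim as \cite[Corollary~3.9]{Han14} and used as a black box in the subsequent proof sketches of \Cref{lem:3LIN_exists_d} and \Cref{lem:expander_gadget}. Your argument is the standard and correct proof of the Azuma--Hoeffding inequality via the exponential moment method and Hoeffding's lemma, so there is nothing to compare against and no gap to point out.
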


\begin{proof}[Proof Sketch of \Cref{lem:3LIN_exists_d}]
Recall from \Cref{def:Dyes} that, in process of generating the instance $I\sim \Dno^{\mathsf{G},n,d}$, the total of $dn$ constraints are placed in $d$ rounds. In each round, for each $j\in [n]$, exactly one constraint involving $(j,1)$ is added. This naturally divides the stochastic process of generating $I$ into $dn$ time steps, where exactly one constraint is added to $I$ in each step. Let $\mathscr{F}_{t}$ be the $\sigma$-algebra corresponding to the information after $t$ time steps are completed, for each $t\in \{0,1,\dots,dn\}$. For each fixed assignment $\tau:[n]\times [3]\rightarrow \mathsf{G}$, we define the random variable $X_{\tau}:=\val_{I}(\tau)$ and apply \Cref{prop:Azuma} to $X_{\tau}$ and the filtration $(\mathscr{F}_{t})_{0\leq t\leq dn}$. Using a ``switching'' argument (e.g., see \cite[Lemma 2.19]{wormald1999models}), it is not hard to see that 
\[\big|\Ex{X_{\tau}|\mathscr{F}_{t}}-\Ex{X_{\tau}|\mathscr{F}_{t-1}}\big|\leq \frac{2}{dn},\]
for any $t\in [dn]$. Furthermore, we have $\Ex{X_{\tau}|\mathscr{F}_{dn}}=X_{\tau}=\val_{I}(\tau)$ and $\Ex{X|\mathscr{F}_{0}}=\Ex{X_{\tau}}=|\mathsf{G}|^{-1}$. Therefore, the conclusion of \Cref{prop:Azuma} implies
\[
\Pru{I}{\val_{I}(\tau)\geq \frac{1}{|\mathsf{G}|}+\varepsilon
}\leq \exp\left(-\frac{\varepsilon^{2}dn}{8}\right).
\]
A union bound over all $\tau$ then implies
\[
\Pru{I}{\val_{I}\geq \frac{1}{|\mathsf{G}|}+\varepsilon
}\leq |\mathsf{G}|^{3n}\exp\left(-\frac{\varepsilon^{2}dn}{8}\right)\leq o(1),
\]
if we set $d\geq  100\varepsilon^{-2}\ln |\mathsf{G}|$.
\end{proof}

We now turn to prove \Cref{lem:expander_gadget}. The idea of the proof is similar to the above proof of \Cref{lem:3LIN_exists_d}, but more preparation is needed before starting the proof. 

\begin{definition}
For any nonempty finite set $D$, we define a relation $\Perm_{D}:D^{D}\rightarrow \{0,1\}$ as follows. For any tuple $y=(y_{x})_{x\in D}\in D^{D}$, we let $\Perm_{D}(y)=1$ if and only if the map $x\mapsto y_{x}$ is a permutation of $D$.
\end{definition}

Note that $\Perm_{D}$ is the ``maximal'' sub-unique relation on $D$, as $R(y)\leq \Perm_{D}(y)$ for any sub-unique relation $y:D^{D}\rightarrow \{0,1\}$.

\begin{definition}\label{def:perm_value}
For any assignment $\tau:[n]\times D\rightarrow D$, we define its \emph{Perm-value} to be
\[
\textup{Perm-val}(\tau):=\Exu{\bfv=(v_{x})_{x\in D}}{\Perm_{D}\big((\tau(v_{x}))_{x\in D}\big)},
\]
where $\bfv=(v_{x})_{x\in D}$ is a random hyperedge chosen by sampling each $v_{x}$ from $[n]\times \{x\}$ independently and uniformly at random.
\end{definition}

\begin{lemma}\label{lem:perm-align}
For any assignment $\tau:[n]\times D\rightarrow D$ such that $\textup{Perm-val}(\tau)\geq1-\frac{1}{5|D|}$, there exists a permutation $\pi:D\rightarrow D$ such that
\begin{equation}\label{eq:perm-align}
\Pru{i\in [n]}{\tau\big((i,x)\big)= \pi(x)}\geq \textup{Perm-val}(\tau)
\end{equation}
holds for all $x\in D$.
\end{lemma}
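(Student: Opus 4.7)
The plan is to establish two structural facts about the row-stochastic matrix $M := (p_{x,a})_{x,a \in D}$, where $p_{x,a} := \Pru{i \in [n]}{\tau((i,x)) = a}$; writing $v := \textup{Perm-val}(\tau)$ and $k := |D|$, we have $v = \mathrm{perm}(M)$ by \Cref{def:perm_value}. The two facts are: \textbf{(i)} $\max_{a \in D} p_{x,a} \geq v$ for every $x \in D$; and \textbf{(ii)} every column sum $c_a := \sum_{x \in D} p_{x,a}$ satisfies $c_a < 2v$. Given both, pick for each $x$ an arbitrary $\pi(x) \in D$ with $p_{x,\pi(x)} \geq v$ (which exists by \textbf{(i)}); if two distinct $x_1, x_2$ satisfied $\pi(x_1) = \pi(x_2) = a$, then $c_a \geq p_{x_1,a} + p_{x_2,a} \geq 2v$ would contradict \textbf{(ii)}. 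Hence $\pi$ is injective, therefore a permutation of $D$, and meets the required bound $p_{x,\pi(x)} \geq v$ by construction.

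For \textbf{(i)}, I use the standard row expansion along row $x$: if $M^{(x,a)}$ denotes the $(k-1)\times(k-1)$ minor obtained by deleting row $x$ and column $a$, then $v = \sum_a p_{x,a}\,\mathrm{perm}(M^{(x,a)})$. Viewing the right-hand side as a $p_{x,\cdot}$-weighted average, it is bounded by $(\max_a p_{x,a}) \cdot \sum_a \mathrm{perm}(M^{(x,a)})$. The key observation is that $\sum_a \mathrm{perm}(M^{(x,a)})$ equals $\sum_{\sigma}\prod_{x'\neq x}p_{x',\sigma(x')}$ summed over injections $\sigma\colon D\setminus\{x\}\hookrightarrow D$, and this is at most $\prod_{x'\neq x}\sum_a p_{x',a}=1$ upon relaxing ``injection'' to ``arbitrary function''. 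Thus $v \leq \max_a p_{x,a}$.

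For \textbf{(ii)}, I apply the standard permanent-vs.-row-sum inequality to $M^T$ (using $\mathrm{perm}(M)=\mathrm{perm}(M^T)$) to get $v \leq \prod_a c_a$. Row-stochasticity of $M$ gives $\sum_a c_a = k$, so AM--GM yields $\prod_{a\neq a_0} c_a \leq \bigl((k-c_{a_0})/(k-1)\bigr)^{k-1}\leq e^{-(c_{a_0}-1)}$, using $(1-x)^{k-1}\leq e^{-(k-1)x}$. Combined with $v \leq c_{a_0}\cdot\prod_{a\neq a_0} c_a$, this gives $v \leq c_{a_0}\cdot e^{-(c_{a_0}-1)}$. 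The function $c\mapsto c\cdot e^{-(c-1)}$ is decreasing on $[1,\infty)$, so if $c_{a_0}\geq 2v > 1$ then $v \leq 2v\cdot e^{-(2v-1)}$, equivalently $e^{2v-1}\leq 2$. But the hypothesis $v \geq 1-1/(5k)$ forces $2v - 1 \geq 1 - 2/(5k) \geq 4/5 > \log 2$ for $k\geq 2$ (the case $k=1$ is trivial), so $e^{2v-1} > 2$: a contradiction. Hence $c_{a_0} < 2v$.

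The only delicate step is \textbf{(ii)}: one must combine the permanent-vs.-row-sum inequality applied to $M^T$ with a convexity/AM--GM bound on the column sums, and the specific constant $1/(5k)$ in the hypothesis is precisely what is needed so that $1-2/(5k) > \log 2$ holds uniformly in $k\geq 2$. Step \textbf{(i)} and the final matching argument are essentially routine permanent manipulations.
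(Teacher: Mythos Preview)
Your proof is correct. Part \textbf{(i)} is the same observation the paper makes, just phrased algebraically via the row expansion of the permanent rather than probabilistically (the paper conditions on the values $\tau(v_{x'})$ for $x'\neq x$ being distinct and notes that the remaining coordinate then hits the right value with probability at most $\max_a p_{x,a}$).

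Where your argument genuinely diverges is in establishing that $\pi$ is injective. The paper does not go through column sums at all: having fixed $\pi$ with $p_{x,\pi(x)}\geq v$ for every $x$, it simply observes that the event ``$\tau(v_x)=\pi(x)$ for all $x$'' has probability $\prod_x p_{x,\pi(x)}\geq v^{|D|}$, and if $\pi$ were not a permutation this event would contribute zero to $\textup{Perm-val}(\tau)$, forcing $v\leq 1-v^{|D|}$; one then checks that $(1-1/(5|D|))^{|D|}>1/(5|D|)$ to get the contradiction. Your route via $\mathrm{perm}(M^T)\leq\prod_a c_a$ together with AM--GM on the column sums is a legitimately different argument. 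It is a bit more machinery (you invoke the permanent--row-sum inequality twice and an exponential bound), but it has the pleasant feature of isolating a clean structural statement about the matrix (every column sum is below $2v$) that immediately forces injectivity, whereas the paper's argument is more ad hoc to the situation at hand.
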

\begin{proof}
We first show the existence of a map $\pi:D\to D$ satisfying \eqref{eq:perm-align} for all $x\in D$, and subsequently prove that the map $\pi$ thus obtained is a permutation.

For any $x,z\in D$, let \(
F(x,z)
\)
denote the fraction of indices $i\in [n]$ such that $\tau((i,x))=z$.  
Fix an element $t\in D$ and consider a hyperedge $\bfv=(v_x)_{x\in D}$. We decompose it into $v_t$ and $\bfv_{\setminus\{t\}}=(v_x)_{x\in D\setminus\{t\}}$.  
For the map $x\mapsto \tau(v_x)$ to be a permutation of $D$, two conditions must hold:
\begin{enumerate}[label=(\arabic*)]
    \item The values $\tau(v_x)$ for $x\in D\setminus\{t\}$ are pairwise distinct.
    \item The value $\tau(v_t)$ is the unique remaining element of $D$.
\end{enumerate}
With $\tau$ fixed and $\bfv$ chosen uniformly at random, conditioning on the requirement (1), the probability that (2) also holds is at most $\max_{z\in D} F(t,z)$. Consequently, the overall probability that $(\tau(v_x))_{x\in D}$ forms a permutation (i.e. $\textup{Perm-val}(\tau)$) is at most $\max_{z\in D} F(t,z)$. Define $\pi(t)$ as the element of $D$ attaining this maximum, i.e.
\[
F(t,\pi(t)) = \max_{z\in D} F(t,z).
\]
By the discussion above, for any $t\in D$ we have
\[
\Pru{i\in [n]}{\tau\big((i,t)\big)=\pi(t)}=\max_{z\in D}F(t,z)\geq \textup{Perm-val}(\tau).
\]

It remains to show that $\pi$ is a permutation. For a uniformly random hyperedge $\bfv$, we have
\[
\Pru{\bfv=(v_{x})_{x\in D}}{\tau(v_{x})=\pi(x),\quad\forall\, x\in D\big.}\geq \prod_{x\in D} F(x,\pi(x))\geq \textup{Perm-val}(\tau)^{|D|}\geq \left(1-\frac{1}{5|D|}\right)^{|D|}.
\]
If $\pi$ were not a permutation, then
\[
\textup{Perm-val}(\tau)=\Exu{\bfv=(v_{x})_{x\in D}}{\Perm_{D}\big((\tau(v_{x}))_{x\in D}\big)}\leq 1-\left(1-\frac{1}{5|D|}\right)^{|D|}< 1-\frac{1}{10|D|},
\]
contradicting the assumption. Hence, $\pi$ must be a permutation.
\end{proof}

We are now ready to prove \Cref{lem:expander_gadget}.

\begin{proof}[Proof of \Cref{lem:expander_gadget}]
We call an assignment $\tau:[n]\times D\rightarrow D$ ``bad'' (with respect to $H$) if $\val_{\cI[R,H]}(\tau)\geq 1-\varepsilon$ but 
\begin{equation}\label{eq:perm_assumption}
\Pru{(i,x)\in [n]\times D}{\tau\big((i,x)\big)\neq y_{x}}> 2\varepsilon
\end{equation}
for any tuple $y\in D^{D}$ with $R(y)=1$. For any fixed assignment $\tau:[n]\times D\rightarrow D$, we will analyze the probability that it is bad with respect to a random hypergraph $H\sim \cH_{D,n,\ell}$.

\textbf{Case 1: $\textnormal{Perm-val}(\tau)\geq 1-2\varepsilon$.} Since $2\varepsilon\leq \frac{1}{5|D|}$, we may apply \Cref{lem:perm-align} and obtain a permutation $\pi:D\rightarrow D$ such that 
\[
\Pru{i\in [n]}{\tau\big((i,x)\big)=\pi(x)}\geq 1-2\varepsilon
\]
holds for any $x\in D$. If $R\big((\pi(x))_{x\in D}\big)=1$, then this contradicts the assumption that \eqref{eq:perm_assumption} holds for all $y\in D^{D}$ with $R(y)=1$. Therefore, we must have $R\big((\pi(x))_{x\in D}\big)=0$.

Now if $\bfv=(v_{x})_{x\in D}$ is a random hyperedge chosen by sampling each $v_{x}$ from $[n]\times \{x\}$ independently and uniformly at random, then
\[
\Exu{\bfv=(v_{x})_{x\in D}}{R\big((\tau(v_{x}))_{x\in D}\big)}\leq 1-\Pru{\bfv=(v_{x})_{x\in D}}{\tau(v_{x})=\pi(x),\quad\forall\, x\in D\big.}\leq 1-(1-2\varepsilon)^{|D|}\leq 1-2\varepsilon.
\]
Using the martingale argument in the proof of \Cref{lem:3LIN_exists_d}, it follows that
\[\Pru{H}{\val_{\cI[R,H]}(\tau)\geq 1-\varepsilon}\leq \exp\left(-\frac{\varepsilon^{2}\ell n}{8}\right).\]
Therefore, the probability that $\tau$ is bad with respect to a random $H$ is at most $\exp(-\varepsilon^{2}\ell n/8)$.

\textbf{Case 2: $\textnormal{Perm-val}(\tau)<1-2\varepsilon$.} Since $R(y)\leq \Perm_{D}(y)$ for all $y\in D^{D}$, by \Cref{def:perm_value} we have
\[
\Pru{\bfv=(v_{x})_{x\in D}}{R\big(\tau(v_{x})_{x\in D}\big)}\leq \Pru{\bfv=(v_{x})_{x\in D}}{\Perm_{D}\big(\tau(v_{x})_{x\in D}\big)}=\textup{Perm-val}(\tau)<1-2\varepsilon.
\]
As in case 1, it follows again that $\tau$ is bad with respect to $H$ with probability at most $\exp(-\varepsilon^{2}\ell n/8)$.

In conclusion, any fixed assignment $\tau:[n]\times D\rightarrow D$ is bad with respect to $H$ with probability at most $\exp(-\varepsilon^{2}\ell n/8)$. By a union bound over all assignments $\tau$, we have
\[
\Pru{H}{\text{Some assignment }\tau\text{ is bad with respect to }H\big.}\leq |D|^{n|D|}\exp\left(-\frac{\varepsilon^{2}\ell n}{8}\right)=\exp(-\Omega(n)),
\]
as long as $\ell\geq 100\varepsilon^{-2}|D|\ln |D|$.
\end{proof}
\section{Varieties Admitting Affine-type}\label{sec:varieties}

In this section, we briefly discuss the background of \Cref{thm:main_algebra}.

\subsection{Varieties}

Recall from \Cref{def:algebraic_homomorphism} that the notion of an algebraic homomorphism is well-defined only when the operations of the source and target structures are placed in a bijective correspondence, and each pair of corresponding operations has the same arity. Such pairs of algebraic structures are said to share the same \emph{signature}.

In universal algebra, it is often convenient to study entire classes of algebraic structures that share a common signature. For such a class, one can represent all operation sets uniformly by a single (possibly infinite) collection of \emph{operation symbols}. Each operation symbol has a fixed finite arity but may represent different concrete operations in different structures. Even so, it is meaningful to consider \emph{identities} among these operation symbols on a purely syntactic level. For instance, if $\cS$ is a set of operation symbols and $s_{1}, s_{2} \in \cS$, then
\[
s_{1}(x_{1}, s_{2}(x_{2}, x_{3})) = s_{2}(s_{1}(x_{1}, x_{2}), x_{3}) \quad \forall\, x_{1}, x_{2}, x_{3}
\]
is an example of an identity, which may or may not hold in a particular algebraic structure interpreting the symbols in~$\cS$.

\begin{definition}
Let $\cS$ be a (possibly infinite) set of operation symbols, each with a finite arity.  
A \emph{term} is any finite expression built from variables $x_{1}, x_{2}, \dots$ and the operation symbols in~$\cS$.  
If $t_{1}$ and $t_{2}$ are terms, a syntactic statement of the form ``$t_{1} = t_{2}$ for all $x_{1}, x_{2}, \dots$’’ is called an \emph{identity}.  
Given a (possibly infinite) set of identities $\cT$, the \emph{variety} $\cV(\cS; \cT)$ is the class of all algebraic structures interpreting $\cS$ such that every identity in~$\cT$ holds.
\end{definition}

\begin{definition}
For any algebraic structure $\bA$, we define the variety generated by $\bA$ to be $\cV(\cS;\cT)$, where $\cS$ is the set of operation symbols interpreted by $\bA$ and $\cT$ is the set of identities in $\cS$ that hold for $\bA$. The variety generated by $\bA$ is also denoted by $\cV(\bA)$.
\end{definition}

For example, it is not hard to see that any subalgebra or homomorphic image (defined in \Cref{def:subalgebra}) of $\bA$ is contained in $\cV(\bA)$. 

\subsection{Types of Strictly Simple Algebras}

Following conventions in universal algebra, we refer to an algebraic structure simply as an ``algebra.'' An algebra of $k$ elements is said to be \emph{simple} if $k\geq 2$ and every homomorphic image of it has either 1 element or $k$ elements. It is further said to be \emph{strictly simple} if every subalgebra of it has either 1 element or $k$ elements. Tame Congruence Theory, developed by Hobby and Mckenzie \cite{hobby1988structure}, classified strictly simple algebras into five types: (1) the unary type, (2) the affine type, (3) the Boolean type, (4) the lattice type and (5) the semilattice type. Later, Szendrei \cite{szendrei1992survey} provided a characterization of strictly simple \emph{idempotent} algebras of all five types. We record the following immediate corollary of \cite[Theorem 6.1]{szendrei1992survey}; readers interested in further details are referred to \cite{hobby1988structure,szendrei1992survey}.

\begin{proposition}[\cite{szendrei1992survey}]
If $\bA$ is a strictly simple idempotent algebra of unary type or affine type, then there exists an Abelian group structure on the underlying set of $\bA$ such that for any operation $f$ in $\bA$ of arity $k$, the equation 
\[
f(x_{1},\dots,x_{k})-f(y_{1},\dots,y_{k})+f(z_{1},\dots,z_{k})=f(x_{1}-y_{1}+z_{1},\dots,x_{k}-y_{k}+z_{k})
\]
holds for all $x_{1},\dots,x_{k},y_{1},\dots,y_{k},z_{1},\dots,z_{k}\in \bA$.
\end{proposition}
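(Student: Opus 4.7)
\medskip

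\noindent\textbf{Proof Proposal.} The plan is to invoke Szendrei's classification theorem~\cite[Theorem 6.1]{szendrei1992survey} to pin down the explicit form of the operations of $\bA$ in each of the two types, and then verify the Maltsev-commuting identity by direct computation. Throughout, the key observation is that the identity
\[
f(x_{1},\dots,x_{k})-f(y_{1},\dots,y_{k})+f(z_{1},\dots,z_{k})=f(x_{1}-y_{1}+z_{1},\dots,x_{k}-y_{k}+z_{k})
\]
says exactly that $f:\bA^{k}\to \bA$ is a homomorphism with respect to the Maltsev operation $(x,y,z)\mapsto x-y+z$ coming from the chosen Abelian group structure. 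Thus, once the form of the operations is made explicit, the identity reduces to a routine check.

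First I would handle the unary-type case. Szendrei's characterization of strictly simple idempotent algebras of unary type implies that every term operation of $\bA$ is a projection, i.e.\ of the form $f(x_{1},\dots,x_{k})=x_{i}$ for some $i\in[k]$. Endow the underlying set of $\bA$ with an arbitrary Abelian group structure (for example, identify $\bA$ with $\bZ/|\bA|\bZ$). Then both sides of the desired identity collapse to $x_{i}-y_{i}+z_{i}$ and the statement is immediate.

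Next I would handle the affine-type case, which is where the real content lies. Szendrei's theorem supplies an Abelian group structure $(\bA,+)$ together with a unital subring $R$ of the endomorphism ring $\mathrm{End}(\bA,+)$ such that every basic (and hence every term) operation of $\bA$ has the form
\[
f(x_{1},\dots,x_{k})=\sum_{i=1}^{k}r_{i}x_{i}+c,
\]
for some $r_{1},\dots,r_{k}\in R$ and $c\in\bA$. The idempotency assumption $f(a,\dots,a)=a$ for all $a\in\bA$, applied with $a=0$ and then with arbitrary $a$, forces $c=0$ and $\sum_{i}r_{i}=\mathrm{id}_{\bA}$ in $R$. With this normal form in hand, the identity follows from linearity:
\[
f(x)-f(y)+f(z)=\sum_{i}r_{i}(x_{i}-y_{i}+z_{i})=f(x-y+z),
\]
since each $r_{i}$ is a group endomorphism of $(\bA,+)$.

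The main obstacle is not the computation but the accurate translation of Szendrei's classification from the vocabulary of tame congruence theory into the concrete ``module over $R$'' description used above. Once that translation is secured, both cases are short, and the two Abelian group structures (the arbitrary one in the unary case and the $R$-module one in the affine case) jointly give the group structure asserted in the statement.
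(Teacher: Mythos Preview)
Your proposal is correct and aligns with the paper's treatment: the paper does not give a self-contained proof but simply records the proposition as an ``immediate corollary of \cite[Theorem~6.1]{szendrei1992survey},'' deferring all details to Szendrei's classification. Your sketch just unpacks that citation---splitting into the unary-type case (projections) and the affine-type case (idempotent module operations)---which is exactly the derivation the paper leaves implicit.
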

Using the idempotency condition, it is easy to deduce the following corollary.

\begin{corollary}\label{cor:Szendrei}
Any strictly simple idempotent algebra of unary type or affine type has an Abelian group $\mathsf{G}$ as its underlying set, and all of its operations belong to $\Pol(\TLin_{\mathsf{G}})$.
\end{corollary}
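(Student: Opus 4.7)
The group $\mathsf{G}$ is handed to us directly by the preceding proposition, so the only real work is to verify that every operation $f$ of the algebra lies in $\Pol(\TLin_{\mathsf{G}})$, i.e.\ preserves each relation $S_{b}^{\mathsf{G}}$. My plan is to first distill from the Mal'tsev-like identity the fact that $f$ must be a group homomorphism $\mathsf{G}^{k}\to\mathsf{G}$, and then to combine this with idempotency to get the polymorphism property almost for free.

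The first step is to substitute $y_{i}=0_{\mathsf{G}}$ (for all $i$) into the identity
\[
f(x_{1},\dots,x_{k})-f(y_{1},\dots,y_{k})+f(z_{1},\dots,z_{k})=f(x_{1}-y_{1}+z_{1},\dots,x_{k}-y_{k}+z_{k}).
\]
Idempotency forces $f(0,\dots,0)=0$, so the identity collapses to
\[
f(x_{1},\dots,x_{k})+f(z_{1},\dots,z_{k})=f(x_{1}+z_{1},\dots,x_{k}+z_{k}),
\]
showing that $f$ is a group homomorphism from $\mathsf{G}^{k}$ (viewed as a direct product) to $\mathsf{G}$.

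The second step checks the polymorphism condition directly. Take a matrix $(x_{\ell,j})\in \mathsf{G}^{3\times k}$ with $S_{b}^{\mathsf{G}}(x_{1,j},x_{2,j},x_{3,j})=1$ for every $j\in[k]$, i.e.\ $x_{1,j}+x_{2,j}+x_{3,j}=b$ for every $j$. Writing $\mathbf{x}_{\ell}=(x_{\ell,1},\dots,x_{\ell,k})\in\mathsf{G}^{k}$, the column assumption says $\mathbf{x}_{1}+\mathbf{x}_{2}+\mathbf{x}_{3}=(b,\dots,b)$. Applying the homomorphism property from the first step and then idempotency,
\[
f(\mathbf{x}_{1})+f(\mathbf{x}_{2})+f(\mathbf{x}_{3})=f(\mathbf{x}_{1}+\mathbf{x}_{2}+\mathbf{x}_{3})=f(b,\dots,b)=b,
\]
so the triple $(f(\mathbf{x}_{1}),f(\mathbf{x}_{2}),f(\mathbf{x}_{3}))$ also satisfies $S_{b}^{\mathsf{G}}$. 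As this holds for every $b\in\mathsf{G}$ and every $f$ in the algebra, we conclude that all operations of the algebra belong to $\Pol(\TLin_{\mathsf{G}})$.

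There is essentially no ``hard part'' here beyond spotting the substitution $y_{i}=0$, which transforms the Mal'tsev-like identity into the additive-homomorphism property; once this observation is made, the verification of preservation of $S_{b}^{\mathsf{G}}$ is a one-line application of idempotency. The result is therefore a short and direct consequence of the preceding proposition.
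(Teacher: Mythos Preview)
Your proof is correct and is precisely the argument the paper has in mind: the paper omits all details and simply says ``Using the idempotency condition, it is easy to deduce the following corollary,'' and your substitution $y_{i}=0$ together with $f(0,\dots,0)=0$ is the natural way to unpack that remark.
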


In general, a finite algebra (not necessarily strictly simple) is associated with a subset of the five possible types \{unary, affine, Boolean, lattice, semilattice\}. A variety $\cV$ is said to \emph{admit} a type if that type appears in the type set of at least one algebra belonging to $\cV$. The following theorem by Valeriote \cite{valeriote2009subalgebra} is a crucial component of \Cref{thm:main_algebra}.

\begin{theorem}[{\cite[Proposition~3.1]{valeriote2009subalgebra}}]\label{thm:Valeriote}
Let $\bA$ be a finite idempotent algebra. If the variety $\cV(\bA)$ admits either the unary type or the affine type, then some homomorphic image of a subalgebra of $\bA$ is a strictly simple algebra of unary or affine type.
\end{theorem}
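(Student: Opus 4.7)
The plan is to combine Birkhoff's HSP theorem with the localization machinery of tame congruence theory, and then to exploit the idempotency of $\bA$ to descend back to a section of $\bA$ itself.

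First, unpacking the hypothesis that $\cV(\bA)$ admits type $1$ (unary) or type $2$ (affine), there exists a finite algebra $\bB\in\cV(\bA)$ together with a prime quotient $\alpha\prec\beta$ in $\mathrm{Con}(\bB)$ of that type. By passing to $\bB/\alpha$ and then to a suitable subdirect factor, I may assume $\bB$ is subdirectly irreducible with monolith $\mu$ whose type is unary or affine. Birkhoff's HSP theorem then realizes $\bB$ as a homomorphic image of a subalgebra $\bA'\le\bA^n$; idempotency is inherited throughout, since every operation of $\bA$ is idempotent and the diagonal of $\bA^n$ is a subalgebra isomorphic to $\bA$.

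Second, I would extract a strictly simple algebra of the correct type using tame congruence theory. Take a $(0_{\bB},\mu)$-minimal set $U\subseteq B$. The induced algebra $\bB|_U$, whose operations are the polynomial operations of $\bB$ preserving $U$, has $\mu|_U$ as a congruence, and by definition of a minimal set the quotient $\bB|_U/\mu|_U$ is strictly simple. Its type equals the type of $(0_{\bB},\mu)$, so it is strictly simple of unary or affine type, as desired --- but so far only as a ``polynomial'' section of $\bB$, which is itself merely a section of $\bA^n$.

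The main obstacle, and where Valeriote's argument does the real work, is to replace this section of $\bA^n$ by a genuine section --- a homomorphic image of a subalgebra in the original signature --- of $\bA$ itself. Idempotency is essential here: for idempotent algebras, coordinate projections of $\bA^n$ commute with every term in a controlled way, so one can hope to lift $U$ back through the HSP data, project onto a carefully chosen coordinate, and push the congruence $\mu|_U$ through the projection to obtain a congruence on a subalgebra of $\bA$ whose quotient is isomorphic to the desired strictly simple algebra. Arranging that this descent preserves the type of the minimal quotient --- so that the final section of $\bA$ is strictly simple and of the same unary or affine type --- requires a case analysis guided by the Hobby--McKenzie classification of strictly simple algebras (and in particular \Cref{cor:Szendrei} for what these quotients must look like). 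This final descent step is the most delicate part of the argument and is the reason the theorem is nontrivial even though the first two steps are essentially routine.
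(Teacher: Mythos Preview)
The paper does not prove this theorem at all: it is stated with a citation to \cite{valeriote2009subalgebra} and then immediately used as a black box to assemble \Cref{thm:main_algebra}. There is therefore no ``paper's own proof'' to compare your proposal against.

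As for the proposal itself, you have correctly identified the overall shape of the argument --- use HSP to realize a witnessing algebra as a section of a power of $\bA$, localize via tame congruence theory to extract a strictly simple piece of the right type, then descend to $\bA$ using idempotency --- and you are candid that the descent is where the content lies. Two remarks. First, a technical imprecision: the strictly simple object in tame congruence theory is attached to a \emph{trace} $N$ of the $(0_{\bB},\mu)$-minimal set $U$, not to $U$ itself; one takes the induced polynomial algebra $\bB|_{N}$ modulo $\alpha|_{N}$, and it is this (not $\bB|_{U}/\mu|_{U}$) that is a minimal algebra of the given type. Second, and more seriously, that induced algebra lives in the \emph{polynomial} signature of $\bB$, not the term signature of $\bA$, so even after projecting to a coordinate of $\bA^{n}$ you must still argue that the resulting section of $\bA$, with only the original term operations, remains strictly simple of unary or affine type. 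Your sketch gestures at this (``replace this section of $\bA^n$ by a genuine section \dots\ of $\bA$ itself'') but does not supply a mechanism; idempotency alone does not obviously bridge the gap between polynomial and term clones. This is precisely the step Valeriote's argument handles, and your proposal, as written, stops short of it.
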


Note that homomorphic images and subalgebras of an idempotent algebra are also automatically idempotent.

The final ingredient in \Cref{thm:main_algebra} is the following characterization of bounded-width CSP templates in terms of the variety generated by its polymorphism algebra. It was conjectured by \cite{larose2007bounded} and proved by \cite{barto2014constraint}.

\begin{theorem}[{\cite[Conjecture 4.3]{barto2014constraint}}]\label{thm:BK14}
Let $(D,\Gamma)$ be a relational structure that is a core. Then $(D,\Gamma)$ has unbounded width if and only if the variety generated by the algebra $(D,\Pol(\Gamma))$ admits the unary type or the affine type.
\end{theorem}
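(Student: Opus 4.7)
The plan is to split the biconditional into its two directions. Writing the contrapositive of one side, the statement becomes: $(D,\Gamma)$ has bounded width if and only if $\cV\big(D,\Pol(\Gamma)\big)$ omits both the unary and the affine type. The direction ``admits unary or affine type $\Rightarrow$ unbounded width'' is the easier one; the direction ``omits both types $\Rightarrow$ bounded width'' is the main content of the Barto--Kozik theorem and will be the hard part.

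For the easier direction, I would assume $\cV(D,\Pol(\Gamma))$ admits one of the two types and deduce unbounded width by exhibiting instances on which the width-$(k,\ell)$ algorithm errs. After passing to $(D,\Gamma\cup\Const_D)$ --- whose polymorphism algebra is idempotent and whose variety still admits the offending type --- \Cref{thm:Valeriote} yields a strictly simple idempotent quotient $\bB$ of a subalgebra of $\big(D,\Pol(\Gamma\cup\Const_D)\big)$ of unary or affine type. Then \Cref{cor:Szendrei} endows $\bB$ with an Abelian group structure $\mathsf{G}$ such that all its operations lie in $\Pol(\TLin_{\mathsf{G}})$. The same chain of reasoning as in the proof of \Cref{lem:main_algebra} (Steps~1--3) shows that every relation in $\TLin_{\mathsf{G}}$ is simulated by $\Gamma\cup\Const_D$ in the sense of \Cref{def:relational_clone}. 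It is classical that linear-equation CSPs over any nontrivial Abelian group are not solved by the width-$(k,\ell)$ algorithm for any fixed $(k,\ell)$: one builds a system on an expanding 3-uniform hypergraph whose equations are globally inconsistent (say, by a global parity/sum obstruction) but locally satisfiable on every $\ell$-subset. Pulling such instances back through the gadget simulation yields $\CSP(\Gamma\cup\Const_D)$ instances, and then $\CSP(\Gamma)$ instances (the constant constraints can be absorbed by designated variables, using that $(D,\Gamma)$ is a core), witnessing the failure of the width-$(k,\ell)$ algorithm on $\CSP(\Gamma)$.

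For the hard direction, the plan is to follow Barto and Kozik~\cite{barto2014constraint}. Omitting both the unary and the affine type forces $\Pol(\Gamma)$ to contain strong \emph{Taylor operations} --- in particular, \emph{weak near-unanimity} terms of all sufficiently large arities. One then shows that a suitably chosen $(k,\ell)$-consistency algorithm correctly decides satisfiability of $\CSP(\Gamma)$. The central technical device is the theory of \emph{absorbing subuniverses}: given any non-empty $(k,\ell)$-consistent instance, one either finds a proper absorbing subuniverse on which one can recurse while preserving consistency and satisfiability, or one lands in a structural situation tight enough to read off a satisfying assignment directly. Ruling out the affine type is precisely what prevents ``linear'' obstructions to the absorption step, and ruling out the unary type prevents collapse into essentially unary quotients that would hide genuine inconsistencies.

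The decisive difficulty lies entirely in the hard direction, which constitutes the main theorem of \cite{barto2014constraint} and occupies a lengthy paper employing substantial universal-algebraic machinery (absorption theory, centrality, Jónsson chains, and delicate inductive arguments on the structure of finite algebras). The easy direction, by contrast, is essentially a routine elaboration of the gadget-simulation technique already developed for \Cref{lem:main_algebra}, combined with the standard observation that linear equation CSPs fool local consistency. Within the scope of the present paper I would accordingly invoke \Cref{thm:BK14} as a black box, exactly as has been done throughout the arguments above.
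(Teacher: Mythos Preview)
The paper does not prove \Cref{thm:BK14}; it is stated as a known result, attributed to Larose--Z\'adori as a conjecture and to Barto--Kozik \cite{barto2014constraint} for the proof, and is then used as a black box (together with \Cref{cor:Szendrei} and \Cref{thm:Valeriote}) to deduce \Cref{thm:main_algebra}. Your proposal ultimately arrives at the same stance---invoke the result without proof---so in that sense it matches the paper's treatment exactly.

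Your sketch of the two directions is broadly accurate as background, but note two places where it would need more care if you were actually carrying it out. First, the claim that passing from $(D,\Gamma)$ to $(D,\Gamma\cup\Const_D)$ preserves the admitted types of the generated variety is true for cores but is itself a nontrivial fact (it rests on the relationship between an algebra and its idempotent reduct when the unary part is a group of automorphisms). Second, your ``easy direction'' recycles the mechanism behind \Cref{lem:main_algebra}; this is not circular, since you invoke \Cref{thm:Valeriote} and \Cref{cor:Szendrei} directly rather than \Cref{thm:main_algebra}, but the final step of transporting the local-consistency failure from $\CSP(\Gamma\cup\Const_D)$ back to $\CSP(\Gamma)$ requires the (standard but not one-line) argument that adding constants to a core does not change bounded-width status. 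Neither of these affects your bottom line, which coincides with the paper's: cite the theorem and move on.
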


It is then clear that \Cref{thm:main_algebra} follows from the combination of \Cref{cor:Szendrei} and \Cref{thm:Valeriote,thm:BK14}.

\end{document}